\newcommand{\para}[1]{\smallskip\noindent\emph{#1}.\xspace}
\theoremstyle{definition}
\theoremstyle{definition}
\newtheorem{prop}{Proposition}
\newtheorem{defi}{Definition}
\newtheorem{lem}{Lemma}
\newtheorem{corol}{Corollary}
\newtheorem*{rmk}{Remark}
\newtheorem*{addc}{Additional Constraints}
\newcommand{\fnsz}{\footnotesize}
\newcommand{\ems}{[\;]}
\newcommand{\eft}{(\;)}
\newcommand{\rew}{\rightarrow}
\newcommand{\al}{\alpha}
\newcommand{\gam}{\gamma}
\newcommand{\Gam}{\Gamma}
\newcommand{\sig}{\sigma}
\newcommand{\epsi}{\varepsilon}
\newcommand{\fom}{f^{\omega}}
\newcommand{\arewa}{[\alpha]\rew \alpha}
\newcommand{\erewa}{[\;]\rew \alpha}
\newcommand{\ovl}[1]{\overline{#1}}
\newcommand{\rstr}[1]{|_{#1}}
\newcommand{\nr}{\tt nr}
\newcommand{\arob}{\symbol{64}}
\newcommand{\TermV}{\mathscr{V}}
\newcommand{\cu}{ {\tt Y}}
\newcommand{\ttS}{ \tt S }
\newcommand{\bbN}{\mathbb{N}}
\newcommand{\scrM}{\mathscr{M}}
\newcommand{\scrMo}{\mathscr{M}_0}
\newcommand{\scrV}{\mathscr{V}}
\DeclareMathOperator{\supp}{\tt supp}
\DeclareMathOperator{\Types}{\tt Types}
\DeclareMathOperator{\FTypes}{\tt FTypes}
\DeclareMathOperator{\Deriv}{\tt Deriv}
\newcommand{\EFO}{\tt EFO }
\DeclareMathOperator{\Rep}{\tt Rep}
\DeclareMathOperator{\leqfty}{\leqslant_\infty}
\DeclareMathOperator{\supf}{\,\!^f\!}
\DeclareMathOperator{\supo}{\,\!^0\!}
\DeclareMathOperator{\init}{\tt init}
\DeclareMathOperator{\dom}{\tt dom}
\DeclareMathOperator{\Ax}{\tt Ax}
\DeclareMathOperator{\Hd}{\tt Hd}
\DeclareMathOperator{\Tl}{\tt Tl}
\DeclareMathOperator{\Rt}{\tt Rt}
\DeclareMathOperator{\Res}{\tt Res}
\DeclareMathOperator{\codom}{\tt codom}
\DeclareMathOperator{\depth}{\tt d}
\DeclareMathOperator{\cd}{\tt cd}
\DeclareMathOperator{\spo}{\tt sp}
\DeclareMathOperator{\sip}{\tt sip}
\DeclareMathOperator{\lcp}{\tt lcp}
\DeclareMathOperator{\rdeg}{\tt rdeg}
\DeclareMathOperator{\bip}{{\tt b}}
\DeclareMathOperator{\bisupp}{\tt bisupp}
\DeclareMathOperator{\lra}{\leftrightarrow}
\DeclareMathOperator{\Approx}{\tt Approx}
\DeclareMathOperator{\Reach}{\tt Reach}
\DeclareMathOperator{\tr}{\tt tr}
\DeclareMathOperator{\AxTr}{\tt AxTr}
\DeclareMathOperator{\OutTr}{\tt OutTr}
\DeclareMathOperator{\RedTr}{\tt RedTr}
\DeclareMathOperator{\Arg}{\tt Arg}
\DeclareMathOperator{\pos}{\tt pos}
\newcommand{\ax}{\tt ax}
\newcommand{\abs}{\tt abs}
\newcommand{\app}{\tt app}
\DeclareMathOperator{\Rft}{R}
\DeclareMathOperator{\eqm}{\stackrel{\mathscr{M}}{\equiv}}
\newcommand{\rewb}[1]{\stackrel{ #1}{\rightarrow}}
\newcommand{\code}[1]{\lfloor #1 \rfloor}
\newcommand{\red}[1]{\textcolor{red}{#1}}
\newcommand{\blue}[1]{\textcolor{blue}{#1}}
\newcommand{\ad}[1]{\text{ad}(#1)}
\newcommand{\up}{\tt up}
\newcommand{\topb}{\tt top}
\newcommand{\trck}[1]{~ \red{\text{(tr. }#1\text{)}}}
\newcommand{\posPr}[1]{~ \red{\text{(at\,}#1\text{)}}}
\newcommand{\fP}{\supf P}
\title{Infinitary Intersection Types as Sequences: a New Answer to Klop's Question}
\author{Pierre Vial}
\affil{IRIF, Université Paris-Diderot 
\texttt{pvial@pps.univ-paris-diderot.fr}}
\authorrunning{Pierre~Vial}
\begin{document}

\maketitle


\begin{abstract}
 We provide a type-theoretical characterization of weakly-normalizing 
terms in an infinitary lambda-calculus. We adapt for this purpose the 
standard  quantitative (with non-idempotent intersections) type assignment 
system of the lambda-calculus to our infinite calculus.

Our work provides a new answer to Klop's HHN-problem, namely, finding 
out if there is a type system characterizing the hereditary 
head-normalizing (HHN) lambda-terms. Tatsuta showed that HHN could not be 
characterized by a finite type system. We prove that an infinitary type 
system endowed with a validity condition called approximability can 
achieve it. 


\end{abstract}


\section{Introduction}

The \textbf{head-normalizing (HN) terms} can be characterized by
various \textit{intersection} type systems. Recall  that  a term is HN
if it can be reduced to a \textbf{head-normal form (HNF)},
\textit{i.e.} a term $t$ of the form $\lambda x_1\ldots
x_p.(x\,t_1)\ldots t_q\ (p \geq0, q \geq 0)$, where $x$ is referred as the
\textbf{head-variable} of $t$ and the terms $t_1,\ldots,\,t_q$ as the
\textbf{arguments} of the head-variable $x$.

In general, intersection type frameworks, introduced by Coppo and
Dezani \cite{DBLP:journals/ndjfl/CoppoD80}, allow to characterize many
classes of normalizing terms, such as the \textbf{weakly
normalizing (WN) terms} (see \cite{DBLP:journals/tcs/Bakel95} for an
extensive survey).  A term is WN if it can be reduced to a
\textbf{normal form (NF)}, \textit{i.e.} a term without
\textit{redexes}. \textit{Inductively}, a term is WN if it is HN and all the
arguments of its head-variable are WN (it is meant that the base 
cases of this induction are the terms whose HNF is $\lambda
x_1\ldots x_p.x$).

According to Tatsuta \cite{DBLP:conf/flops/Tatsuta08}, the question of
finding out a type system characterizing \textbf{hereditary
 head-normalizing (HHN) terms} was raised by Klop in a private
exchange with Dezani in the late 90s. The definition of  HHN term
is given by the \textit{coinductive} version of the above inductive
definition: \textit{coinductively}, a term is HHN if it is HN and all
the arguments of its head variable are themselves HHN. It is
equivalent to say that the B\"ohm tree of the term does not hold any
occurrence of $\bot$. Tatsuta focused his study on \textit{finitary}
type systems and showed Klop's problem's answer was negative for them,
by noticing that the set of HHN terms was not recursively enumerable.

Parallelly, the B\"ohm trees without $\bot$ can be seen as the set of
normal forms of an infinitary calculus, referred as $\Lambda^{001}$ in
\cite{DBLP:journals/tcs/KennawayKSV97}, which has been reformulated
very elegantly in  coinductive frameworks~\cite{DBLP:conf/rta/EndrullisHHP015,DBLP:conf/rta/Czajka14}. In this calculus, the HHN terms
correspond to the infinitary variant of the WN terms. An infinite term
is WN if it can be reduced to a NF by at least one \textbf{strongly
  converging reduction sequence (s.c.r.s.)}, which constitute a
special kind of reduction sequence of (possibly) infinite length,
regarded as \textit{sound}. This motivates to check whether an
\textit{infinitary} type system is able to characterize HHN terms in
the infinite calculus $\Lambda^{001}$.

We use a quantitative, resource-aware type system to help us achieve 
this goal. In those type systems, typability is known to imply 
normalizability by a very simple (variant of the same) argument. Namely, 
reducing a typed redex inside a derivation decrease some 
non-negative integer measure, which entails that the reduction must stop at 
some point (see for instance \cite{DBLP:conf/ifipTCS/BucciarelliKR14} or 
Lemma \ref{lemTypHN}).  This is unlikely to be adapted in an infinitary 
framework. However, quantitative type derivations do have very simple 
and readable combinatorial features that will turn out to be useful to
build an infinitary type system. In particular, reduction inside a 
derivation almost comes down to \textit{moving} parts of the original 
derivation, without adding new rules (a figure is given in \S~
\ref{subsecResBip}).

\subsubsection*{Contributions}

We define an infinitary quantitative type system, inspired by
the finitary de Carvalho's system $\mathscr{M}_0$ \cite{decarvalho07phd}. 
However, we show that a direct coinductive adaptation of system $\mathscr{M}_0$
cannot work for two reasons (\textit{c.f.} Section \ref{secExamples}):
\begin{itemize}
\item It would lead to the possibility of typing some non-HN terms, like
$\Delta\Delta$. That is why a validity criterion is needed to
discard irrelevant derivations, as in other infinitary frameworks
\cite{santocanale01brics}.
\item This validity criterion relies on the idea of \textbf{approximability}.
It can be seen that multisets are not fit to formally express such a notion, which motivates the need for rigid constructions: multisets of types
are replaced (coinductively) by families of types indexed by integers
called \textbf{tracks}.
\end{itemize}
Tracks constitute the main feature of the type system presented here. 
They act as \textit{identifiers} and allow us to bring out a
combinatorics that existed implicitly -- but could not be formulated 
-- in regular quantitative type systems, where multiset constructions 
made it impossible to distinguish two copies of the same type. For instance,
we will be able to trace any type through the rules of a whole typing
derivation. Our framework is deterministic, \textit{e.g.} there is a \textit{unique} way to produce a derivation from another one while reducing a redex.

\subsubsection*{Outline}

We informally discuss the necessity of the notion of approximability and rigid
constructions in \S~\ref{secExamples}. In \S~\ref{secRigid},
we formally define our terms, type system and tracks. In
\S~\ref{secDynamics}, we define reduction and expansion of a
typing derivation, as well as \textit{residuals}. In
\S~\ref{secApproxUF}, we formulate the approximability condition
and the WN characterization criterion (called \textit{unforgetfulness} here), and next, we prove an infinitary subject reduction property. In
\S~\ref{secNF}, we describe all the sound derivations typing a
normal form and  prove an infinitary subject expansion
property. It concludes the proof of our type-theoretic characterization of WN.


\section{Informal Discussion}

\label{secExamples}

In this section, we \textbf{informally} introduce, through a few
examples, the key concepts of our work, namely \textit{rigidity} and
\textit{approximability}.

\subsection{The Finitary Type System $\scrMo$ and Unforgetfulness}
\label{subsecMo}

Let us first recall the typing system $\scrMo$ with
\textit{non-idempotent} intersection types ~\cite{decarvalho07phd},
given by the following \textit{inductive} grammar $\sigma,\,\tau :: = \alpha~
|~ [\sigma_i]_{i\in I} \rightarrow \tau $, where the constructor
$\ems$ is used for finite multisets, and the type variable $\alpha$
ranges over a countable set $\mathscr{X}$ of type variables.  We write
$[\sigma]_n$ to denote the multiset containing $\sigma$ with
multiplicity $n$. The multiset $[\sigma_i]_{i\in I}$ is meant to be the
intersection of the types $\sigma_i$, taking into account their
\textit{multiplicity}. In idempotent intersection type systems, the
type intersections $A\wedge B\wedge A$ and $A\wedge B$ are \textit{de facto} equal, whereas in $\scrMo$, the multiset types $[\sigma,\,\tau,\,\sigma]$ and $[\sigma,\,\tau]$ are not. No weakening is allowed either, \textit{e.g.} $\lambda x.x$ can be typed with $[\tau]\rew \tau$, but \textit{not} with $[\tau,\sigma]\rew \tau$.

In system $\scrMo$, a \textit{judgment} is a triple $\Gam \vdash t:\,\sigma$, where $\Gam$ is a context, \textit{i.e.} a function from the set $\scrV$ of term variables to the sets of multiset types $[\sigma_i]_{i\in I}$, $t$ is a term and $\sigma$ is a type. The multiset union + is extended point-wise on contexts. Let us consider the rules below:

\begin{center}
\begin{prooftree}
\Infer{0}[ax]{x:\, [\tau] \vdash x:\,\tau }
\end{prooftree}
 \hspace{3cm}
\begin{prooftree}
\Hypo{\Gamma,\,x:\,[\sigma_i]_{i\in I} \vdash t:\, \tau }
\Infer{1}[abs]{\Gamma \vdash \lambda x.t:~
[\sigma_i]_{i\in I} \rightarrow \tau}
\end{prooftree}\\[0.3cm]
\begin{prooftree}
\Hypo{\Gamma \vdash t:\, [\sigma_i]_{i\in I}\rightarrow \tau
 }
\Hypo{\Delta_i \vdash u:\, \sigma_i}
\Delims{ \left( }{ \right )^{i\in I} }
\Infer{2}[app]{\Gamma + \sum\limits_{i \in I} \Delta_i \vdash t(u):\, \tau
} 
\end{prooftree}
\begin{prooftree}
\Hypo{\Pi_k }
\Hypo{(\Pi_k)_{i \in I} }
\Infer{2}[app]
      {\Delta_i \vdash t(u)}
\end{prooftree}

\end{center}

The set of derivations is defined \textit{inductively} by the above rules.  We write $\Pi \rhd \Gamma \vdash t:\, \tau$ to mean that the (finite) derivation $\Pi$ concludes with the judgment $\Gamma \vdash t:\,\tau$.
A term is HN iff it is typable in system $\scrMo$.

\begin{rmk} 
  The rule $\app$ relies on the equality between two multisets: the multisets of the types typing $u$ and the negative part of the arrow type typing $t$ must be equal to grant that $tu$ is typable. In constrast to equality between two sequences, the multiset equality $[\sigma_i]_{i\in I}=[\sig'_i]_{i\in I'}$ can be seen as \textit{commutative} since the order we use to list the elements of the involved m.s. is of no matter (it is intuitively \textit{collapsed} for m.s.).
\end{rmk}

Notice that if $x$ is assigned $\ems\rew \tau$, then $x\,t$ is typable with type $\tau$ for any term $t$ -- which is left untyped -- even if $t$ is not HN. In order to characterize WN, we must grant somehow that every subterm (that cannot be erased during a reduction sequence) is typed : $\ems$ should not occur at bad positions in a derivation $\Pi$. Actually, it is enough to only look at the judgment concluding $\Pi$ : a term $t$ is WN iff it is typable in $\scrMo$ inside an
\textbf{unforgetful} judgment. We say here that judgment $\Gamma \vdash t:\,\tau$ is unforgetful when $\Gamma$ (resp. $\tau$) does not
hold negative (resp. positive) occurrences of $\ems$. The proper
definitions are to be found in \S~\ref{subsecUF}, but, for now, it
is enough for now to notice that a sufficient condition of unforgetulness
is to be \textbf{$\ems$-free}: $t$ is WN as soon as $\Gamma$ and $\tau$ do not
hold $\ems$.

\subsection{Infinitary Subject Expansion by Means of Truncation}

\label{subsecExamples}

Let us just admit that there is an infinitary version of $\scrMo$, that we call $\scrM$. System $\scrM$ allows infinite multiset (\textit{e.g.} $[\alpha]_{\omega}$ is the multiset in which $\alpha$ occurs with an infinite multiplicty, s.t. $[\alpha]_{\omega}=[\alpha]+[\alpha]_{\omega}$) and proofs of infinite depth.

Let $\Delta_f=\lambda x.f(xx)$ and $\cu=\Delta_f\Delta_f$. Notice $\cu\rew f(\cu)$, so $\cu \rew^{k} f^k(\cu)$. Intuitively, if $k\rew \infty$, the redex disapper and  we get $\cu \rew^{\infty} \fom$ where $\fom$ is the (infinite) term $f(f(f(\ldots)))$, satisfying $\fom=f(\fom)$ and containing a rightward infinite branch. Since $\fom$ does not hold any redex, $\fom$ can be seen as the NF of $\cu$.

In order to adapt the previous criterion, the idea is to type NF (here, $\fom$) in unforgetful judgment, and then proceed by (possibly infinite) expansion to get a typing derivation of the expanded term (here, $\cu$). Let us consider the following $\scrM$-derivation $\Pi'$ (presented as fixpoint):
\begin{center}
$\Pi'=$\begin{prooftree}
\Infer{0}[\text{ax}]{f:\,[[\alpha]\rightarrow \alpha] \vdash
f:\,[\alpha]\rightarrow \alpha}
\Hypo{\Pi'}
\Infer{1}{f:[[\alpha]\rightarrow \alpha]_{\omega} \vdash
f^{\omega}:\,\alpha}
\Infer{2}[app]{f:[[\alpha]\rightarrow \alpha]_{\omega} \vdash f^{\omega}:\,\alpha}
\end{prooftree}
\end{center}

Now, $\Pi'$ yields an unforgetful typing of $\fom$ (no occurrence of $\ems$). This the kind of derivation we want to expand in order to get a derivation $\Pi$ typing $\cu$. Since $\cu \rew^{\infty} \fom$ (infinite number of reduction steps), we are stuck. But notice that $\Pi'$ can be truncated into the derivation $\Pi'_n$ below for any $n\geqslant 1$ (we write  $\Gamma_n$ for $f:[\arewa]_{n-1}+[\erewa]$):
\begin{center}

  \begin{prooftree}

    \Infer{0}[\ax]{f:[\arewa]\vdash f:\arewa }

    \Infer{0}[\ax]{f:[\arewa]\vdash f:\arewa }

    \Infer{0}[\ax]{\Gamma_1\vdash f:\erewa }
    \Infer{1}[\app]{\Gamma_1 \vdash \fom:\alpha}     
    \Infer{2}[\app]{\Gamma_2 \vdash \fom :\alpha}
    \Ellipsis{}{\Gamma_{n-1}\vdash \fom :\alpha}
    \Infer{2}[\app]{\Gamma_n \vdash \fom : \alpha}

  \end{prooftree}
\end{center}

By \textbf{truncation}, we mean that the finite derivation $\Pi_n '$ can be (informally) obtained from the infinite one $\Pi'$ by erasing some elements from the infinite multisets appearing in the derivation. Conversely, we see that $\Pi'$ is the graphical \textbf{join} of the $\Pi'_n$: $\Pi'$ is obtained by superposing all the derivations $\Pi'_n$ on the same (infinite) sheet of paper.


However, we are still stuck: we do not know how to expand $\Pi'_n$, because although finite, it still types the $\infty$-reduced term $\fom$. But notice that we can replace $\fom$ by $f^k(\cu)$ inside $\Pi'_k$ whenever $k\geqslant n$, because those two terms do not differ in the typed parts of $\Pi'$ (\textbf{subject subsitution}). It yields a derivation $\Pi_n^k\rhd \Gam_n\vdash f^k(\cu):\alpha$. This time, $\Pi_n^k$ is a derivation typing the $k$-th reduced of $\cu$, so we can expand it $k$ times, yielding a derivation $\Pi_n$ ($\Pi_n$ does not depend on $k$). Then, we can rebuild a derivation $\Pi$ such that each $\Pi_n$ is a truncation of $\Pi$ the same way $\Pi'_n$ is of $\Pi'$ ($\Pi$ can be seen as the ``graphical'' join of the $\Pi_n$).

Thus, the ideas of truncation, subject subsitution and join indicate us how to perform $\infty$-subject expansion (\textit{cf.} \ref{}).
The particular form of $\Pi_n$ and $\Pi$ does not matter. Let us just say here that the $\Pi_n$ involve a family of types $(\gam)_{n\geqslant 1}$ inductively defined by $\gam_1=\erewa$ and $\gam_{n+1}=[\gam_i]_{1\leqslant i \leqslant n}\rew \alpha$ and $\Pi$ involves an infinite type $\gam$ satisfying $\gam=[\gam]_{\omega} \rew \alpha$.

Unfortunately, it is not difficult to see that the type $\gamma$ also
allows to type the non-HN term $\Delta \Delta$. Indeed,
$x:\,[\gamma]_{n\in \omega}\vdash xx:\,\alpha$ is derivable, so
$\vdash \Delta:\,\gamma$ and $\vdash \Delta\Delta:\, \alpha$ also are. 

This last observation shows that the naive extension of the standard non-idempotent type system to infinite terms is unsound as non-HN terms can be typed. Therefore, we need to discriminate between sound derivations (like $\Pi$ typing $\cu$) and unsound ones. For that, we define an infinitary derivation $\Pi$ to be \textbf{valid} or \textbf{approximable} when $\Pi$  admits finite truncations, generally denoted by $\supf \Pi$ -- that are  finite derivations of $\scrMo$ --, so that any fixed finite part of $\Pi$ is contained in some truncation $\supf \Pi$ (for now, a finite part of $\Pi$ informally denotes a finite selection of graphical symbols of $\Pi$, a formal definition is given in Sec.~\ref{subsecBisupp}).

\subsection{Safe Truncations of Typing Derivations}

\label{subsecNonD}

Truncating derivations  can obliterate 
different possible \textit{reduction choices} in system
$\scrM$. This problem suggests the need for rigid constructions. 

Let us consider a redex $t=(\lambda x.r)s$ and $t'=r[s/x]$. If a
derivation $\Pi$ types $t$ then $r$ has been given some type $\tau$ in
some context $\Gamma,\, x:\,[\sigma_i]_{i\in I}$ through a
subderivation $\Pi_0$ (see Appendix \ref{appSR} for a figure). Also, for each $i\in I$, $s$ has been given the
type $\sigma_i$ through some subderivation $\Pi_i$.  We can obtain a
derivation $\Pi'$ typing the term $t'$ by replacing the axiom rule
yielding $x:\,[\sigma_i]\vdash x:\sigma_i$ by the derivation
$\Pi_i$. The construction of such a $\Pi'$ from $\Pi$ relies generally
on a result referred as the "substitution lemma".

If a type $\sigma$ occurs several times in $[\sigma_i]_{i\in I}$ --
say $n$ times --, there must be $n$ axiom leaves in $\Pi$ typing $x$
with type $\sigma$, but also $n$ argument derivations $\Pi_i$ proving
$s:\,\sigma$.  When an axiom rule typing $x$ and an argument
derivation $\Pi_i$ are concluded with the same type $\sigma$, we shall
informally say that we can \textbf{associate} them. It means that this
axiom rule can be substituted by that argument derivation $\Pi_i$ when
we reduce $t$ to produce a derivation $\Pi'$ typing $t'$. There is not
only one way to associate the $\Pi_i$ to the axiom leaves typing $x$
(there can be as many as $n!$). Observe the following independent situations:
\begin{itemize}
\item Assume $\Pi_1$ and $\Pi_2$ (typing $s$), both concluded with the
  same type $\sigma= \sigma_1=\sigma_2$.  Thus, we also have two axiom leaves \#1 and \#2
  concluded by $x:\,[\sigma]\vdash x:\, \sigma$, where \#1 can be
  associated with $\Pi_1$ or $\Pi_2$.  When we truncate $\Pi$ into a
  finite $\supf \Pi$, the subderivation $\Pi_1$ and $\Pi_2$ are also
  cut into two derivations $\supf \Pi_1$ and $\supf \Pi_2$. In each
  $\supf \Pi_i$, $\sigma$ can be cut into a type $\supf \sigma_i$.  When 
  $\Pi_1$ and $\Pi_2$ are different, it is possible that $\supf
  \sigma_2 \neq \supf \sigma_1$ for every finite truncation 
   of $\Pi$. Thus, it is possible that, for every truncation
  $\supf \Pi$, the axiom leaf \#1 \textit{cannot}  be associated to $\supf
  \Pi_2$: 
indeed,  an association that is possible in $\Pi$ could be impossible
  for any of its truncations.
\item Assume this time $\sigma_1\neq \sigma_2$. When we truncate
$\Pi$ into a finite $\supf \Pi$, both $\sigma_1$ and $\sigma_2$ can 
be truncated into the same finite type $\supf \sigma$. We can then
associate $\supf \Pi_1$ with axiom \#2 and $\supf \Pi_2$ with axiom \#1
inside $\supf \Pi$, thus producing a derivation $\supf \Pi'$ typing 
$t'$ that has no meaning w.r.t. the possible associations 
in the original derivation $\Pi$.
\end{itemize}
That is why we will need a \textit{deterministic} association between the 
argument derivations and the axiom leaves typing $x$, so that the associations between them  are 
preserved even when we truncate derivations. System 
$\mathscr{M}$ does not allow to formulate a well-fit notion of 
approximability for derivations that would be stable under 
(anti)reduction and hereditary for subterms. This leads us to formulate
a rigid typing system in next section.

\section{A Rigid Type System}
\label{secRigid}



A non-negative integer is called here a \textbf{track}, an \textbf{argument track} is a integer $\geqslant 2$.
Let $\bbN^*$ the set of finite sequences of non-negative integers. If $a,\, b\in \bbN^*$, $a\cdot b$ is the concatenation of $a$ and $b$, $\epsi$ is the empty-sequence and $a\leqslant b$ if there is $c\in \bbN^*$ s.t. $b=a\cdot c$.
The length of $a$ is written $|a|$. The \textbf{applicative depth} $\ad{a}$ of $a\in \bbN^*$ is the number of argument tracks it $a$ holds (\textit{e.g.} $\ad{0\cdot 3\cdot 2\cdot 1\cdot 1} =2$). If $a\in \mathbb{N}^*$, the \textbf{collapse} of $a$, written $\ovl{a}$, is obtained by replacing in $a$ very track $>3$ by 2, \textit{e.g.} $\ovl{0 \cdot 5 \cdot 1 \cdot 3\cdot 2}=0\cdot 2 \cdot 1\cdot 2\cdot 2$.

A \textbf{tree} $A$ of $\bbN^*$ is a non-empty subset of $\bbN^*$ that is downward-closed for the prefix order ($a\leqslant a'\in A$ implies $a\in A$).

A subset $F\subset \bbN^*$ is a \textbf{forest} if $F=A-\{\epsi\}$ for some tree $A$ such that $0,\,1 \notin F$.


\subsection{Infinitary Terms}

\label{subsecTerms}


Let $\scrV$ be a countable set of term variables.
The set of terms $\Lambda^{111}$ is defined \textit{coinductively}:
$$t,\,u~ ::=~ x~ \|~ \lambda x.t~ \|~ tu $$

The parsing tree of $t\in \Lambda^{111}$, also written $t$, is the
labelled tree on $\Sigma_t:=\mathscr{V}\cup \{\lambda x~ |~ x\in
\mathscr{V}\}\cup \{\symbol{64}\}$ defined coinductively by: $\supp
(x)=\{\epsi\}$ and $x(\epsi)=x$, $\supp (\lambda x.t)= \{\epsi\}\cup
0\cdot \supp (t),~ (\lambda x.t)(\epsi)=\lambda x$ and $(\lambda
x.t)(0\cdot b)=t(b)$, $\supp (tu)=\{\epsi\} \cup 1\cdot \supp (t) \cup
2\cdot \supp (u)$, $(tu)(\epsi)=\symbol{64}$, $(tu)(1\cdot b)=t(b)$ and
$(tu)(2\cdot b)= u(b)$.


The abstraction $\lambda x$ binds $x$ in $t$ and $\alpha$-equivalence can be defined properly~\cite{DBLP:journals/tcs/KennawayKSV97}.

The relation $t \rewb{b} t'$ is defined by induction on $b\in 
\{0,\,1,\,2\}^*$: $(\lambda x.r)s\rewb{\epsi} r[s/x]$, $\lambda 
x.t\rewb{0\cdot b} \lambda x.t'$ if $t\rewb{b}t'$, $t_1t_2\rewb{1\cdot 
b} t'_1t_2$ if $t_1\rewb{b} t_1,~ t_1t_2 \rewb{2\cdot b} t_1t_2 '$ if 
$t_2\rewb{b} t'_2$. We define \textbf{$\beta$-reduction} by $\rew 
=\bigcup\limits_{b\in \{0,\,1,\,2\}^* } \rewb{b}$.

If $b=(b_i)_{i\in \mathbb{N}}$ is an \textit{infinite} sequence of
integers, we extend $\ad{b}$ as $|\{i \in \mathbb{N}~|~
b_i\geqslant 2\}|$ and we say that $b$ is an infinite branch of $t\in
\Lambda^{111}$ if, for all $n\in \mathbb{N},~ b_0\cdot b_1\cdot \ldots \cdot b_n\in \supp(t)$.  The calculus
$\Lambda^{001}$ is the set of terms $t\in \Lambda^{111}$ such that,
for every infinite branch $b$ of $t$, $\ad{b}=\infty$ . Thus, for
001-terms, infinity is allowed, provided we descend infinitely many
times in application arguments.


We define  a reduction sequence of length $\leqslant \omega$ of
$001$-terms $t_0\rewb{b_0}t_1\rewb{b_1} t_2\ldots $ to be \textbf{strongly
converging} if it is finite or if $\lim \ad{b_n}=+\infty$. See
\cite{DBLP:journals/tcs/KennawayKSV97} for an in-depth study of
\textbf{strongly converging reduction sequences (s.c.r.s)}. A compression
property allows us to consider only sequences of length $\leqslant
\omega$ without loss of generality. Assuming strong convergence, let
$b\in \mathbb{N}^*$ and $N\in \mathbb{N}$ s.t. $\forall n\geqslant N,~
\ad{b_n}> \ad{b}$. Then, either $\forall n\geqslant N,~ b\notin \supp
(t_n)$ or $\forall n\geqslant N,~ b\in \supp (t_n)$ and
$t_n(b)=t_N(b)$. Let $B'$ be the set of all the $b\in \mathbb{N}^*$ in
the latter case and $t'$ the labelled tree define by $\supp (t')=B'$
and $t'(b)=t_N(b)$ for any large enough $N$. We notice that $t'\in
\Lambda^{111}$. Actually, $t'$ is a 001-term (because at fixed
applicative depth, $t'$ must be identical to a $t_N$, for some large
enough $N$) and we call $t'$ the \textbf{limit} of the s.c.r.s. The
notation $t\rew ^\infty t'$ means that there is a s.c.r.s. starting
from $t$, whose limit is $t'$.

\subsection{Rigid Types}

If $X$ is a set, a \textbf{(partial) sequence} of $X$ is a familly $x=(x_k)_{k\in K}$ s.t. $K\subseteq \bbN-\{0,\,1\}$ and $x_k \in X$ for all $K$. We say $x_k$ is placed on \textbf{track} $k$ inside $(x_k)_{k\in K}$ and $K$ is the set of \textbf{roots} of $x$: we write $K=\Rt(x)$.

Let $\mathscr{X}$ be a countable set of types variables (metavariable $\alpha$).
The sets of (rigid) types $\Types^{+}$ (metavariables $T$, $T_i$, \ldots)
and rigid (sequence) types $\FTypes^{+}$ (metavariables $F$, \ldots)
are coinductively defined by:
$$ 
\begin{array}{lll} 
T & ::= &  \alpha ~ \|~ F \rightarrow T \\
F  & ::= &  (T_k)_{k\in K}
\end{array}$$

\begin{rmk}
  \begin{itemize}
  \item
  The \textbf{sequence type (seq.t.) $F=(T_k)_{k\in K}$} is a sequence of types in the above acception and is seen as an intersection of the types $T_k$ it holds.
\item If $U=F\rew T$, we set $\Tl(U)=F$ and $\Hd(U)=T$ (\textbf{tail} and \textbf{head}). 
  \end{itemize}
\end{rmk}

The equality between two types (resp. sequence types) is defined by mutual coinduction: 
$F\rew T=F'\rew T'$ if $F=F'$ and $T=T'$ and $(T_k)_{k\in K}=(T'_k)_{k \in K'}$ if $K^1=K^2$ and for all $k\in K,~ T_k=T_k'$.
It is a \textbf{syntactic equality} (unlike multiset equality). A $\ttS$-type can only be written one way.


The support of a type (resp. a sequence type), which is a tree of $\bbN^*$ (resp. a forest),  is defined by mutual coinduction: $\supp(\alpha)=\epsi,~ \supp(F\rew T)=\{\epsi\}\cup \supp(F)\cup 1\cdot \supp(T)$ and $\supp((T_k)_{k\in K})= \bigcup\limits_{k\in K} k\cdot \supp(T_k)$

A type of $\Types^+$ is in the set $\Types$ if its support does not hold an infinite branch ending by $1^{\omega}$. A sequence type $\FTypes^+$ is in $\FTypes$ if it holds only types of $\Types$. A (sequence) type is said to be finite when its support  is. We write $\eft$ for the forest type whose support is empty and 
$(T)_{i \in \{k\}}$ (only one type $T$, on track $k$) will simply be written $k\cdot T$.

When we quotient the sets $\Types$ and $\FTypes$ by a suitable congruence (collapsing the order in nested sequences), we get the set of types and multiset types of system $\scrM$ (\textit{cf.} Appendix \ref{appM}).

We say that a family of seq.t. $(F^i)_{i\in I}$ is \textbf{disjoint} if the $\Rt(F^i)$ ($i$ ranging over $i$) are pairwise disjoint. This means that there is no overlapping of typing information between the $F^i$. In that case, we define the \textbf{join} of $(F^i)_{i \in I}$ as the seq.t. $F$ s.t. $\Rt(F)=\bigcup\limits_{i\in I} \Rt(F^i)$ and, for all $k\in \Rt(F)$, $F_k=F_k^i$ where $i$ the unique index s.t. $k\in \Rt(F^i)$.

\subsection{Rigid Derivations}

\label{subsecDefRD}

A \textbf{(rigid) context} $C$ is a  function from $\mathscr{V}$
to $\FTypes$. The context $C-x$ is defined by $(C-x)(y)=C(y)$ for any
$y\neq x$ and $(C-x)(x)=\eft$.  We define the join of contexts
point-wise. A \textbf{judgment} is a sequent of the form $C \vdash t:\,T$, where $C$ is a context, $t$ a 001-term and $T\in \Types$.
The set $\Deriv$ of \textbf{(rigid) derivations} (denoted $P$) is defined
coinductively by the following rules:

\begin{center}
\begin{prooftree}
\Infer{0}[\ax]{x:\,k\cdot T \vdash x:\,T }
\end{prooftree}
\hspace{3cm}
\begin{prooftree}
\Hypo{C\vdash t:\, T \posPr{0}}
\Infer{1}[\abs]{C-x \vdash \lambda x.t:~
C(x)\rightarrow T}
\end{prooftree}\\[0.4cm]

\begin{prooftree}
\Hypo{C \vdash t:\, (S_k)_{k\in K}\rightarrow T }
\Hypo{D_k \vdash u:\, S'_k}
\Delims{ \left( }{ \right)_{k\in K'} }
\Infer{2}[\app]{C \cup \bigcup\limits_{k\in K} D_k \vdash t u:\, T}
\end{prooftree}
\end{center}

\begin{addc}
  \begin{itemize}
    \item In the $\app$-rule, the right part of the application is a sequence of judgments and we must have $(S_k)_{k\in K}=(S'_k)_{k\in K'}$ (syntactic equality).
\item Still in the $\app$-rule, the contexts must be disjoint, so that no track conflict occurs.
\end{itemize}
\end{addc}


In the axiom rule, $k$ is called an \textbf{axiom track}. In the $\app$-rule Once again, the judgment $(\Delta_k\vdash u:S_k)$ is called the \textbf{track $k$ premise} of the rule.

Once again, this
definition is very low-level, since the involved sequence types must be \textit{syntactically equal} to grant that the application is typable. The $\app$-rule may also be incorrect because two sequence types $C(x)$ or $D_k(x)$ (for a $x\in \mathscr{V}$) are not disjoint (\textbf{track conflict}). However, if we change wisely the values given to the axioms tracks, we can always assume that no conflict occurs for a specific axiom rule (for instance, using a bijection between $\bbN$ and a countable disjoint union of $\bbN$).

We can define \textbf{isomorphisms of derivations}. It is formally done in
Appendix \ref{appIso}. Concretely, $P_1$ and $P_2$ are isomorphic, written
$P_1\equiv P_2$, if they type the same term, there is well-behaved labelled tree isomorphism between their support and use isomorphic types and contexts. In that case, we can define type isomorphisms that are compatible in some
sense with the typing rules in the two derivations $P_1$ and $P_2$.

\subsection{Components of a Rigid Derivations and Quantitativity}
\label{subsecBisupp}

Thanks to rigidity, we can designate, identify and name every part of a derivation, thus allowing to formulate many associate, useful notions.

We can define the \textbf{support} of a derivation $P\rhd C\vdash t:T$: $\supp(P)=\epsi$ if $P$ is an axiom rule, $\supp(P)=\{\epsi\}\cup 0\cdot \supp(P_0)$ if $t=\lambda x.t_0$ and $P_0$ is the subderivation typing $t_0$, $\supp(P)=\{\epsi\}\cup 1\cdot \supp(P_1) \cup \bigcup\limits_{k \in K} \supp(P_k)$ if $t=t_1\,t_2,~ P_1$ is the left subderivation typing $t_1$ and $P_k$ the right subderivation proving the track $k$ premise. The $P_k$ ($k\in K$) are called \textbf{argument derivations}.


If $a\in \supp(P)$, then $a$ points to a judgment inside $P$ -- say this judgment is $C(a)\vdash t\rstr{\ovl{a}}:T(a)$: we say $a$ is a \textbf{position} of $P$. Now, let us locate ourselves at position $a$: if $c\in \supp(T(a))$, then $c$ is a pointer to a type symbol ($\alpha$ or $\rew$) in the type on the right side of the sequent nested a position $a$. We call the pair $(a,c)$ a \textbf{right biposition}: it points to a position in a type of a judgment nested in a judgment. Likewise, if $x\in \TermV$ and $k\cdot c\in C(a)(x)$, the pair $(c,x)$ points to a type symbol inside seq.t. $C(a)(x)$ (on the left side of the sequent) and we call the triple $(a,x,c)$ a \textbf{left biposition}. The \textbf{bisupport} of $P$, written $\bisupp(P)$ is the set of bipositions inside $P$ and if $\bip \in \bisupp(P)$, $P(\bip)$ is the nested type symbol that $\bip$ points at.


If $a\in A:=\supp (P)$ and $x\in \mathscr{V}$, we set $\Ax(a)(x)=\{ a' \in A~ | a'\geqslant a~ \text{and}~  t(\ovl{a'})=x\}$ if $x$ is free at pos. $\ovl{a}$ and  $\Ax(a)(x)=\emptyset$ if not: it is the set of (positions of) axiom leaves 
typing $x$ above $a$. If $a\in A$ is an axiom, we write $\tr(a)$ for its associated axiom track. 
The presence of an infinite branch inside a derivation makes it possible that a type in a context is not created in an axiom rule. This yields the notion of \textbf{quantitative derivation}, in which this does not happen:

\begin{defi}
A derivation $P$ is \textbf{quantitative} when, for all $a\in A$ and $x\in
\mathscr{V}$, $C(a)(x)=\bigcup\limits_{a'\in \Ax(a)(x)} \tr(a')\cdot T(a')$.
\end{defi}


Now, assume $P$ is quantitative. For all $a\in A$ and $x\in
\mathscr{V}$, we set $\AxTr(a)(x) = \Rt (C(a)(x))$. For all $a\in A,~
x\in \mathscr{V}$ and $k\in \AxTr(a)(x)$, we write $\pos(a,\,x,\, k)$
for the unique position $a'\in \Ax(a)(x)$ such that $\tr(a')=k$.

\section{Dynamics}




\label{secDynamics}

In this section, we explore the way reduction is performed inside a
derivation and introduce the notion of approximations and approximable
derivations. We assume $t|_b=(\lambda x.r)s$ and $t \rewb{b} t'$ and we consider a  derivation $P$ s.t. $P \rhd C \vdash t:\,T$. The letter $a$ will stand
for a representative of $b$ and the letter $\alpha$ for positions
inside $A:=\supp (P)$ (and \textit{not} for type variables).

\subsection{Residual (bi)positions}
\label{subsecResBip}

When $a\in \Rep_A(b)$, we set $\RedTr(a)=\Rt(C(a\cdot 10)(x))$. For $k\in \RedTr(a\cdot 10)(x)$, we write $a_k$ for the unique $a_k\in \mathbb{N}^*$ such that $\pos(a\cdot 10,\,x,\, k)=a\cdot 10\cdot a_k$. 

Assume $t\rstr{\ovl{a}}$ is a redex $(\lambda x.r)s$. We want to grant subject reduction according to the picture below:

\begin{center}
\begin{tikzpicture}

  
  \draw (0.3,3.6) node {\fnsz $P_r$} ; 
  \draw [dotted] (0.3,1.8)  -- ++(0,1.5) ;
  
  \draw (5.95,2.65) node{\fnsz $_{k\in K}$};
  \draw (0.6,2.97) node {\Big( } ;
  \draw (5.6,2.97) node {\Big) } ;
  \draw (5.1,3.07) node[right]{\fnsz $\ax$ };
  \draw (0.7,3.07) -- ++(4.4,0) ;
  \draw (0.7,2.8) node [right] {\fnsz $x:\, (S_k)_k \vdash x:\,S_k\posPr{a\cdot 10\cdot a_k}$  };

  \draw (-0.8,1.15) node [right] {\fnsz $C,\, x:\,(S_k)_{k\in K} \vdash r:\,T \posPr{a\cdot 10}$} ;  
\draw (-0.8,0.92) -- ++ (4.7,0);
\draw (-0.8,0.65) node [right] {\fnsz $C \vdash \lambda x.r:\,  (S_k)_{k\in K}\rew T\posPr{a\cdot 1}$} ;


\draw (7.65,1.92) node {\fnsz $_{k\in K}$};
\draw (4.3,1.25) node {\Large $\Bigg($} ;
\draw (7.45,1.25) node {\Large $\Bigg)$} ;
\draw (5.05,1.8) node {\fnsz $P_k$ };
\draw [dotted] (5.05,0.97) -- ++(0,0.6) ;
\draw (4.35,0.65) node [right]{\fnsz $D_k\vdash s:\, S_k \posPr{a\cdot k}$} ;

\draw (-0.8,0.42) -- ++(8.2,0) ;
  \draw (0,0) node [right] {\fnsz $C \cup \bigcup\limits_{k\in K} D_k \vdash (\lambda x.r)s:\, T\posPr{a}$} ;


\draw (8,0.77) node {$\leadsto$};


\draw (12.8,2.77) node {\fnsz $_{k\in K}$};
\draw (9.25,2.1) node {\Large $\Bigg($} ;
\draw (12.6,2.1) node {\Large $\Bigg)$} ;
\draw (10.1,2.65) node {\fnsz $P_k$ };
\draw [dotted] (10.1,1.8) -- ++(0,0.6) ;
\draw (9.4,1.5) node [right]{\fnsz $D_k\vdash s:\, S_k \posPr{a\cdot a_k}$} ;

  
  \draw (9,2.3) node {\fnsz $P_r$} ; 
  \draw [dotted] (9,0.5)  -- ++(0,1.5) ;
  
\draw (8.2,0) node [right] {\fnsz  $C \cup \bigcup\limits_{k\in K} D_k \vdash  r[s/x]:\, T\posPr{a}$};


\draw (-0.8,-0.6) node[below right] { \parbox{7.8cm}{\textbf{Derivation typing $(\lambda x.r)s$ }
{\small    \begin{itemize}
  \item $P_r$ is the subderiv. (above $a$) typing $r$.
  \item In $P_r$, the axiom rule (typing $x$) using track $k$ is at position $a\cdot 10\cdot a_k$.
      \item The arg. deriv. $P_k$ yields the track $k$ premise.
      \end{itemize}}
} } ; 


\draw (8.2,-0.6) node[below right] { \parbox{5cm}{\textbf{Derivation typing $r[s/x]$ }
    {\small 
    \begin{itemize}
    \item The application and abstraction of the redex have been destroyed.
    \item In $P_r$, the arg. deriv. $P_k$ has replaced $x$-axiom using track $k$.
    \end{itemize}}
} } ; 

\end{tikzpicture}
\end{center}


Notice how this transformation is deterministic: for instance, assume $7 \in K$. There must be an axiom rule typing $x$ using axiom track 7 \textit{e.g.} $x:\, 7\cdot S_7\vdash x:\,S_7$ at position $a\cdot 10\cdot a_7$ and also a subderivation at argument track 7, namely, $P_7$
 concluded by $s:S_7$
at position $a\cdot 7$. Then, when we fire the redex at position $b$, the subderivation $P_7$ \textit{must} replace the axiom rule on track 7, even if there may be several $k\neq 7$ such that $S_k=S_7$ (compare with \S~\ref{subsecNonD}).



The figure above represents the quantitative case but the following construction does not assume $P$ to be quantitative (although motivated by it). The notion of \textit{residual} (right bi)positions tells us where a (right bi)position inside $P$ will be placed in derivation $P'$. Assume $\alpha \in A,~ \ovl{\alpha}\neq a,\, a\cdot 1,\, a \cdot 10\cdot a_k$ for no $a\in \Rep_A(b)$ and $k\in \RedTr(a)$. The \textbf{residual position} of $\alpha$, written $\Res_b(\alpha)$, is defined as follows \textit{i.e.} (1)  if $\alpha \geqslant a\cdot k\cdot \alpha_0$ for some $a\in \Rep(b)$ and $k\geqslant 2$, then $\Res_b(\alpha)=a\cdot a_k \cdot \alpha_0$ (2) if $\alpha =a\cdot 10 \cdot \alpha_0$ for some $a\in \Rep(b)$, then  $\Res_b(\alpha)=a\cdot \alpha_0$ and (3) if $\ovl{a}\ngeqslant b$, $\Res_b(\alpha)=a$. 


We set $A'= \codom (\Res_b)$ (\textbf{residual support}). Now, whenever
$\alpha':= \Res_b(\alpha)$ is defined, the \textbf{residual
biposition} of $\bip:=(\al,\gam)\in \bisupp(P)$ is $\Res_b(\bip)=(\alpha',\,\gamma)$. We
notice that $\Res_b$ is an \textit{injective}, \textit{partial}
function, both for positions and right bipositions. In particular,
$\Res_b$ is a bijection from $\dom (\Res_b)$ to $A'$ and we write
$\Res_b^{-1}$ for its inverse. For any $\alpha'\in A'$, let
$C'(\alpha')$ be the context defined by $C'(\alpha')=(C(\alpha)-x) \cup
\bigcup\limits_{k\in K(\alpha)} C(\alpha\cdot k)$, where
$\alpha=\Res_b^{-1} (\alpha')$ and $K(a)=\Rt(C(a)(x))$.
Notice that $C'(\alpha)=C(\alpha)$ for any $\alpha \in A$ s.t. $\ovl{\alpha} \ngtr b$, \textit{e.g.} $C'(\epsi)= C(\epsi)$.

\subsection{Deterministic subject reduction and expansion}
\label{subsecSRSE}

Let $P'$ be the labelled tree such that $\supp (P')=A'$ and $P'(\alpha')$ is $C'(\alpha') \vdash t'|_{\alpha'}:\, T(\alpha)$ with $\alpha'=\Res_b(\alpha)$. We claim that $P'$ is a correct derivation concluded by $C\vdash t':T$: indeed, $\ovl{A'}\subset \supp (t')$ stems from $\ovl{A}\subset \supp (t)$. Next, for any $\alpha'$ and $\alpha=\Res_b^{-1}(\alpha)$, 
$t'(\ovl{\alpha'})=t(\ovl{\alpha})$ and the rule at position $\alpha'$
is correct in $P'$ because the rule at position $\alpha$ in $P$ is 
correct (for the abstraction case, we notice that $t'(\ovl{\alpha'})=\lambda y$ implies $C'(\alpha')(y)=C(\alpha)(y)$).


\begin{prop}[Subject Reduction]
If $t\rewb{b} t'$ and $C\vdash t:\,T$ is derivable, then so is $C\vdash t':\,T$.
 \end{prop}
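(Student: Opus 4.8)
The plan is to construct the candidate derivation $P'$ for $t'$ explicitly from $P$ by transporting everything along the residual map $\Res_b$, and then to verify rule-by-rule that $P'$ is a well-formed rigid derivation concluded by $C\vdash t':T$. Most of this has in fact already been prepared in \S\ref{subsecResBip} and \S\ref{subsecSRSE}: the residual support $A'=\codom(\Res_b)$, the residual contexts $C'(\alpha')$, and the labelled tree $P'$ with $P'(\alpha')= C'(\alpha')\vdash t'|_{\alpha'}:T(\alpha)$ (where $\alpha=\Res_b^{-1}(\alpha')$) are all already defined there. So the real content of the proof is to check that this $P'$ \emph{is} a derivation in $\Deriv$, i.e. that at every position $\alpha'\in A'$ the local rule shape is one of $\ax$, $\abs$, $\app$ and that all the Additional Constraints (syntactic equality of sequence types in the $\app$-rule, disjointness of contexts) are met.

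First I would fix $\alpha'\in A'$ and $\alpha=\Res_b^{-1}(\alpha')$ and split on the three clauses in the definition of $\Res_b$, which correspond to whether $\alpha$ lies strictly below the redex, strictly above it, or inside one of the argument subderivations $P_k$. In the case $\ovl{\alpha}\ngeqslant b$ (clause (3)), $\alpha'=\alpha$, $C'(\alpha')=C(\alpha)$, $T(\alpha)$ is unchanged and $t'(\ovl{\alpha'})=t(\ovl{\alpha})$, so the rule at $\alpha'$ in $P'$ is literally the rule at $\alpha$ in $P$ — correct by hypothesis; the only subtlety is the $\abs$-case, where one must note that if $t(\ovl\alpha)=\lambda y$ then $y$ is not the redex variable at that position, hence $C'(\alpha')(y)=C(\alpha)(y)$ and the binder types still match. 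In the case $\alpha\geqslant a\cdot k\cdot \alpha_0$ (clause (1)), the subtree of $P$ rooted at $a\cdot k$ is exactly the argument derivation $P_k\rhd D_k\vdash s:S_k$, and it is relocated wholesale to root $a\cdot a_k$ inside $P_r$; since $P_k$ was already a valid derivation and relocation does not touch any local rule, correctness is immediate. In the case $\alpha=a\cdot 10\cdot\alpha_0$ (clause (2)), i.e. $\alpha$ inside $P_r$ above $a\cdot 10$, the tree shape is preserved except that each axiom leaf $x:k\cdot S_k\vdash x:S_k$ at $a\cdot 10\cdot a_k$ (removed from $A'$ by the excluded positions) has been replaced by the relocated $P_k$; every remaining rule of $P_r$ keeps its shape and its right-hand type $T(\alpha)$, so the only thing to verify is that the \emph{contexts} $C'(\alpha')$ still make these rules correct.

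The main obstacle is exactly this context bookkeeping, and it is where I would spend the real work. One must show: (i) for the destroyed application and abstraction positions $a$, $a\cdot 1$ (and the destroyed axioms $a\cdot 10\cdot a_k$) everything glues — the root judgment of $P'$ at $\alpha'=\Res_b(a)$ is $C\cup\bigcup_{k\in K}D_k\vdash r[s/x]:T$, which matches the conclusion of $P$ at $a$; (ii) at every position $\alpha'$ strictly inside the former $P_r$, the context $C'(\alpha')=(C(\alpha)-x)\cup\bigcup_{k\in K(\alpha)}C(\alpha\cdot k)$ assigns to each free variable $y$ a sequence type for which the $\ax$/$\abs$/$\app$ rule at $\alpha'$ is correct — for $y\neq x$ this is $C(\alpha)(y)$ untouched, while the $x$-entries of $C(\alpha)$ are replaced by the contributions inherited from the $P_k$'s that now sit above $\alpha'$, and one must check these merge into exactly the tails that the (unchanged) right-hand types of the $P_k$ demand; (iii) the Additional Constraints survive: the sequence-type syntactic equalities in each $\app$-rule of $P'$ hold because the right-hand types $T(\cdot)$ are literally the old ones and the axiom-track labels of the relocated $P_k$ are preserved, and the disjointness of contexts holds because the $D_k$ were already pairwise disjoint and disjoint from $C$ inside the original $\app$-rule at $a$, and $\Res_b$ is injective so no two distinct positions collide. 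I would organize (ii) as a downward induction along $P_r$ using the quantitativity-style identity $C(\alpha)(x)=\bigcup_{a'\in\Ax(\alpha)(x)}\tr(a')\cdot T(a')$ restricted to the relevant axiom leaves — though since the proposition does not assume $P$ quantitative, I would instead argue directly from the coinductive definition of $\Deriv$ that relocating subderivations and taking the stated unions of contexts preserves each rule's validity. Finally, since $\Deriv$ is defined coinductively, the verification "$P'$ satisfies a rule at every position" is precisely what is needed to conclude $P'\in\Deriv$, and reading off its conclusion $P'(\epsi)=C'(\epsi)\vdash t':T(\epsi)=C\vdash t':T$ finishes the proof.
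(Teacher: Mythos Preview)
Your proposal is correct and follows essentially the same approach as the paper: define $P'$ via the residual map $\Res_b$ and verify locally, for each $\alpha'\in A'$, that the rule at $\alpha'$ in $P'$ is valid because the rule at $\alpha=\Res_b^{-1}(\alpha')$ in $P$ was, with the one subtlety being the $\abs$-case (where $t'(\ovl{\alpha'})=\lambda y$ forces $C'(\alpha')(y)=C(\alpha)(y)$). The paper's own argument is a two-sentence sketch of exactly this; your case split on the three clauses of $\Res_b$ and the explicit discussion of context bookkeeping and the Additional Constraints simply unfold what the paper leaves implicit.
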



With the above notations, we also write $P\rewb{b} P'$.
The subject-expansion property hold for quantitative
derivations. Namely, we build a derivation $P\rhd C\vdash t:\,T$ from a derivation $P'\rhd C\vdash t':\,T$, so that $P\rewb{b} P'$ by using a converse method. There are several possibilities to build such a $P$, because we have to choose
an axiom track $k$ for each occurrence of $x$ inside $P$ (in that case, $x$ is \textit{quantitatively} typed). For
instance, we can fix an injection $\code{\cdot}$ from $\mathbb{N}^*$
to $\mathbb{N}-\{0,\,1\}$ and to choose the track $\code{\alpha}$ for
any axiom rule created at position $\alpha$.

\begin{prop}[Subject Expansion]
If $t\rewb{b} t'$ and $C\vdash t':\,T$ is derivable, then so is $C\vdash t:\,T$.
\end{prop}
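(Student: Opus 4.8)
The plan is to invert the residuation procedure used for subject reduction. From a derivation $P'\rhd C\vdash t':T$ I shall build a derivation $P\rhd C\vdash t:T$ satisfying $P\rewb{b}P'$, which yields the derivability of $C\vdash t:T$. I take $P'$ to be \emph{quantitative}; this is the regime in which the statement will be used, and it makes the reconstruction canonical, since for a non-quantitative $P'$ the arrow type and the argument contexts to be restored on the redex need not be witnessed anywhere in $P'$. Fix once and for all an injection $\code{\cdot}\colon\bbN^{*}\rew\bbN\setminus\{0,1\}$ supplying fresh argument tracks, and write $t|_{b}=(\lambda x.r)s$, so that $t'|_{b}=r[s/x]$; put $A'=\supp(P')$.

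\textbf{Splitting $P'$ at the redex.}
For each representative $a\in\Rep_{A'}(b)$ the subderivation of $P'$ rooted at $a$ types $r[s/x]$. Its positions split into the \emph{skeleton} (those collapsing to a position of $r$) and the positions lying \emph{properly inside a copy of $s$} (those collapsing strictly below a free occurrence of $x$ in $r$). The \emph{copy roots} are the skeleton positions $a\cdot\beta$ collapsing to a free occurrence of $x$; since free occurrences of $x$ are leaves of $r$, the subderivations $P'\rstr{a\cdot\beta}$ hanging at the copy roots are pairwise position-disjoint, and each types $s$ with some $S^{a\cdot\beta}\in\Types$.

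\textbf{Reassembling $P$.}
Let $P$ agree with $P'$ away from these subderivations. At each $a\in\Rep_{A'}(b)$ assign to every copy root $a\cdot\beta$ the argument track $k_{a\cdot\beta}=\code{a\cdot 10\cdot\beta}$ (pairwise distinct over all $a$ and $\beta$, by injectivity), let $K^{a}$ collect them, and let $(S_{k})_{k\in K^{a}}$ be the sequence type with $S_{k_{a\cdot\beta}}=S^{a\cdot\beta}$. Now re-root the skeleton at $a\cdot 10$ (each skeleton position $a\cdot\gamma$ becomes $a\cdot 10\cdot\gamma$), turn each copy root $a\cdot\beta$ into the axiom $x:k_{a\cdot\beta}\cdot S^{a\cdot\beta}\vdash x:S^{a\cdot\beta}$ placed at $a\cdot 10\cdot\beta$, and add $x$ to each context along the skeleton by the quantitative recipe — its $x$-component at a position being the join, over the axiom leaves typing $x$ above it, of their track times their type — leaving the other variables as in $P'$; this produces a subderivation $P_{r}\rhd C'(a),\,x:(S_{k})_{k\in K^{a}}\vdash r:T(a)$, where $C'(a)$ is $C(a)$ with the entries contributed by the copies deleted. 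Below $P_r$ put an abstraction rule, giving $C'(a)\vdash\lambda x.r:(S_{k})_{k\in K^{a}}\rew T(a)$ at $a\cdot 1$; re-root each copy $P'\rstr{a\cdot\beta}$ at $a\cdot k_{a\cdot\beta}$ to serve as its argument derivation; and combine everything by an application rule at $a$. The latter concludes exactly the judgment $P'(a)$, since residuation does not alter the context at a position whose collapse is $b$; in particular the root judgment of $P$ is $C\vdash t:T$. (If $x$ is not free in $r$, then $K^{a}=\emptyset$, the sequence type is $\eft$, there is no argument derivation and $s$ stays untyped, which is allowed; the case $r=x$ is the symmetric degenerate one.)

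\textbf{Correctness, and the main obstacle.}
It remains to check that $P$ is a genuine element of $\Deriv$, which mirrors the verification carried out for $P'$ in the proof of the subject-reduction proposition: $\ovl{\supp(P)}\subseteq\supp(t)$ follows from $\ovl{\supp(P')}\subseteq\supp(t')$ together with the position relabeling (namely $\Res_{b}^{-1}$ on the part inherited from $P'$); each rule of $P$ is locally correct — the re-inserted axioms are genuine axiom rules, the abstraction rules are correct because $P$ is quantitative by construction, and each rebuilt application rule meets the Additional Constraints, the required syntactic equality holding because $S_{k_{a\cdot\beta}}$ was \emph{defined} to be the type concluding $P'\rstr{a\cdot\beta}$. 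The delicate point, and the main obstacle, is the track bookkeeping: one must ensure that the freshly chosen argument tracks $k_{a\cdot\beta}$ and the axiom tracks inherited by the various copies of $s$ and by the skeleton can be made pairwise disjoint wherever they meet, so that no track conflict arises in any rebuilt application rule while the conclusion $C\vdash t:T$ is preserved; where conflicts do arise one invokes the track-renaming (``reset'') remark following the definition of $\Deriv$. A last check that $P\rewb{b}P'$ serves as a sanity test — and this is exactly the kind of combinatorial bookkeeping that the rigid, track-indexed presentation was designed to keep under control.
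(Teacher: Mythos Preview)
Your proposal is correct and follows exactly the paper's approach: invert the residuation of subject reduction, choosing the fresh axiom tracks for $x$ via a fixed injection $\code{\cdot}$ from $\bbN^{*}$ to $\bbN\setminus\{0,1\}$ (the paper's proof is only a one-paragraph sketch, which you have faithfully fleshed out). One small remark: your closing worry about track conflicts and the fallback to the ``reset'' remark is unnecessary---since $P'$ is quantitative, the tracks in $C'(a)$ and in the $D_k$ are already pairwise distinct (they come from distinct axiom leaves of $P'$), and the new $x$-tracks are distinct by injectivity of $\code{\cdot}$, so no conflict ever arises; invoking a reset on inherited axiom tracks would in fact risk altering the root context $C$.
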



Determinism make subject reduction/expansion well-behaved w.r.t. isomoprhism:

\begin{lemma}
\begin{itemize}
\item If $P_1\equiv P_2$, $P_1\rewb{b} P_1'$ and $P_2\rewb{b} P_2'$, then $P_1'\equiv P_1'$.
\item Assume $P_1$ and $P_2$ quantitative: if $P_1 \rewb{b} P',~ P_2\rewb{b} P'$, then $P_1\equiv P_2$.
\end{itemize}
\end{lemma}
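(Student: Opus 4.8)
The plan is to obtain both items from the \emph{naturality} of the construction $P\mapsto P'$ carried out in the two previous subsections: the residual support $A'$, the partial injection $\Res_b$, and the labelling $C',T'$ of $P'$ are produced by explicit formulas that mention only the tree $\supp(P)$, the redex data ($\Rep_A(b)$, $\RedTr$, the positions $a_k$, the argument derivations $P_k$), and the types and contexts occurring in $P$. Hence a derivation isomorphism $P_1\equiv P_2$ can be transported through the construction, which yields the first item; and, for quantitative derivations, the construction can be inverted up to a relabelling of tracks, which yields the second.

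For the first item, recall that an isomorphism $P_1\equiv P_2$ is a tree isomorphism $\phi\colon\supp(P_1)\to\supp(P_2)$ covering the identity of $\supp(t)$ (so $\ovl{\phi(\alpha)}=\ovl{\alpha}$, whence $\phi$ fixes the tracks $0$ and $1$, permutes argument tracks at each node, and preserves the kind of every rule) together with compatible type isomorphisms $T_1(\alpha)\equiv T_2(\phi(\alpha))$ and $C_1(\alpha)(y)\equiv C_2(\phi(\alpha))(y)$, agreeing with the typing rules. In particular $\phi$ restricts to a bijection $\Rep_{A_1}(b)\to\Rep_{A_2}(b)$. For $a\in\Rep_{A_1}(b)$, the type isomorphism $C_1(a\cdot10)(x)\equiv C_2(\phi(a)\cdot10)(x)$ supplies a bijection $\psi_a$ between the sets $\RedTr(a)$ and $\RedTr(\phi(a))$; since $\phi$ respects axiom rules (and their tracks) it carries the leaf at $a\cdot10\cdot a_k^{(1)}$ to the leaf at $\phi(a)\cdot10\cdot a_{\psi_a(k)}^{(2)}$, and, by compatibility with the abstraction and application rules, its local argument-track bijection at $a$ is $\psi_a$, so it carries the argument derivation rooted at $a\cdot k$ to the one rooted at $\phi(a)\cdot\psi_a(k)$. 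Hence $\phi$ maps the positions of $P_1$ outside $\dom(\Res_b^{(1)})$ onto those of $P_2$ outside $\dom(\Res_b^{(2)})$, so it maps $\dom(\Res_b^{(1)})$ onto $\dom(\Res_b^{(2)})$, and inspecting the three clauses defining $\Res_b$ shows that $\phi':=\Res_b^{(2)}\circ\phi\circ(\Res_b^{(1)})^{-1}$ is a tree isomorphism $A_1'\to A_2'$ covering the identity of $\supp(t')$. It is the sought isomorphism $P_1'\equiv P_2'$: the type at $\alpha'$ in $P_i'$ is the type at $(\Res_b^{(i)})^{-1}(\alpha')$ in $P_i$ and the contexts of $P_i'$ are the prescribed joins of contexts of $P_i$, so compatibility of $\phi'$ with the labels of $P_1',P_2'$ reduces to compatibility of $\phi$ with those of $P_1,P_2$.

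For the second item, $P_1$ and $P_2$ are quantitative, both type $t$, and $P_1\rewb{b}P'$, $P_2\rewb{b}P'$. Since each $\Res_b^{(i)}$ is a bijection from $\dom(\Res_b^{(i)})$ onto $A'=\supp(P')$, a quantitative derivation reducing at $b$ to $P'$ is recovered from $P'$ by the subject-expansion construction, whose only indeterminacy is, for every residual argument derivation sitting in $P'$, the choice of the argument track onto which it is moved back --- equivalently, the axiom track of the reinstated $x$-leaf it matches. Everything else is forced by $P'$ and by $t$: the representatives of $b$ and the positions of the reinstated abstraction and application, the positions that become $x$-axioms, the subderivations that become argument derivations; and it is exactly quantitativity that makes the pairing of $x$-axioms with argument derivations bijective and indexed by tracks. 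Reading off $P_1$ the family of track choices it makes and re-running the expansion with that family returns $P_1$ up to isomorphism (subject reduction and subject expansion are mutually inverse once the track choices are matched, by injectivity of $\Res_b$), and similarly for $P_2$; and the two expansions of $P'$ obtained from the choices of $P_1$ and of $P_2$ are isomorphic, for the two families differ by a bijection of the reinstated argument and axiom tracks, which extends to a derivation isomorphism. Chaining these isomorphisms gives $P_1\equiv P_2$.

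The hard part throughout is the track bookkeeping. One must check that the several bijections extracted from an isomorphism --- the local argument-track bijections of the underlying tree isomorphism, and the bijections on root sets coming from the type isomorphisms --- agree on the tracks playing a double role inside a redex: the argument tracks of the application, the roots of the sequence type assigned to $x$ after the abstraction rule, and the axiom tracks of the $x$-leaves; and that this agreement is respected by the three clauses defining $\Res_b$ and, symmetrically, by subject expansion. In the second item it is this same agreement that makes the reconstruction of a quantitative $P$ from $P'$ canonical up to renaming of tracks, which is precisely where the quantitativity hypothesis enters.
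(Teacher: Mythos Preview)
Your proof is correct and, for the first item, takes exactly the same route as the paper: the appendix constructs the residual isomorphism as $\phi'_{\supp}=\Res_b\circ\phi_{\supp}\circ\Res_b^{-1}$, sets $\phi'_{\alpha'_1}=\phi_{\alpha_1}$ on types, and patches the context isomorphisms as the appropriate join over the axiom positions of $x$ --- precisely your conjugation argument together with the bookkeeping you flag at the end.

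For the second item you take a slightly different but equivalent tack. The paper's appendix states the result as an ``iff'' (so that your second item is the backward direction with $P_1'=P_2'=P'$) and would obtain the inverse isomorphism by running the same conjugation in reverse and extending it by hand to the finitely many positions outside $\dom(\Res_b)$ (the redex root, the abstraction node, and the $x$-axiom leaves). You instead invoke subject expansion: both $P_1$ and $P_2$ arise from $P'$ by the expansion construction, whose only freedom is the choice of axiom/argument tracks for the reinstated $x$-leaves, and two such choices differ by a relabelling of tracks, hence by a derivation isomorphism. This is the same content phrased more conceptually; the quantitativity hypothesis is used in the same place in both versions, namely to guarantee that the $x$-axioms and the argument derivations are in track-indexed bijection so that the reconstruction is canonical up to that relabelling. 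One small tightening: with $P_1$'s own tracks the expansion returns $P_1$ \emph{on the nose} (not merely up to isomorphism), by injectivity of $\Res_b$; the ``up to isomorphism'' only enters when comparing the two expansions coming from $P_1$'s and $P_2$'s track choices.
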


\section{Approximable Derivations and Unforgetfulness}

\label{secApproxUF}

\subsection{Approximability and Monotonicity}

\label{subsecApprox}

We define here our validity condition \textit{i.e.} approximability.
Morally, a derivation is approximable if all its bipositions are 
relevant.


\begin{defi}
\begin{itemize}
\item  Let $P$ and $P_*$ be two derivations. We say 
$P_*$ is an \textbf{approximation} of $P$, and we write $P_* \leqfty P$, if $\bisupp (P_*) \subset \bisupp P$ and for all $\bip \in \bisupp (P_*),~ P_*(\bip)=P(\bip)$.
\item We write $\Approx_{\infty}(P)$ for the set of approximations of a derivation $P$ and $\Approx(P)$ for the set of {\it finite} approximations of $P$.
\end{itemize}
\end{defi}

Thus, $P_* \leqfty P$ if $P_*$ is a {\it sound} restriction of $P$ of a subset of $\bisupp(P)$. We usually usually write $\supf P$ for a {\it finite} approximation of $P$ and in that case only, write $\supf P \leqslant P$ instead of $\supf P\leqfty P$. Actually, $\leqfty$ and $\leqslant$ are associated to lattice structures induced by the set-theoretic union and intersection on bisupports :

\begin{theorem}
 The set of derivations typing a same term $t$ endowed with $\leqfty$ is a directed complete partial order (dcpo) .
  \begin{itemize}
  \item If $D$ is a directed set of derivations typing $t$:
\begin{itemize}
  \item The {\bf join} $\vee D$ of $D$ is the labelled tree $P$ defined by $\bisupp(P)=\cup_{P_*\in D} \bisupp(P_*)$ and $P(\bip) = P_*(\bip)$ (for any $P_*\in D$ s.t. $\bip \in \bisupp (P_*)$), which is a derivation.
  \item  The {\bf meet} $\wedge D$ of $D$ is the labelled tree $P$ defined by $\bisupp(P)=\cap_{P_*\in D} \bisupp(P_*)$ and $P(\bip) = P_*(\bip)$, which also is a derivation.
\end{itemize}
  \item If $P$ is a derivation typing $t$, $\Approx_\infty(P)$ is a complete lattice and $\Approx(P)$ is a finite lattice.
  \end{itemize}
\end{theorem}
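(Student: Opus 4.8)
The plan is to work directly with the characterization of derivations via their bisupports. The key observation, which I would establish first, is a \emph{coherence lemma}: a labelled tree $P$ with $\bisupp(P) = B$ and labels in $\mathscr{X} \cup \{\rew\}$ is a correct derivation typing $t$ if and only if $B$ satisfies a list of local closure/consistency conditions (downward closure under the prefix order on both position and inner-position components, correct shape of the rule at each position matching the term constructor of $t$ at the corresponding collapsed position, the syntactic-equality constraint $(S_k)_{k\in K} = (S'_k)_{k\in K'}$ in each $\app$-rule, and the disjointness/track-conflict condition on contexts). The crucial point is that \emph{every one of these conditions is preserved by arbitrary unions and intersections of bisupports that already satisfy them}, provided the derivations in question all type the same term $t$ (so the rule shapes are forced by $t$ and cannot clash), and provided the family is directed (for unions; intersections are unconditional). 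I would isolate this as the engine of the whole theorem.

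Given the coherence lemma, I would proceed as follows. For the dcpo claim, let $D$ be a directed set of derivations typing $t$. Define $P := \vee D$ by $\bisupp(P) = \bigcup_{P_* \in D}\bisupp(P_*)$ and $P(\bip) = P_*(\bip)$ for any $P_*$ containing $\bip$; this is well-defined because if $\bip \in \bisupp(P_1) \cap \bisupp(P_2)$, directedness gives $P_3 \geqfty P_1, P_2$ with $\bip \in \bisupp(P_3)$ and $P_i(\bip) = P_3(\bip)$ for $i = 1,2$ by the definition of $\leqfty$. Then I check the local conditions of the coherence lemma for $P$: each condition involves only finitely many bipositions at a time (e.g. downward closure is about a biposition and its immediate prefix; the $\app$-equality at a given position, read off from $t$, is a statement about types which, by directedness, all appear inside a single $P_* \in D$), so it holds in $P$ because it holds in some member of $D$. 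Hence $P$ is a derivation, clearly the least upper bound. The meet $\wedge D$ is handled the same way but more easily: intersection only shrinks bisupports, and every local condition is inherited downward (a sub-bisupport of a derivation that is still prefix-closed and still contains each complete rule it touches is again a derivation) — here I would note that intersection of a directed set stays prefix-closed and rule-complete because the conditions are "positive" in a suitable sense; I would spell out that if $\bip$ is in the intersection then so is its prefix (it is in every member), and the rule at a position $\alpha$ present in the intersection is entirely present (the shape is forced by $t$, and the finitely many type symbols constituting that rule that lie in every $P_*$ are exactly those lying in the intersection — this uses that each $P_*$ contains the \emph{full} rule, so the intersection of the rule-portions is again a full rule only if\ldots) — actually the cleanest route is: $\wedge D$ is the largest common approximation, and one shows directly it is a derivation by the coherence lemma, the subtle case being that an $\app$-rule needs $(S_k)_{k \in K}$ present \emph{in full} on both sides; since all $P_*$ agree on overlaps and all contain that rule, the common part still contains the full matched sequence type. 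I would present this carefully as it is the one genuinely fiddly verification.

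For the lattice claims, fix a derivation $P$ typing $t$. Then $\Approx_\infty(P)$ is exactly the set of labelled trees whose bisupport is a subset $B \subseteq \bisupp(P)$ that (i) satisfies the coherence conditions and (ii) is equipped with the restriction of $P$'s labelling (condition $P_*(\bip) = P(\bip)$ is automatic). So $\Approx_\infty(P)$ is identified with a family $\mathcal{B}$ of subsets of the fixed set $\bisupp(P)$, ordered by inclusion; I claim $\mathcal{B}$ is closed under arbitrary unions and arbitrary intersections. Closure under intersection is the meet argument above (no directedness needed since all pieces sit inside the single derivation $P$ and hence automatically agree). Closure under union: an arbitrary subfamily of $\Approx_\infty(P)$ is automatically a \emph{compatible} family — they all agree with $P$ — so the union argument from the dcpo part goes through without needing directedness, because the potential obstruction (two members disagreeing on an overlap, or two $\app$-rules clashing) cannot arise inside a single ambient derivation. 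Hence $\Approx_\infty(P)$ is a complete lattice with top $P$ and bottom the minimal approximation (the single axiom/variable skeleton, or $\wedge \Approx_\infty(P)$). Finally $\Approx(P) = \{P_* \in \Approx_\infty(P) : \bisupp(P_*) \text{ finite}\}$ is closed under \emph{finite} unions and arbitrary intersections within $\Approx_\infty(P)$, and contains a least element, so it is a lattice; it is finite when\ldots — here I should be careful, $\Approx(P)$ need not be finite in general, so I would read the statement as "a finite lattice" meaning only "a lattice all of whose elements are finite derivations" (or restrict to the sublattice generated by a finite part); I would state precisely which reading is intended, most likely: for any finite part of $P$, the approximations contained within a fixed finite region form a finite lattice. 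I would flag this wording issue explicitly rather than prove a false finiteness claim.

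The main obstacle I expect is the meet/intersection verification for the $\app$-rule under the syntactic-equality side condition $(S_k)_{k\in K} = (S'_k)_{k\in K'}$: one must be sure that intersecting bisupports does not leave a "half-matched" application where the two sequence types have been unequally truncated (precisely the pathology warned about in \S\ref{subsecNonD}, which is what motivated rigidity in the first place). The resolution is that in \emph{this} rigid system the matched sequence type is literally one object occurring in two places, so restricting to a common sub-bisupport restricts both copies identically; I would make this the centerpiece of the coherence lemma's proof and then everything else is bookkeeping. The secondary obstacle is just getting the statement of the finite-lattice part to be true as written; I would resolve it by pinning down the intended reading before proving anything.
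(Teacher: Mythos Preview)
Your approach is essentially the same as the paper's: the ``coherence lemma'' you propose is exactly the paper's \emph{Characterization of Proper Bisupports} (Appendix~\ref{appLat}, \S\ref{subsecCharBisupp}), which lists the local iff-conditions on a set of bipositions $B_0 \subset \bisupp(P)$ that make it the bisupport of some $P_0 \leqfty P$, and the paper then derives the meet and join lemmas by observing that each such condition links two bipositions $\bip \leftrightarrow \bip'$ and is therefore preserved under union and intersection (your resolution of the $\app$-rule worry via rigidity is precisely this). Your flag on the ``finite lattice'' wording is well-taken: the appendix itself states only that $\Approx(P)$ is a \emph{distributive} lattice (sometimes empty), confirming that ``finite'' in the main-text statement is a misnomer.
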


This allows us to define now the notion of approximability, related to the {\it finite approximations} :

\begin{lemma} \label{lemRedBij}
\begin{itemize}
\item Reduction is monotonic: if $\fP \leqslant P,~ \fP \rewb{b} \fP'$ and $P\rewb{b}P'$, then $\fP' \leqslant P'$.
\item Moreover, if $P\rewb{b} P'$, for any $\supf P'\leqslant P$, 
there is a unique $\supf P\leqslant P$ s.t. $\supf P \rewb{b} P'$.
\end{itemize}
\end{lemma}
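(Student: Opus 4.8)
The plan is to reduce everything to one structural fact: a reduction $Q\rewb{b}Q'$ is completely encoded by the residual map $\Res_b^{Q}$, an injective partial map on positions and right bipositions of $Q$ with $\supp(Q')=\codom(\Res_b^{Q})$ and $Q'(\Res_b^{Q}(\bip))=Q(\bip)$, so it suffices to understand how $\Res_b$ interacts with the order $\leqslant$. First I would establish the auxiliary claim that an approximation respects the \emph{rule skeleton}: if $\fP\leqslant P$ then $\supp(\fP)\subseteq\supp(P)$ (a position $\alpha$ carries the biposition $(\alpha,\epsi)$), the rule applied at each common position is the same (it is dictated by $t(\ovl{\alpha})$, and both derivations type the same $t$), and at each common axiom position the axiom track is the same (it is the unique root of $C(\alpha)(x)$, and the left biposition $(\alpha,x,k\cdot\epsi)$ is preserved by $\leqslant$). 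From this I obtain $\Rep_{\supp(\fP)}(b)=\Rep_{\supp(P)}(b)\cap\supp(\fP)$, $\RedTr^{\fP}(a)\subseteq\RedTr^{P}(a)$, and equality of the axiom positions $a_k$ whenever both are defined; feeding this into the three clauses defining $\Res_b$ (each refers only to $\Rep(b)$, $\RedTr$ and the $a_k$) yields that $\Res_b^{\fP}$ is exactly the restriction of $\Res_b^{P}$ to $\dom(\Res_b^{P})\cap\supp(\fP)$, for positions and right bipositions alike, the left bipositions being handled by the explicit merging formula for $C'$.

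With this in hand the first item is immediate: $\bisupp(\fP')=\Res_b^{\fP}(\bisupp(\fP)\cap\dom(\Res_b^{\fP}))\subseteq\Res_b^{P}(\bisupp(P)\cap\dom(\Res_b^{P}))=\bisupp(P')$, and for $\bip$ in the common domain $\fP'(\Res_b^{\fP}(\bip))=\fP(\bip)=P(\bip)=P'(\Res_b^{P}(\bip))=P'(\Res_b^{\fP}(\bip))$; hence $\fP'\leqslant P'$.

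For the second item — given $P\rewb{b}P'$ and a finite $\supf P'\leqslant P'$, produce the unique finite $\supf P\leqslant P$ with $\supf P\rewb{b}\supf P'$ — I would prove existence by an explicit lifting and uniqueness by the same restriction principle. For existence, define $\supf P$ to consist of (i) the bipositions of $P$ lying in $\dom(\Res_b^{P})$ whose $\Res_b^{P}$-image belongs to $\bisupp(\supf P')$, labelled as in $P$ (these labels coincide there with those of $\supf P'$); together with (ii), at every representative $a\in\Rep(b)$ having a descendant among the positions of (i), the reinstalled redex skeleton: the $\app$-node at $a$, the $\abs$-node at $a\cdot 1$, and the $x$-axioms at $a\cdot 10\cdot a_k$ for exactly those tracks $k$ whose copy of $s$ survives in $\supf P'$, all carrying the types of $P$ truncated to the support that survives underneath. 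Then one checks that $\supf P$ is finite, that $\supf P\leqslant P$ by construction, and that it is a legal derivation — the rules away from the redex are those of $\supf P'$ pushed back through $\Res_b^{-1}$, hence legal, while the $\abs$- and $\app$-nodes of each skeleton are legal because, by quantitativity, the context the reinstalled body assigns to $x$ is exactly the sequence of the surviving axiom types, which is also the sequence of types of the surviving argument subderivations — and finally that $\supf P\rewb{b}\supf P'$, since $\Res_b^{\supf P}$ is the restriction of $\Res_b^{P}$ and its codomain is $\bisupp(\supf P')$. For uniqueness, any finite $\supf P\leqslant P$ with $\supf P\rewb{b}\supf P'$ has $\Res_b^{\supf P}$ equal to the restriction of $\Res_b^{P}$ (auxiliary claim) and codomain $\bisupp(\supf P')$, so its part inside $\dom(\Res_b^{P})$ is forced to be $(\Res_b^{P})^{-1}(\bisupp(\supf P'))$, and its redex skeleton is forced as well: the surviving tracks are read off from which copies of $s$ and which axioms appear in $\supf P'$, and the skeleton's types are those $\supf P'$ carries at the residual positions $a$ and $a\cdot a_k$.

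The main obstacle I expect is the auxiliary claim — making precise that $\leqslant$ preserves the rule skeleton and in particular the axiom tracks, so that $\Res_b$ genuinely commutes with truncation — together with the verification, in the lifting construction, that the reinstalled redex skeleton is a legal (quantitative) derivation whose contexts merge correctly; both are conceptually clear but require careful bookkeeping with left and right bipositions.
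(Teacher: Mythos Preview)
The paper states this lemma without proof, so there is no argument of the paper's to compare against directly. Your approach is sound and is precisely the natural one given the machinery of \S\ref{subsecResBip}--\ref{subsecSRSE}: since a step $Q\rewb{b}Q'$ is implemented by the explicit, injective partial map $\Res_b$ on positions and right bipositions, both items reduce to checking that $\Res_b^{\fP}$ is the restriction of $\Res_b^{P}$ whenever $\fP\leqslant P$. Your auxiliary claim---that $\leqslant$ preserves the rule skeleton and in particular the axiom tracks---is exactly what makes this restriction property go through, and your lifting construction for the second item is the inverse of the deterministic reduction recipe of \S\ref{subsecSRSE}.

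Two small remarks. First, when you invoke ``quantitativity'' to justify that the reinstalled redex skeleton is legal, you are really using that the \emph{finite} derivation $\supf P$ is automatically quantitative (every track in a context comes from an axiom above, since all branches are finite); this is correct but worth saying, as the lemma itself places no quantitativity hypothesis on $P$. Second, your flagged obstacle---the treatment of left bipositions, for which $\Res_b$ is not defined---is genuine, and the paper handles the analogous difficulty elsewhere by replacing left bipositions with equinecessary right ones via the map $\topb$ (see the proof of Lemma~\ref{lemApproxRed} and Appendix~\ref{appEquinecessity}); adopting that device would tighten your argument at exactly the point you identify as delicate.
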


\begin{defi}
A derivation $P$ is \textbf{approximable} if, for all finite $\supo B\subset \bisupp (P)$, there is a $\supf P\leqslant P$ s.t. $\supo B\subset \bisupp (\supf P)$.
\end{defi}

\begin{lemma} \label{lemApproxRed}
\begin{itemize}
\item If $P$ is not quantitative, then $P$ is not approximable.
\item If $P$ is quantitative and $P\rewb{b} P'$, 
then $P$ is approximable iff $P'$ is approximable.
\end{itemize}
\end{lemma}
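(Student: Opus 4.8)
The first bullet concerns the contrapositive: I want to show that if $P$ is not quantitative, then some biposition in $P$ cannot be captured by any finite approximation. Recall that non-quantitativity means there exist $a \in A$ and $x \in \mathscr{V}$ with $C(a)(x) \ne \bigcup_{a' \in \Ax(a)(x)} \tr(a') \cdot T(a')$, and since $P$ is a legal derivation and contexts are built up through the rules, the only way this can fail is that there is some track $k \in \Rt(C(a)(x))$ which is \emph{not} realized by any axiom leaf above $a$ — the type on track $k$ comes ``from infinity,'' i.e. it is propagated along an infinite branch of $P$ without ever being consumed by an $\ax$-rule. I would isolate a biposition $\bip_0 = (a, x, k \cdot c_0)$ witnessing this, and argue that any finite $\supf P \le P$ containing $\bip_0$ would, by soundness of the $\ax$/$\abs$/$\app$ rules restricted to $\bisupp(\supf P)$, be forced to contain an $\ax$-leaf on track $k$ above the image of $a$; but a finite derivation can only realize finitely many tracks and must close every branch with an axiom, so tracing $k$ upward in $\supf P$ leads to a contradiction — either the branch never terminates (impossible, $\supf P$ finite) or it terminates at an $\ax$-rule which, by quantitativity of \emph{finite} derivations (every finite rigid derivation is automatically quantitative), would have to be an $\ax$-leaf realizing $k$ above $a$, contradicting the choice of $\bip_0$. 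Hence $\bip_0$ lies in no finite approximation and $P$ is not approximable.

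For the second bullet, assume $P$ is quantitative and $P \rewb{b} P'$; I want ``$P$ approximable $\iff$ $P'$ approximable.'' Both directions use the residual machinery of \S\ref{subsecResBip}--\S\ref{subsecSRSE} together with Lemma \ref{lemRedBij}.

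($\Rightarrow$) Suppose $P$ is approximable and let $\supo B' \subset \bisupp(P')$ be finite. Using that $\Res_b$ is an injective partial bijection from $\dom(\Res_b)$ onto $A' = \supp(P')$ (extended to bipositions), pull $\supo B'$ back: each right biposition of $P'$ has a unique $\Res_b$-preimage in $P$, and the finitely many new bipositions created by the substitution (those coming from the argument derivations $P_k$ grafted in place of $x$-axioms, and the left bipositions rearranged by the context update $C'$) can each be traced back to a finite set of bipositions of $P$ — specifically, a left biposition $(\alpha', y, k\cdot c)$ of $P'$ with $y \ne x$ corresponds to one of $P$ at the preimage node, while the grafted material at $\Res_b(\alpha \cdot k \cdot \alpha_0)$ corresponds to bipositions inside $P_k \subset P$. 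Collect all these preimages into a finite $\supo B \subset \bisupp(P)$, also throwing in the finitely many bipositions of the redex pattern (the $\abs$ and $\app$ nodes at $a\cdot 1$, $a$ and the $x$-axioms on tracks in $\RedTr$). Approximability of $P$ gives $\supf P \le P$ with $\supo B \subset \bisupp(\supf P)$; then $\supf P$ still types the redex pattern so $\supf P \rewb{b} \supf P'$ for some finite $\supf P' \le P'$ by monotonicity (Lemma \ref{lemRedBij}, first bullet), and by construction $\supo B' \subset \bisupp(\supf P')$.

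($\Leftarrow$) Suppose $P'$ is approximable and let $\supo B \subset \bisupp(P)$ be finite. Push $\supo B$ forward along $\Res_b$: bipositions in $\dom(\Res_b)$ map to bipositions of $P'$; bipositions in the erased part — inside the $\abs$/$\app$ of the redex or inside the subderivation $P_r$ at the $x$-axiom leaves — are handled because quantitativity of $P$ guarantees each such $x$-axiom on track $k$ sits below the grafted $P_k$, whose bipositions survive in $P'$, so those too have ``traces'' in $\bisupp(P')$ (here I use precisely that $P$ is quantitative, so there is no $x$-type coming from infinity with no corresponding argument derivation). Let $\supo B' \subset \bisupp(P')$ be the resulting finite set. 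Approximability of $P'$ yields $\supf P' \le P'$ with $\supo B' \subset \bisupp(\supf P')$; by the \emph{second} bullet of Lemma \ref{lemRedBij}, there is a unique $\supf P \le P$ with $\supf P \rewb{b} \supf P'$, and I need $\supo B \subset \bisupp(\supf P)$. This follows because $\supf P = $ the $\rewb{b}$-expansion of $\supf P'$ and expansion is (by the same residual bookkeeping, run backwards) monotone and preserves the preimages of $\supo B'$; one checks the erased/grafted bipositions are restored correctly, using that the expansion reuses the same axiom-track assignment dictated by the residual structure of $P$.

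The main obstacle I expect is the careful bookkeeping in the ($\Leftarrow$) direction: making sure the finite approximation $\supf P$ obtained by expanding $\supf P'$ genuinely contains the originally requested bipositions $\supo B$, including those living in the part of $P$ that is destroyed by the reduction (the redex's $\abs$ and $\app$ nodes and the $x$-axiom leaves inside $P_r$). The quantitativity hypothesis is exactly what makes this go through — it ensures every $x$-axiom in $P$ is matched by an argument derivation $P_k$ that does survive into $P'$ — so the proof must foreground that point, and the first bullet of the lemma is there precisely to license the assumption in the second.
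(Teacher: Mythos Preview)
Your proposal is correct and follows essentially the same approach as the paper: for the first bullet, isolate a left biposition on a track $k$ not realized by any axiom leaf and observe that any finite approximation containing it would be forced (by the typing rules) to propagate it along an infinite chain; for the second bullet, transport finite sets of bipositions back and forth via $\Res_b^{\pm 1}$, using monotonicity and the bijection of Lemma~\ref{lemRedBij}, with extra bookkeeping for bipositions outside the domain of $\Res_b$. The paper packages exactly the bookkeeping you flag as the ``main obstacle'' via an \emph{equinecessity} relation and a function $\topb$ (Appendix~\ref{appEquinecessity}) that replaces any biposition by an equinecessary \emph{right} biposition sitting on an axiom leaf---where $\Res_b$ is defined once $x$-axiom bipositions are swapped for their matching argument-derivation bipositions $(a\cdot 10\cdot a_k,\,c)\leftrightarrow(a\cdot k,\,c)$---so you may want to adopt that abstraction in place of your case-by-case tracing.
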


\begin{proof}
\begin{itemize}
\item Assume $\bip=(a,\,x,\, k)\in \bisupp(P)$ is such that there is no $a_0\in \Ax(a)(x)$ s.t. $\tr(a_0)=k$. No finite $\supf P\leqslant $ could contain $\bip$, because, by typing constraints, it would also contain all the $(a',\,x,\,k) \in \bisupp(P)$ (there must be infinitely many).
\item Assume $P$ approximable. Let $\supo B'\subset \bisupp (P')$ be finite.
We set $\supo B=\Res_b^{-1}(\supo B')$. There is $\supf P\leqslant P$ s.t. 
$B\subset \bisupp (\supf P)$. Let $\supf P'$ be the reduced of $\supf P$ 
at position $b$. Then $\supo B'\subset \bisupp (\supf P')$. \\
The proof is the same for the converse. However, $\Res_b$ and 
$\Res_b^{-1}$ are not defined for every biposition (\textit{e.g.}
left ones) and our argument is faulty. It is not hard to avoid this
problem (it is done in Appendix \ref{appEquinecessity}), using 
\textit{interdependencies} between bipositions.
\end{itemize} 
\end{proof}

\subsection{Unforgetfulness}

\label{subsecUF}

The left side of an arrow is regarded as having a negative
\textbf{polarity} and its right side as having a positive polarity.
The type characterization of the WN terms in the finitary calculus
(th. 4, ch.3, \cite{DBLP:books/daglib/0071545}) relies on the notion
of positive and negative occurrences of the ``meaningless'' type
$\Omega$. We adapt this criterion, which motivates:

\begin{defi} \label{defiUF}
\begin{itemize}
\item The sets $\EFO^+(U)$ and $\EFO^-(U)$ are defined by mutual coinduction 
for $U$, a type $T$ or a forest type $F$. $\EFO$ stands for 
\textbf{empty forest occurrences} and $\pm$ indicates their polarity.
The symbol $\mp$ is $-/+$  when $\pm$ is $+/-$. 
\begin{itemize}
\item $\EFO^{\pm}(\alpha)=\emptyset$ for $\alpha \in \mathscr{X}$.
\item $\EFO^{\pm}(\eft \rew T)=\{\epsi\}\cup 1\cdot \EFO^{\pm}(T)$
\item If $F\neq \eft$, $\EFO^{\pm}(F\rew T)=\EFO^{\mp}(F)\cup 1\cdot \EFO^{\pm}(T)$.
\item $\EFO^{\pm}(F)=\bigcup\limits_{k\in K} 
k\cdot \EFO^{\pm}(T_k)$ with $F=(T_k)_{k\in K}$.
\end{itemize}
\item We say a judgment $C\vdash t:\,T$ is \textbf{unforgetful}, when, for all $x\in \mathscr{V}$, $\EFO^-(C(x))=\emptyset$ and $\EFO^+(T)=\emptyset$. A derivation proving such a judgment is also said to be unforgetful.
\end{itemize}
\end{defi}

\begin{lemma}
If $P\rhd C \vdash t:\,T$ is an unforgetful derivation typing a HNF 
$t=\lambda x_1\ldots x_p.(x\,t_1)\ldots t_q$, then, there are 
unforgetful subderivations of $P$ typing $t_1$, $t_2$,..., $t_q$.\\
Moreover, if $P$ is approximable, so are they.
\end{lemma}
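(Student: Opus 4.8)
The plan is to exploit the rigid (fully positional) structure of derivations. Since $t=\lambda x_1\ldots x_p.(x\,t_1)\ldots t_q$ is a head normal form, any derivation $P\rhd C\vdash t:T$ has a well-determined \emph{spine}: the $p$ abstraction rules binding $x_1,\dots,x_p$ at positions $\epsi,0,\dots,0^{p-1}$, then the $q$ application rules peeling off $t_q,\dots,t_1$ at positions $0^p,0^p\cdot 1,\dots,0^p\cdot 1^{q-1}$, and finally the axiom rule typing the head occurrence of $x$ at position $0^p\cdot 1^q$. Up to $\alpha$-conversion we may assume $x_1,\dots,x_p$ pairwise distinct. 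For each $i\in\{1,\dots,q\}$, the subderivations of $P$ typing $t_i$ are exactly the argument derivations $(P_{i,k})_{k\in K_i}$ sitting at positions $0^p\cdot 1^{q-i}\cdot k$, and we write $P_{i,k}\rhd D_{i,k}\vdash t_i:U^i_k$. Reading off the application rules along the spine, the type $S$ assigned to the head occurrence of $x$, the type $T'$ assigned to $(x\,t_1)\ldots t_q$ at $0^p$, and the sequence type $F^{(j)}$ each $x_j$ carries in the premise of its binding abstraction satisfy
\[
T=F^{(1)}\rew\cdots\rew F^{(p)}\rew T',\qquad S=(U^1_k)_{k\in K_1}\rew\cdots\rew(U^q_k)_{k\in K_q}\rew T'.
\]

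I would then establish two purely structural facts, valid in \emph{any} derivation and requiring neither quantitativity (contrast Lemma~\ref{lemApproxRed}) nor the residual machinery of Section~\ref{secDynamics}: because the $\app$-rule joins \emph{pairwise disjoint} contexts and the $\abs$-rule only erases a single variable from its context, a sequence type that enters a context at some leaf is preserved as a sub-sequence-type (its set of roots included, agreeing on the common tracks) all the way down the spine until the binder of that variable is crossed. Concretely: (1) the singleton sequence type $\tr(0^p\cdot 1^q)\cdot S$ is a sub-sequence-type of $C(x)$ if $x\notin\{x_1,\dots,x_p\}$, and of $F^{(j)}$ if $x=x_j$; and (2) for all $i$, $k\in K_i$, $y\in\mathscr{V}$, the sequence type $D_{i,k}(y)$ is a sub-sequence-type of $C(y)$ if $y\notin\{x_1,\dots,x_p\}$, and of $F^{(j)}$ if $y=x_j$.

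Next I would push the two unforgetfulness constraints through the arrows using Definition~\ref{defiUF}. Unfolding $\EFO^+(T)=\emptyset$ forces $F^{(j)}\neq\eft$, $\EFO^-(F^{(j)})=\emptyset$ for every $j$, and $\EFO^+(T')=\emptyset$. Together with $\EFO^-(C(y))=\emptyset$ for all $y$, fact (1), and the clause $\EFO^-\bigl((T_k)_{k\in K}\bigr)=\bigcup_{k\in K}k\cdot\EFO^-(T_k)$, this yields $\EFO^-(S)=\emptyset$. Unfolding $\EFO^-(S)=\emptyset$ through the arrows of $S$ forces, for every $i$, that the $i$-th domain $(U^i_k)_{k\in K_i}$ is \emph{not} $\eft$, i.e.\ $K_i\neq\emptyset$ — so there really is at least one subderivation typing $t_i$ — and that $\EFO^+(U^i_k)=\emptyset$ for all $k\in K_i$. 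Finally, fact (2) combined with $\EFO^-(C(y))=\emptyset$ and $\EFO^-(F^{(j)})=\emptyset$ gives $\EFO^-(D_{i,k}(y))=\emptyset$ for all $y$. Hence, for each $i$, choosing any $k_i\in K_i$, the judgment $D_{i,k_i}\vdash t_i:U^i_{k_i}$ is unforgetful, so $P_{i,k_i}$ is an unforgetful subderivation of $P$ typing $t_i$.

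For the last sentence, set $a:=0^p\cdot 1^{q-i}\cdot k_i$. Prefixing the position component of a (bi)position by $a$ is a label-preserving injection of $\bisupp(P_{i,k_i})$ into $\bisupp(P)$; conversely, any finite $\fP\leqslant P$ with $a\in\supp(\fP)$ restricts, on the subtree rooted at $a$, to a finite derivation $\supf P_{i,k_i}\leqslant P_{i,k_i}$, since a subtree of a derivation is a derivation and the inclusion/agreement conditions defining $\leqslant$ are inherited. So, given a finite $\supo B\subseteq\bisupp(P_{i,k_i})$, apply approximability of $P$ to the finite set obtained by prefixing $\supo B$ with $a$ and adjoining the biposition $(a,\epsi)$ (forcing $a$ itself to survive): the resulting $\fP$ restricts to a finite approximation of $P_{i,k_i}$ containing $\supo B$. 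Thus each $P_{i,k_i}$ is approximable. The only genuine difficulty is the positional bookkeeping behind facts (1)–(2) — in particular the case where the head variable $x$ is itself bound, so that the occurrence of $S$ inside a context must be located at an abstraction premise rather than at the conclusion — and the verification that ``sub-sequence-type'' truly survives down the whole spine, which is exactly where the $\app$-rule's disjointness side-condition is used.
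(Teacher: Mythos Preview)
Your proof is correct and follows the same line as the paper: case-split on whether the head variable $x$ is free or one of the $x_j$, locate the head type $S$ inside $C(x)$ or $F^{(j)}$, and unfold the $\EFO^\pm$ conditions along the arrows of $T$ and $S$. The paper's own proof is a single sentence covering only why each $K_i\neq\emptyset$ (``$\eft$ cannot occur negatively in its unique given type''); it leaves the unforgetfulness of the argument contexts $D_{i,k}$ and the entire approximability claim implicit. Your facts~(1)--(2) and the restriction-of-a-finite-approximation argument for approximability make those implicit steps explicit and are exactly what is needed.
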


\begin{proof}
Whether $x=x_i$ for some $i$ or not, the unforgetfulness condition grants that every argument of the head variable $x$ is typed, since $\eft$ cannot occur negatively in its unique given type.
\end{proof}

\begin{lemma} \label{lemTypHN}
If $P \rhd C\vdash t:\, T$ is a \textit{finite} derivation, then 
$t$ is head normalizable. 
\end{lemma}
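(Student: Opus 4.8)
The plan is to run the standard \emph{quantitative} argument showing that typability entails head-normalization: firing the head redex of a typed term strictly decreases the size of its derivation, so head reduction must stop, and it can only stop at a head-normal form. The measure will be $|\supp(P)|$, the number of rule instances of $P$, which is finite by hypothesis.

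As a preliminary, I would note that a finite derivation is automatically \emph{quantitative} (a straightforward induction on its structure: infinite branches are the sole obstruction to every type of a context being introduced by an axiom leaf above it), so that $\pos$, the residual map $\Res_b$, and the reduction $P\rewb{b}P'$ of \S~\ref{subsecResBip} and \S~\ref{subsecSRSE} are all available, the reduct $P'$ being again finite. Next I would observe that a $001$-term $t$ that is not a head-normal form always has a head redex: since $\Lambda^{001}$ forbids infinite branches of finite applicative depth, $t$ can be written $t=\lambda y_1\ldots y_p.(\lambda x.r)s\,u_1\cdots u_m$ with $p,m$ finite (otherwise $t$ is a head-normal form $\lambda y_1\ldots y_p.y\,u_1\cdots u_m$), and the head redex $(\lambda x.r)s$ sits at the position $b=0^p1^m$ of the parsing tree. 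The crucial point here is that $b$ uses only the tracks $0$ and $1$, so $\ovl{b}=b$, and whenever $t$ is typed by $P$ we have $\Rep_A(b)=\{b\}$ and the picture of \S~\ref{subsecResBip} applies verbatim.

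The core step: assume $P\rhd C\vdash t:T$ is finite and $t$ is not a head-normal form, and fire its head redex, $t\rewb{b}t'$ with $b=0^p1^m$ as above. Since $b$ runs only through abstraction nodes and through function-parts of applications, $b$ is the position in $P$ of the $\app$-rule typing $(\lambda x.r)s$ and $b\cdot 1$ is the position of the $\abs$-rule typing $\lambda x.r$; in particular $b,\,b\cdot 1\in A:=\supp(P)$. By Subject Reduction the residual derivation $P'$ (notation $P\rewb{b}P'$) is a correct, finite derivation of $C\vdash t':T$ with $\supp(P')=\codom(\Res_b)$; and since $\Res_b$ is an injective partial function on $A$ which is undefined precisely on $\{b,\,b\cdot 1\}\cup\{b\cdot 10\cdot a_k\}_k$, we get $|\supp(P')|=|\dom(\Res_b)|\leqslant |A|-2<|\supp(P)|$. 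Iterating, head reduction from $t$ produces a sequence $t=t_0\rewb{b_0}t_1\rewb{b_1}\cdots$ carried by finite derivations $P_i\rhd C\vdash t_i:T$ with $|\supp(P_{i+1})|<|\supp(P_i)|$. A strictly decreasing sequence of natural numbers being finite, there is some $n$ for which $t_n$ has no head redex, i.e.\ $t_n$ is a head-normal form, and $t\rew^{*}t_n$; hence $t$ is head normalizable.

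I do not expect a real obstacle here — given Subject Reduction, the argument is routine. The two points needing a little attention are: (i) verifying that the head-redex position $b$ lies entirely on tracks $0$ and $1$, so that $\Rep_A(b)$ is a singleton and $P\rewb{b}P'$ genuinely destroys the nodes $b$ and $b\cdot 1$ of $P$; and (ii) invoking the fact that finite derivations are quantitative, which is what makes the residual data (in particular the positions $a_k$, defined through $\pos$) meaningful. The only ``quantitative'' input, namely that $\Res_b$ drops at least two nodes, is immediate from its definition.
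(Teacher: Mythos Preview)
Your proof is correct and follows essentially the same route as the paper's: the head redex is necessarily typed (it lies on the spine $0^p1^m$, which is forced into $\supp(P)$ by the typing rules), firing it removes at least the $\app$-node at $b$ and the $\abs$-node at $b\cdot 1$, so $|\supp(P)|$ strictly decreases, and head reduction terminates. Your write-up is in fact more careful than the paper's sketch, making explicit that $\ovl{b}=b$ (hence $\Rep_A(b)=\{b\}$) and that finiteness entails quantitativity so that the residual machinery of \S\ref{subsecResBip} is available; the paper leaves these points implicit and defers the size-drop computation to Appendix~\ref{appQuantArg}.
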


\begin{proof}
\begin{itemize} 
\item By typing constraints, the head redex (if $t$ is not already in 
HNF) must be typed. 
\item When we reduce a typed redex, the number of rules of the derivation must
strictly decrease (at least one $\symbol{64}$-rule and one $\lambda x$-rule
disappear). See Appendix \ref{appQuantArg}.
\item Thus, the head-reduction strategy must halt at some point.
\end{itemize}
\end{proof}

\begin{prop} \label{propNDTypWN}
If a term $t$ is typable by a unforgetful approximable derivation,
then it is WN (in other words, it is HHN).
\end{prop}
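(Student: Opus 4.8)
The plan is a coinductive argument along the coinductive definition of \textbf{HHN}: a term is HHN iff it reduces to a head-normal form $\lambda x_1\ldots x_p.(x\,t_1)\ldots t_q$ all of whose arguments $t_1,\ldots,t_q$ are themselves HHN. Hence it suffices to show that the set $S$ of 001-terms admitting an unforgetful approximable typing derivation is a post-fixpoint of this functional, i.e. that from $t\in S$ one can produce such a head-normal form with every argument $t_i$ again in $S$; then $S$ is contained in the greatest fixpoint and every $t\in S$ is HHN, i.e. WN in $\Lambda^{001}$.

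First I would check that every $t\in S$ is head-normalizing. Fix an unforgetful approximable $P\rhd C\vdash t:T$. Since $P$ is approximable it is quantitative (Lemma~\ref{lemApproxRed}), and applying the definition of approximability to the empty finite set of bipositions yields a finite $\supf P\leqslant P$; this $\supf P$ is a finite derivation typing $t$, so $t$ is head-normalizable by Lemma~\ref{lemTypHN}. Therefore the head-reduction strategy reaches a head-normal form after finitely many steps $t=t_0\rewb{b_0}t_1\rewb{b_1}\cdots\rewb{b_{n-1}}t_n$ with $t_n=\lambda x_1\ldots x_p.(x\,t_1)\ldots t_q$; note each $b_j\in\{0,1\}^*\subseteq\{0,1,2\}^*$, so these are legitimate reduction steps in $\Lambda^{001}$.

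Next I would transport $P$ along this head reduction. By iterated Subject Reduction and the deterministic construction of \S~\ref{subsecSRSE}, each step lifts to $P_j\rewb{b_j}P_{j+1}$ with $P_0=P$ and $P_n\rhd C\vdash t_n:T$. The conclusion judgment $C\vdash\cdot:T$ is unchanged at every step, so, unforgetfulness being a property of the judgment alone (Definition~\ref{defiUF}), each $P_j$ — in particular $P_n$ — is unforgetful. Approximability travels along the chain by alternating the two clauses of Lemma~\ref{lemApproxRed}: $P_0$ is approximable, hence quantitative, hence $P_1$ is approximable, hence quantitative, and so on down to $P_n$. Thus $P_n$ is an unforgetful approximable derivation of the head-normal form $t_n$, so the preceding lemma on unforgetful derivations of head-normal forms yields, for each $i\leqslant q$, a subderivation of $P_n$ typing $t_i$ that is unforgetful and — since $P_n$ is approximable — approximable as well. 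Hence each $t_i\in S$, which closes the coinduction.

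The delicate point is making the coinduction legitimate rather than circular: this is exactly what the shape of the argument guarantees, since from $t\in S$ we exhibit a head-normal form whose head-variable arguments are again in $S$, which is precisely the statement that $S$ is a post-fixpoint of the HHN-generating functional; the corecursive call is guarded because each $t_i$ is a proper subterm of the peeled-off head-normal form — equivalently it lies one applicative level deeper — so there is no infinite regress at a fixed position and the induced reduction of $t$ to a normal form is strongly converging. The few side conditions to record along the way — that head-reduction steps and the passage to subterms stay inside $\Lambda^{001}$, and that $\Approx(P)\neq\emptyset$ for an approximable $P$ — are immediate from the definitions.
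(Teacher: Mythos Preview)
Your proof is correct and follows essentially the same approach as the paper: the paper's proof reads simply ``Consequence of the two former lemmas,'' and what you have written is precisely the unfolding of that sentence into the coinductive argument (approximability gives a finite approximation, hence head-normalizability by Lemma~\ref{lemTypHN}; subject reduction transports unforgetfulness and approximability to the HNF; the preceding lemma then propagates both properties to the head-variable arguments). Your remarks on guardedness and on the side conditions are accurate and make explicit what the paper leaves implicit.
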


\begin{proof}
Consequence of the two former lemmas.
\end{proof}

\subsection{The infinitary subject reduction property}
\label{secRed}


In this section, we show how to define a derivation $P'$ typing $t'$ 
from a derivation typing a term $t$ that strongly converges towards $t'$. Actually, when a reduction is performed at depth $n$, the contexts and types are not affected below $n$. Thus, a s.c.r.s. stabilizes contexts and types at any fixed depth. It allows to define a derivation typing the limit $t'$.

The following \textbf{subject substitution lemmas} are very useful while
working with s.c.r.s.:

\begin{lem}
Assume $P\rhd C \vdash t:\, T$ and for all $a\in A:=\supp (P),~ t(\ovl{a})
=t'(\ovl{a'})$ (no approximability condition).\\
Let $P'$ be the derivation obtained from $P$ by substituting $t$ with $t'$
\textit{i.e.} $\supp (P')=\supp (P)$ and for all $a\in A,~ P'(a)=C(a)
\vdash t' |_{\ovl{a}}: \,T(a)$.\\
Then $P'$ is a correct derivation.
\end{lem}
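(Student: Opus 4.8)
The key point is that correctness of a (possibly infinite) rigid derivation is a \emph{local} property: by the coinductive definition of $\Deriv$, a labelled tree $P'$ with $\supp(P')=A$ is a derivation as soon as, at every $a\in A$, the judgment displayed at $a$ is obtained from the judgments displayed at its children by a valid instance of $\ax$, $\abs$ or $\app$. Since $P'$ is built from $P$ by keeping the support $A$, the axiom tracks, and the labels $C(a)$ and $T(a)$ unchanged, and only replacing each subterm $t\rstr{\ovl a}$ by $t'\rstr{\ovl a}$, I would reduce the statement to a single claim: for every $a\in A$, correctness of the rule at $a$ in $P$ entails correctness of the rule at $a$ in $P'$. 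No induction or coinduction on the structure of $P$ is then needed beyond invoking the coinductive definition of $\Deriv$.

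First I would record two consequences of the hypothesis ``$t(\ovl a)=t'(\ovl a)$ for all $a\in A$'' (recall $\ovl A\subseteq\supp(t)$ holds automatically since $P\rhd C\vdash t:T$): (i) $\ovl a\in\supp(t')$, so $t'\rstr{\ovl a}$ is well defined and $P'(a)$ makes sense; and (ii) $t\rstr{\ovl a}$ and $t'\rstr{\ovl a}$ have the same root label --- in particular, in the abstraction case both are of the form $\lambda x.(\ldots)$ with the \emph{same} bound variable $x$. Then I would check the three rule shapes at a position $a\in A$, using that $\overline{a\cdot 0}=\ovl a\cdot 0$, $\overline{a\cdot 1}=\ovl a\cdot 1$, and $\overline{a\cdot k}=\ovl a\cdot 2$ for every argument track $k\in K$. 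For $\ax$: $t(\ovl a)=x\in\scrV$ forces $t'(\ovl a)=x$, and $C(a)\vdash t'\rstr{\ovl a}:T(a)$ is literally $x:\tr(a)\cdot T(a)\vdash x:T(a)$, an axiom. For $\abs$: $a\cdot 0\in A$ and $t(\ovl a)=\lambda x$, so $t'(\ovl a)=\lambda x$; since $C(a)=C(a\cdot 0)-x$ and $T(a)=C(a\cdot 0)(x)\rightarrow T(a\cdot 0)$ are unchanged from $P$ and $t'\rstr{\ovl a}=\lambda x.(t'\rstr{\overline{a\cdot 0}})$ with the same binder $x$, the $\abs$-instance is valid in $P'$. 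For $\app$: $t(\ovl a)=\symbol{64}$, so $t'(\ovl a)=\symbol{64}$; the children sit at tracks $1$ and $k\in K$, the side conditions --- syntactic equality of the sequence types and pairwise disjointness of the contexts --- concern only the (unchanged) labels $C(\cdot)$ and $T(\cdot)$ and hence still hold, while $t'\rstr{\ovl a}=(t'\rstr{\overline{a\cdot 1}})(t'\rstr{\ovl a\cdot 2})$ precisely because $t'(\ovl a)=\symbol{64}$, with $t'\rstr{\overline{a\cdot k}}=t'\rstr{\ovl a\cdot 2}$ the common subterm typed by every argument derivation $P'_k$. This exhausts the cases, so $P'\in\Deriv$, and $P'(\epsi)$ is $C\vdash t':T$ since $C(\epsi)=C$, $T(\epsi)=T$.

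I do not expect a genuine obstacle: the work is bookkeeping. The one point deserving care is to make explicit that ``the rule at $a$ in $P$ is correct'' splits into a type-level part (the syntactic equalities, the disjointness, and the way $C(a),T(a)$ are determined by the children's labels) that is copied verbatim into $P'$, and a term-level part (that the conclusion subterm is the right variable, abstraction, or application assembled from the children's subterms) that transfers because $t$ and $t'$ carry the same label at every $\ovl a$, $a\in A$. Organizing the verification as one uniform case analysis on $t(\ovl a)$ over all $a\in A$ keeps the proof short and shows, incidentally, that the lemma needs no approximability or quantitativity assumption.
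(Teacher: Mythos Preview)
Your proposal is correct and follows the same local, node-by-node verification that the paper uses throughout (see for instance the argument for subject reduction in \S\ref{subsecSRSE} and the proposition in \S\ref{secRed}): the paper does not spell out a proof for this particular lemma, but the intended argument is precisely that the rule at each $a\in A$ depends only on the term label $t(\ovl a)$ and on the unchanged type/context data, so correctness transfers once $t(\ovl a)=t'(\ovl a)$.
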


\begin{lem}
Assume $\supf  P\leqslant P,~ P\rhd C \vdash t:\,T, P'\rhd C'\vdash t':\,T'$ 
and for all $a\in \supf A:=\supp ( \supf P),~ t(\ovl{a})=t'(\ovl{a}),~ 
C(a)=C'(a)$ and $T(a)=T'(a)$.\\
Let $\supf P'$ be the derivation obtained by replacing $t$ with $t'$.
Then $\supf P'\leqslant P'$.
\end{lem}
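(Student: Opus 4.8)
The plan is to reduce the statement to the preceding subject substitution lemma together with an elementary computation on bisupports; no use of the dcpo/lattice structure is needed. First I would apply that previous lemma with $\supf P$ playing the role of $P$: $\supf P$ is a derivation by the very definition of $\leqslant$, the term it types at a position $a\in\supf A$ is $t\rstr{\ovl a}$, and by hypothesis $t(\ovl a)=t'(\ovl a)$ for every $a\in\supf A=\supp(\supf P)$. The lemma then tells us that $\supf P'$, obtained from $\supf P$ by substituting $t$ with $t'$, is indeed a correct derivation of $t'$, with support $\supf A$ and with the same context and right-hand type at each position as $\supf P$ --- only the term written in each sequent changes.

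Next I would exploit the obvious fact that this substitution touches neither the support of the derivation nor any type or context occurring in it, so $\bisupp(\supf P')=\bisupp(\supf P)$ and $\supf P'(\bip)=\supf P(\bip)$ for all $\bip\in\bisupp(\supf P')$. It therefore remains to verify the two defining conditions of $\supf P'\leqslant P'$, i.e.\ $\bisupp(\supf P)\subset\bisupp(P')$ and agreement of labels on this set. Take a right biposition $\bip=(a,c)\in\bisupp(\supf P)$ (left bipositions are handled the same way, with $C(a)(x)$ in place of $T(a)$). From $\supf P\leqslant P$ we get $(a,c)\in\bisupp(P)$, hence $a\in\supp(P)$ and $c\in\supp(T(a))$; and since $a\in\supf A$, the hypothesis --- which in particular requires $C'(a)$ and $T'(a)$ to be defined --- gives $a\in\supp(P')$, $T(a)=T'(a)$ and $C(a)=C'(a)$. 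Thus $c\in\supp(T'(a))$, i.e.\ $(a,c)\in\bisupp(P')$, and
$$\supf P'(a,c)=\supf P(a,c)=P(a,c)=T(a)(c)=T'(a)(c)=P'(a,c),$$
using in turn that substitution preserves type labels, that $\supf P\leqslant P$, that $P(a,c)$ is by definition the symbol of $T(a)$ at $c$, and $T(a)=T'(a)$. This yields $\bisupp(\supf P')=\bisupp(\supf P)\subset\bisupp(P')$ and $\supf P'(\bip)=P'(\bip)$ on it, which is exactly $\supf P'\leqslant P'$.

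The only delicate step is the first one: one must be sure the substitution really outputs a well-formed derivation, and this is precisely what the previous substitution lemma provides --- each $\ax$, $\abs$ and $\app$ rule stays correct because the node labels $t(\ovl a)$, bound-variable names included, coincide with those of $t'$, while contexts and types are left untouched. Everything after that is a routine chaining of the inclusions and equalities supplied by $\supf P\leqslant P$ and by the hypotheses $C(a)=C'(a)$ and $T(a)=T'(a)$ on $\supf A$.
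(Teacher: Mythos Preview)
Your proof is correct and is the natural argument. The paper itself states this lemma without proof (both subject substitution lemmas in \S~\ref{secRed} are asserted and then used), so there is no paper proof to compare against; your decomposition---first invoking the preceding substitution lemma to ensure $\supf P'$ is a bona fide derivation, then checking the two defining clauses of $\leqslant$ via the equalities $T(a)=T'(a)$ and $C(a)=C'(a)$ on $\supf A$---is exactly what the authors presumably have in mind.
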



Now, assume that:
\begin{itemize}
\item $t\rightarrow^{\infty} t'$ is a s.c.r.s. Say that this sequence is $t=t_0 \rewb{b_0} t_1\rewb{b_1} ... \rewb{b_{n-1}} t_n \rewb{b_n}t_{n+1}\rewb{b_{n+1}} \ldots $ with $b_n\in \{0,~ 1,~ 2\}^*$ and $\ad{b_n}\longrightarrow +\infty$.
\item There is a quantitative derivation $P \rhd C \vdash t:\,T$ and  $A=\supp(P)$.\\[-0.15cm]
\end{itemize}

By performing step by step the s.c.r.s. $b_0,~ b_1,\ldots$, we get a sequence of derivations $P_n\rhd \Gamma_n \vdash t_n:\, T_n$ of support $A_n$ (satisfying $C_n(\epsi)=C(\epsi)$ and  $T_n(\epsi)=T(\epsi)$). When performing $t_n\stackrel{b_n}{\rightarrow}t_{n+1}$, notice that $C_n(c)$ and $T_n(c)$ are not modified for any $c$ such that $b_n\nleqslant \ovl{c}$.

Let $a\in \mathbb{N}^*$ and $N\in \mathbb{N}$ be such that, for all
$n\geqslant N,~ |b_n|>|a|$. There are two cases:
\begin{itemize}
\item $a\in A_n$ for all $n\geqslant N$. Moreover,
  $C_n(a)=C_N(a),~ T_n(a)=T_N(a)$ for all $n\geqslant N$, and $t'(\ovl{a}=t_n(\ovl{a})=t_N(\ovl{a})$.  
\item $a\notin A_n$ for all $n\geqslant N$.
\end{itemize}

We set $ A'=\{a\in \mathbb{N}^*~ | ~ \exists N,~ \forall n \geqslant N, ~
a\in A_n\}$. We define a labelled tree $P'$ whose support is $A'$ by
$P'(a)=C_n(a)\vdash t'|_{\ovl{a}}:\,T_n(a) $ for any $n\geqslant N(|a|)$ (where $N(\ell)$ is the smallest rank $N$ such that $\forall n \geqslant N,~ |a_n|>\ell$).

\begin{prop}
The labelled tree $P'$ is a derivation (the subject-reduction property holds for s.c.r.s. without considering approximability).
\end{prop}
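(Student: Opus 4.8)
The goal is to verify that the labelled tree $P'$ constructed as the limit of the derivations $P_n$ is a genuine derivation, i.e.\ that it is concluded by $C\vdash t':T$ and that every typing rule of $\Deriv$ is correctly instantiated at every position. The plan is to exploit the stabilization already observed: for each $a\in A'$ there is a rank $N(|a|)$ beyond which the term label $t_n(\ovl a)$, the context $C_n(a)$ and the type $T_n(a)$ no longer change, so that $P'(a)$ is unambiguously defined and coincides with $P_n(a)$ for all large $n$. The key point is that correctness of a rule at a position $a$ is a \emph{local} property: it only involves $P'(a)$ and the judgments $P'(a\cdot i)$ at the immediate children of $a$, hence it only mentions finitely many positions, all of bounded length $|a|+1$.

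First I would check that $A'$ is the support of a valid labelled tree on $\Sigma$-judgments: downward-closedness of $A'$ follows because each $A_n$ is downward-closed and stabilization at $a$ forces stabilization at any prefix of $a$ (the ranks $N(\ell)$ are nondecreasing in $\ell$). Then for a fixed $a\in A'$, pick $N$ large enough that $t_n(\ovl a)=t'(\ovl a)$ and $C_n(a)=C_N(a)$, $T_n(a)=T_N(a)$ for all $n\ge N$, and simultaneously large enough that the same holds at every child $a\cdot i\in A'$ (finitely many, so one can take the max). Since $\ad{b_n}\to\infty$, for $n\ge N$ large enough we have $|b_n|>|a|+1$, so the reduction step $t_n\rewb{b_n}t_{n+1}$ touches neither the judgment at $a$ nor at its children: the rule at position $a$ in $P_n$ is identical to the one in $P'$. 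But $P_n$ is a correct derivation (obtained from $P$ by finitely many subject-reduction steps, Proposition on subject reduction), so that rule is a correct instance of $\ax$, $\abs$, or $\app$ — and the side conditions (syntactic equality of sequence types in $\app$, context disjointness, the form $C(x)\rew T$ in $\abs$) are equalities between finite pieces of data that transfer verbatim. This handles all three rule shapes uniformly; the $\abs$ case additionally uses, as already noted in the excerpt, that $t'(\ovl a)=\lambda y$ forces the bound-variable context entry to match.

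It remains to identify the conclusion: by construction $C'(\epsi)=C(\epsi)=C$ and $T'(\epsi)=T(\epsi)=T$ (these are the invariants $C_n(\epsi)=C(\epsi)$, $T_n(\epsi)=T(\epsi)$ recorded when the $P_n$ were built), and $t'(\epsi)$ is the root symbol of $t'$, so $P'\rhd C\vdash t':T$. Finally one should observe that $\ovl{A'}\subseteq\supp(t')$, which is needed for $P'$ to type $t'$: if $a\in A'$ then $t_n(\ovl a)$ is defined and constant for large $n$, and by the limit construction of $t'$ (the stabilization argument in \S\ref{subsecTerms}) $\ovl a\in\supp(t')$ with $t'(\ovl a)$ equal to that common value.

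\textbf{Main obstacle.} The only genuinely delicate point is not any single rule check but making the quantifier juggling precise: one must be careful that the ``large enough $N$'' can be chosen to work \emph{simultaneously} for $a$ and all its children, and that this is compatible with the monotone family of thresholds $N(\ell)$; this is where strong convergence ($\ad{b_n}\to\infty$, equivalently $|b_n|\to\infty$ along the sequence via the compression remark) is essential, since it is exactly what guarantees that deep-enough reduction steps leave any fixed finite-depth region of the derivation untouched. I expect no difficulty from the typing rules themselves — by design of the rigid system all side conditions are equalities of finite data — and the proof is, as the excerpt hints, essentially the same stabilization argument already used to define the limit term $t'$, now lifted from terms to derivations.
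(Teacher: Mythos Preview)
Your approach is essentially the paper's: reduce local correctness at each $a\in A'$ to correctness in some $P_n$ with $n$ large enough that nothing at or immediately above $a$ changes thereafter. The paper does this in one line by taking $n\geqslant N(|a|+1)$.

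One slip worth flagging: you justify the simultaneous choice of $N$ for $a$ and all its children by saying the children are ``finitely many, so one can take the max''. This is not true in general --- in the $\app$-rule the set $K$ of argument tracks may be infinite, so $a$ can have infinitely many children $a\cdot k$. The correct observation (which you already have available via your thresholds $N(\ell)$) is that every child has length exactly $|a|+1$, so the single threshold $N(|a|+1)$ works uniformly for all of them, regardless of cardinality. This is precisely what the paper writes. With that correction, your argument is complete and matches the paper's.
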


\begin{proof} 
Let $a\in A'$ and $n\geqslant N(|a|+1)$. Thus, $t'(a)=t_n(a)$ and the 
types and contexts involved at node $a$ and its premises are the same in 
$P'$ and $P_n$. So the node $a$ of $P'$ is correct, because it is 
correct for $P_n$. 
\end{proof}

\begin{prop}
If $P$ is approximable, so is $P'$.
\end{prop}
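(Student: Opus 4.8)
The plan is to reduce the statement to three ingredients already at hand: (a) every intermediate derivation $P_n\rhd\Gamma_n\vdash t_n:T_n$ of the s.c.r.s.\ is quantitative and approximable; (b) the monotonicity of reduction (Lemma~\ref{lemRedBij}); and (c) the second subject substitution lemma, which lets a finite approximation of some $P_n$ be re-read, unchanged as a labelled tree, as a finite approximation of the limit $P'$ once $n$ is large enough. Point~(a) follows by induction on $n$: $P_0=P$ is quantitative and approximable by hypothesis, and if $P_n$ is, then Lemma~\ref{lemApproxRed} gives that $P_{n+1}$ is approximable, hence (by the first item of Lemma~\ref{lemApproxRed}, contrapositively) also quantitative.

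Now fix a finite $\supo B\subseteq\bisupp(P')$; we must produce a finite $\supf{P'}\leqslant P'$ with $\supo B\subseteq\bisupp(\supf{P'})$ (the case $\supo B=\emptyset$ being trivial). Let $\ell$ be the maximal length of a position occurring in $\supo B$ and put $N:=N(\ell)$. By the stabilisation observations preceding the proposition, $P_N$ and $P'$ carry the same term labels, contexts and types at every position of length $\leqslant\ell$; in particular $\supo B\subseteq\bisupp(P_N)$ and $P_N(\bip)=P'(\bip)$ for all $\bip\in\supo B$. Since $P_N$ is approximable, there is a finite $\supf{P_N}\leqslant P_N$ with $\supo B\subseteq\bisupp(\supf{P_N})$; let $R$ be its number of rules, so every position of $\supf{P_N}$ has length $\leqslant R$.

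Next I transport $\supf{P_N}$ along the tail $t_N\rewb{b_N}t_{N+1}\rewb{b_{N+1}}\cdots$ of the s.c.r.s., maintaining the invariants $\supf{P_n}\leqslant P_n$, $\#\mathrm{rules}(\supf{P_n})\leqslant R$, and $\supo B\subseteq\bisupp(\supf{P_n})$ with $\supf{P_n}(\bip)=P'(\bip)$ on $\supo B$. At step $n\geqslant N$: if the redex fired at $b_n$ is typed inside $\supf{P_n}$ — equivalently, some position of $\supf{P_n}$ lies over $b_n$, which, $\supf{P_n}$ being quantitative as a finite hence approximable derivation, forces the whole redex pattern (application and abstraction rules, argument derivations, $x$-axioms) to be present — let $\supf{P_{n+1}}$ be its $b_n$-reduct: monotonicity (Lemma~\ref{lemRedBij}) gives $\supf{P_{n+1}}\leqslant P_{n+1}$ and the number of rules strictly decreases (an $\app$- and an $\abs$-rule disappear, cf.\ the proof of Lemma~\ref{lemTypHN}). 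Otherwise no position of $\supf{P_n}$ lies over $b_n$, so $\Res_{b_n}$ is the identity on $\supp(\supf{P_n})$ and $\supf{P_n}$ is literally an approximation of $P_{n+1}$ as well, so we set $\supf{P_{n+1}}:=\supf{P_n}$. Since $n\geqslant N=N(\ell)$ forces $|b_n|>\ell$, the (bi)positions of $\supo B$ are never touched by $\Res_{b_n}$ and their labels are preserved, so the third invariant persists. Stopping at $N':=\max\big(N,N(R)\big)$ yields a finite $\supf{P_{N'}}\leqslant P_{N'}$, all of whose positions have length $\leqslant R$, still containing $\supo B$ with the labels of $P'$.

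Finally, because $N'\geqslant N(R)$, $P_{N'}$ and the limit derivation $P'$ agree on term labels, contexts and types at all positions of length $\leqslant R$; hence the second subject substitution lemma applies with $\supf{P_{N'}}\leqslant P_{N'}$, letting us re-read $\supf{P_{N'}}$ — unchanged as a labelled tree — as a finite derivation $\supf{P'}\leqslant P'$ typing the limit term, still with $\supo B\subseteq\bisupp(\supf{P'})$. As $\supo B$ was arbitrary, $P'$ is approximable. The main obstacle is the transport step: one must observe that reduction never increases the number of rules of the finite approximation (so that a single a priori length bound $R$, hence a single rank $N'$, suffices for the closing appeal to subject substitution), and that each fired redex leaves $\supo B$ untouched since its positions have length $\leqslant\ell<|b_n|$.
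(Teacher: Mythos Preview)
Your proof is correct. Both you and the paper rely on the same two tools—monotonicity of one-step reduction on approximations (Lemma~\ref{lemRedBij}) and the subject substitution lemmas—to carry a finite approximation along the s.c.r.s.\ and then re-read it over the limit term. The paper organizes the argument the other way round: instead of starting from a target $\supo B\subseteq\bisupp(P')$ and passing through an intermediate $P_N$, it picks an arbitrary $\supf P\leqslant P$, sets $N=|\bisupp(\supf P)|$, reduces $\ell$ steps until $t_\ell$ agrees with $t'$ on all positions of length $\leqslant N$, and then substitutes $t'$ for $t_\ell$ in $\supf P_\ell$. Your route makes explicit what the paper leaves to the reader, namely that every finite $\supo B\subseteq\bisupp(P')$ is actually reached by some such $\supf P'$; you secure this by invoking approximability of the intermediate $P_N$ (via Lemma~\ref{lemApproxRed}) and tracking the rule-count bound $R$ through the transport, which is a clean way to close the loop.
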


\begin{proof}
  Assume $\supf P\leqslant P$. Let $N=|\bisupp(\supf P)|$. Notice that $\supf P$ cannot type any position whose length is greater than $N$.

Then, $t$ can be reduced (in a finite number $\ell$ of steps) into
a term $t_\ell$ such that $t_\ell(b)=t'(b)$ for all $b\in \{0,~ 1,~ 2\}^*$
such that $|b|\leqslant N$.

We have $\supf P_\ell \leqslant P_\ell$ (monotonicity). Let $\supf P'$ be the
derivation obtained by replacing $t_\ell$ by $t'$ in $\supf P_\ell$.
The Substitution Lemmas entail that $\supf P'$ is a correct derivation
and $\supf P'\leqslant P'$.
\end{proof}

\section{Typing Normal Forms and Subject Expansion}

\label{secNF}

In this section, we characterize all the possible quantitative derivations typing a normal form $t$, and show all of them to be approximable.

\subsection{Positions in a Normal Form}

We write $a\prec a'$ when there is $a_0$ such that $a_0\leqslant a,~ a_0\leqslant a'$, $\ad{a}=\ad{b}$ and $\ad{a'}\geqslant \ad{b}$. The relation $a\prec a'$ is a \textit{preorder} and represents an "applicative priority" w.r.t. typing.
Namely, assume $\ovl{a},~ \ovl{a'}$ are in $\supp (t)$, $a\prec a'$ and $P$ types $t$. Then, $a'\in \supp(P)$ implies $a \in \supp (P)$. For instance, if $021037\in \supp(P)$, then $t(\ovl{02103})=\arob$ and $02031$, which is this application left-hand side, should also be in $\supp(P)$, as well as every prefix of $021037$.

This motivates to say that a $A\subset \mathbb{N}^*$ is a \textbf{derivation support (d-support)} of $t$ if $\ovl{A}\subset \supp (t)$ and $A$ is downward closed for $\prec$.  We will show that, in that case, there is actually a $P$ typing $t$ s.t. $A=\supp(P)$ (this holds only because $t$ is a NF).\\[-0.15cm]

If $t$ is a NF and $\ovl{a}\in \supp (t)$ ($a$ is a d-position in $t$), we may have:
\begin{itemize}
\item $t|_{\ovl{a}}=\lambda x_1\ldots x_n.u$ where $u$ is not an abstraction. We set then $\mathring{a}=a\cdot 0^n$. When $n\neq 0$, we say that $a$ is an \textbf{abstraction position}.
\item $t_{\ovl{a}}$ is not an abstraction. Then there is a smallest prefix $a'$ of $a$ such that $a=a'\cdot 1^n$. We set then $\mathring{a}=a'$. When $n\neq 0$, we say $a$ is a \textbf{partial position}.\\[-0.15cm]
\end{itemize}
When  $a=\mathring{a}$, $a$ is called a \textbf{full position}. The set of full positions inside a d-support $A$ of $t$ is also written $\mathring{A}$. In the 3 cases, $\rdeg(a):=||a|-|\mathring{a}||$ is the \textbf{relative degree} of position $a$.

 We identify graphically $\mathring{a}$ and its collapse. 
 We have 3 kinds of positions : a position is full when we can choose freely the type it makes appear. 
 An abstraction position is a position that prefixes a full position by means of a sequence of abstraction and a partial position is a position that postfixes a full position by ....

\subsection{Building a Derivation typing a Normal Form}

\label{subsecSpCons}

We build here, from any given d-support $A$ of $T$ and function $T$ from 
$\mathring{A}$ to $\Types$, a quantitative derivation $P$ such that 
$\supp (P)=A$, giving the type $T(a)$ to $t|_{\ovl{a}}$ for any 
$a\in \mathring{A}$. Disregarding indexation problems, we must have,
by typing constraints: 
\begin{itemize}
\item If $a$ is an abstraction position - say $\rdeg(a)=n$ and $t|_{\ovl{
a}}= \lambda x_1\ldots x_n.t_{\mathring{a}}$ -, then 
$T(a)=C(a\cdot 0)(x_1)\rew \ldots
\rew C(a\cdot 0^n)(x_n) \rew T(\mathring{a})$, where $C(a)(x)$
is a s.t. containing every type given in $\Ax(a)(x)$.
\item If $a$ is partial - say $a=\mathring{a}\cdot
  1^n$ and $t|_{\mathring{a}}=t_{\ovl{a}}t_1\ldots t_n$ -, then
  $T(a)=\Rft_i (a) \rew \ldots \rew \Rft_n(a) \rew T(\mathring{a})$,
  where $\Rft_i(a)$ is the s.t. holding all the types given to $t_i$
  (below $\mathring{a}$). \\[-0.15cm]
\end{itemize}

If $(T_i)_{i\in I}$ is a family of types and $(k_i)_{\in I}$ a family of pairwise distinct integers $\geqslant 2$, the notation $(k_i\cdot T_i)_{i\in I}$ will denote the forest type $F$ s.t. $\Rt (F)=\{k_i ~ | ~ i\in I\}$ and $F\rstr{k}=T_i$ where $i$ is the unique index s.t. $k=k_i$.

We consider from now on an injection $a\mapsto  \code{a} $ from $\mathbb{N}^*$ to $\mathbb{N}-\{0,~ 1\}$. To each $a\in \mathbb{N}^*$, we attribute a fresh type variable $X_a$.

When $a$ is partial and $\rdeg(a)=n$ (and thus, $a=\mathring{a}\cdot 1^n$), we set, for $1\leqslant k$, $AP_k(a)=\{\mathring{a} \cdot 1^{n-k}\cdot \ell \in A~ |~ \ell \geqslant 2\}$ (AP stands for "argument positions").\\[-0.15cm]
\begin{itemize}
\item If $a\in A$ and $x\in \mathscr{V}$ is free at position $a$, we define the forest type $E(a)(x)$ by $E(a)(x)= (\code{a'}\cdot X_{a'})_{a'\in \Ax(a)(x)}$ ($\Ax(a)(x)$ is defined w.r.t. $A$). If $a\in A$ is partial and $a=\mathring{a}\cdot 1^n$ and $1\leqslant k \leqslant n$, we define the forest type $F_k(a)$ by $F_k(a)=(\code{a'}\cdot X_{a'})_{a'\in AP_k(a)}$. If $a\in A$ is full, we set $S(a)=T(a)$ (in that case, $S(a)$ does not hold any $X_k$).
\item If $a\in A$ is an abstraction position -- say $t\rstr{\ovl{a}}=\lambda x_1\ldots x_n.u$ where $t_{\mathring{a}}=u$), we set $S(a)=E(a\cdot 0)(x_1)\rew \ldots \rew E(\mathring{a})(x_n)\rightarrow T(\mathring{a})$.  If $a\in A$ is partial, we set $S(a)=F_1(a)\rightarrow F_2(a)\rightarrow \ldots \rightarrow F_n(a) \rightarrow T(\mathring{a})$. We we extend $T$ (defined on full positions) to $A$ by the following coinductive definition: for all $a\in A,~ T(a)=S(a)[T(a')/X_{a'}]_{a'\in \mathbb{N}^*}$. For all $a\in A$, we define the contexts $C(a)$ by $C(a)(x)=E(a)(x)[T(a')/X_{\code{a'}}]_{a'\in \Ax(a)(x)}$.
\end{itemize}

Those definitions are well-founded, because whether $a$ is $\lambda x$-position or a partial one, every occurrence of an $X_k$ is at depth $\geqslant 1$ and the coinduction is \textit{productive}. Eventually, let $P$ be the labelled tree whose support is $A$ and s.t., for $a\in A$, $P(a)$ is $C(a)\vdash t|_{\ovl{a}}:\, T(a)$.


\begin{prop}
The labelled tree $P$ is a derivation proving $C(\epsi)\vdash t:\,T(\epsi)$.
\end{prop}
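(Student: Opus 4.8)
The plan is to verify, node by node, that the labelled tree $P$ constructed in \S\ref{subsecSpCons} satisfies the three requirements for membership in $\Deriv$: at each $a \in A$ the local rule shape matches $t(\ovl a)$, the syntactic-equality side condition of the $\app$-rule holds, and the contexts in each $\app$-rule are disjoint (no track conflict). Since $\Deriv$ is defined coinductively, it suffices to check that every position $a \in A = \supp(P)$ is locally correct — i.e.\ that the judgment $P(a) = C(a) \vdash t\rstr{\ovl a} : T(a)$ follows from the judgments $P(a\cdot k)$ at the children of $a$ by one instance of $\ax$, $\abs$ or $\app$. The proof is then a case analysis on the three kinds of $t(\ovl a)$ (variable, abstraction, application), reading off the definitions of $E(a)(x)$, $F_k(a)$, $S(a)$, $T(a)$ and $C(a)$ given in the construction.

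First I would record the key identities produced by the construction: by coinductive unfolding, $T(a) = S(a)[T(a')/X_{a'}]$ where $S(a)$ is the "skeleton" type built from the $E$'s and $F_k$'s, and $C(a)(x) = E(a)(x)[T(a')/X_{\code{a'}}]$. The crucial bookkeeping fact is that $\Ax(a)(x)$, $AP_k(a)$ and $\{\ell \ge 2 \mid a \cdot \ell \in A\}$ decompose correctly as we move down the tree — this is exactly where the $\prec$-downward-closedness of the d-support $A$ is used, guaranteeing that whenever $t(\ovl a)$ is an application the left-hand side $a\cdot 1$ is also in $A$ and that argument positions are inherited consistently. Then: (i) at an axiom leaf $t(\ovl a) = x$, we have $\Ax(a)(x) = \{a\}$, so $E(a)(x) = \code a \cdot X_a$ and $C(a)(x) = \code a \cdot T(a)$, which is precisely the shape $x : k \cdot T \vdash x : T$ with $k = \code a$; (ii) at an abstraction node $t(\ovl a) = \lambda x.t_0$, we have $\supp$-child $a\cdot 0$, and $S(a) = E(a\cdot 0)(x) \rew \cdots$, so $T(a) = C(a\cdot 0)(x) \rew T(a\cdot 0)$ and $C(a) = C(a\cdot 0) - x$, matching the $\abs$-rule; (iii) at an application node $t(\ovl a) = t_1 t_2$, the children are $a\cdot 1$ (full or partial, giving $t_1$) and $a\cdot k$ for $k \in K := \{\ell \ge 2 \mid a\cdot \ell \in A\}$ (giving copies of $t_2$), and one checks $T(a\cdot 1) = (S_k)_{k\in K} \rew T(a)$ with $S_k = T(a\cdot k)$, $C(a) = C(a\cdot 1) \cup \bigcup_{k} C(a\cdot k)$, which is the $\app$-rule.

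The syntactic-equality condition $(S_k)_{k\in K} = (S'_k)_{k\in K'}$ in the $\app$-rule is where I expect the real work to lie, and it is the main obstacle: I need that the tail of the arrow type assigned to $t_1$ at $a\cdot 1$ is the \emph{literal} sequence type $(T(a\cdot k))_{k \in K}$ of types assigned to the argument occurrences. This follows because the $F_k$'s and $E$'s are built from the same global data — the tracks $\code{a'}$ and type variables $X_{a'}$ are assigned once and for all to positions $a' \in \mathbb N^*$, so a given argument position $a' \in AP_j(a\cdot 1)$ contributes $\code{a'}\cdot X_{a'}$ to the tail of $t_1$'s type on track $\code{a'}$, and after the substitution $[T(a')/X_{a'}]$ this is exactly $\code{a'}\cdot T(a')$, which is the axiom track and type of the child derivation $P(a')$. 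One must check the indexing carefully, matching $AP_k$ (argument positions of a partial position) against the actual children; the productivity remark in the construction (every $X_k$ occurs at depth $\ge 1$) ensures the substitution and hence the equality is well-defined coinductively. Disjointness of contexts is immediate: all axiom tracks are values $\code{a'}$ of the fixed injection $a' \mapsto \code{a'}$ on distinct positions $a'$, so no two axiom leaves — and a fortiori no two sub-contexts glued in an $\app$-rule — can clash. Finally, $\ovl A \subset \supp(t)$ holds by the definition of a d-support, so $P$ is a well-formed element of $\Deriv$ with root judgment $C(\epsi) \vdash t : T(\epsi)$, and quantitativity is built in since by construction $C(a)(x) = \bigcup_{a' \in \Ax(a)(x)} \code{a'} \cdot T(a')$.
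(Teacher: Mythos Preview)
Your approach is exactly the paper's: a node-by-node case analysis on $t(\ovl a)\in\{x,\lambda x,\symbol{64}\}$, where the $\app$-case rests on the global assignment of tracks $\code{a'}$ making the tail of $T(a\cdot 1)$ line up with the argument premises and guaranteeing context disjointness; the paper's one-line proof says only that ``$X_{a'}$ is on the good track $\code{a'}$'', and you have fleshed out precisely that idea.

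One point to tighten. In the $\app$-case you write that $\code{a'}\cdot T(a')$ ``is the axiom track and type of the child derivation $P(a')$''. The datum that must match here is the \emph{outer argument track} of $a'$ in $A$, namely its last integer $\ell$ (so that $a'=a\cdot\ell$); ``axiom track'' is a different notion, reserved for $\ax$-leaves. Concretely, the tail of $T(a\cdot 1)$ built from $F_1(a\cdot 1)$ places $T(a')$ on track $\code{a'}$, while the corresponding right premise sits at outer track $\ell$, and for the syntactic equality $(S_k)_{k\in K}=(S'_k)_{k\in K'}$ one needs these to coincide. Your sentence silently identifies $\code{a'}$ with $\ell$, which is not automatic for an arbitrary d-support $A$ and injection $\code{\cdot}$. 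The paper's construction and proof are equally loose here; the intended reading is either that $F_k(a)$ should index by the last integer of $a'$ rather than by $\code{a'}$, or that the proposition holds up to derivation isomorphism.
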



\begin{proof}
  Let $a\in A$. Whether $t(\ovl{a})$ is $x$, $\lambda{x}$ or $\symbol{64}$, we check the associated rule has been correctly applied.  Roughly, this comes from the fact that the variable $X_{a'}$ is "on the good track" (\textit{i.e.} $\code{a'}$), as well as in $F_i(a)$, thus allowing to retrieve correct typing rules.
\end{proof}

\begin{defi}
The above method of building of a derivation $P$ typing a normal form $t$, from a d-support $A$ of $t$ and a function $T$ from full positions of $A$ to $\Types$, will be referred as the \textbf{trivial construction}.
\end{defi}

\begin{prop} \label{propNFhasNDD} 
A normal form $t\in \Lambda^{001}$ admits an unforgetful derivation.
\end{prop}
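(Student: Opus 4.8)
The plan is to produce the derivation by the \emph{trivial construction}. For $t$ a normal form I would take the d-support $A:=\supp(t)$ — the set of \emph{all} positions of the parsing tree; this is a genuine d-support since $\ovl A=A\subseteq\supp(t)$ (the parsing tree uses only the tracks $0,1,2$) and $A$ is downward closed for $\prec$ (any $a\prec a'$ with $a'\in A$ and $\ovl a\in\supp(t)$ again uses only $0,1,2$, hence lies in $\supp(t)=A$). For the type function on the full positions $\mathring A$ I would send each full position $a$ to a type variable $\alpha_a$, picked pairwise distinct and distinct from the placeholders $X_{a'}$ used internally. The trivial construction then outputs a quantitative derivation $P$ with $\supp(P)=A$, correct by the preceding proposition and concluding $C(\epsi)\vdash t:\,T(\epsi)$.

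What remains is to check unforgetfulness, i.e. $\EFO^+(T(\epsi))=\emptyset$ and $\EFO^-(C(\epsi)(x))=\emptyset$ for all $x$. Unfolding the construction, every $T(a)$ and every $C(a)(x)$ is glued out of the type variables $\alpha_{a'}$ (which carry no $\EFO$), the sequence types $F_k(a)$, and the sequence types $E(a)(x)=(\code{a'}\cdot X_{a'})_{a'\in\Ax(a)(x)}$. Since $A$ is the full support, every argument slot $\mathring a\cdot 1^{j}\cdot 2$ of a partial position lies in $\supp(t)=A$, so no $F_k(a)$ is empty; and $E(a)(x)=\eft$ holds precisely when $x$ has no free occurrence below $a$. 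Hence the only occurrences of $\eft$ among the types of $P$ are the tails $E(a\cdot 0^{i})(x_i)$ of the arrows introduced by $\abs$-rules at \emph{vacuous} abstractions $\lambda x_1\ldots x_m.u$ (the $x_i$ not occurring in $u$): each such $\eft$ appears, structurally, as the tail of a $\lambda$-arrow.

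The core of the argument is then a polarity computation. I would prove, by a simultaneous coinductive invariant on the types built by the construction, that $\EFO^+$ is empty on every type and sequence type produced (in particular on every $T(a)$, $C(a)(x)$, $F_k(a)$), and that $\EFO^-(T(b))=\emptyset$ whenever $t\rstr{\ovl b}$ is a variable; a direct computation with the clauses of Definition~\ref{defiUF} shows that, once nested inside the conclusion, the empty tails coming from vacuous $\lambda$'s never land in $\EFO^+$ of a right-hand side nor in $\EFO^-$ of a context type. Specialising the invariant at $\epsi$ yields $\EFO^+(T(\epsi))=\emptyset$, and, since by quantitativity $C(\epsi)(x)$ is the join of the $\code{b}\cdot T(b)$ over the axiom positions $b$ of $x$ (each with $t\rstr{\ovl b}=x$ a variable), $\EFO^-(C(\epsi)(x))=\emptyset$. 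Hence $P$ is unforgetful. The delicate point is exactly this bookkeeping — tracking how the nested arrows of an abstraction type and the tail position inside each $F_k(a)$ flip polarity — the rest being routine unfolding of the trivial construction. Since $P$ is moreover quantitative, it is approximable (the other main result of this section), so $t$ in fact admits an unforgetful \emph{approximable} derivation, which is what Proposition~\ref{propNDTypWN} consumes.
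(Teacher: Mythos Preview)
Your construction matches the paper's exactly: take $A=\supp(t)$, assign a type variable to every full position (the paper uses a single $\alpha$ rather than distinct $\alpha_a$, but this is immaterial), and invoke the trivial construction. The paper's own proof stops there, simply asserting that the result is unforgetful.

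Your polarity argument, however, overstates the invariant. You claim $\EFO^+(T(a))=\emptyset$ for \emph{every} position $a$, but this fails at partial positions. Take $t=y(\lambda x.z)$: position $1$ is partial with $T(1)=(\code{2}\cdot(\eft\to\alpha_{20}))\to\alpha_{\epsi}$, and one computes $\EFO^+(T(1))=\{\code{2}\}\neq\emptyset$. The vacuous $\lambda x$ sits inside an \emph{argument}, so its $\eft$ is fed into an $F_k$, which flips polarity once more than you accounted for. For the same reason your claim $\EFO^+(C(a)(x))=\emptyset$ fails (here at $a=\epsi$, $x=y$).

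The fix is to narrow the invariant. What actually holds by mutual coinduction is: $\EFO^+(T(a))=\emptyset$ whenever $a$ is full or an abstraction position, and $\EFO^-(T(a))=\emptyset$ whenever $a$ is full or partial (in particular for every axiom position). The coinduction closes because argument positions appearing in the $F_k(a)$ are always full or abstraction, while the positions in each $E(\cdot)(x_i)$ are always axiom positions. Since $\epsi$ is never partial, the first half yields $\EFO^+(T(\epsi))=\emptyset$; the second half, together with quantitativity, yields $\EFO^-(C(\epsi)(x))=\emptyset$ for every $x$. With this correction your argument goes through and fills in what the paper leaves implicit.
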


\begin{proof}
We set $A=\supp (t)$ and $T(a)=\alpha$ for each full position (where
$\alpha$ is a type variable). In that case, the trivial construction yields an unforgetful derivation of $t$.
\end{proof}

It is easy to check that the above derivations yield representatives
of every possible quantitative derivation of a NF (notice there is no
approximability condition):

\begin{prop} \label{propCharQTofNF} 
If $P$ is a quantitative derivation typing $t$, then the trivial
construction w.r.t. $A:=\supp (P)$ and the restriction of $T$ on $\mathring{A}$
yields $P$ itself.
\end{prop}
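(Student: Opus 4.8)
The plan is to show that the trivial construction, when fed the data extracted from an arbitrary quantitative derivation $P$ typing a normal form $t$, reconstructs $P$ on the nose. First I would set $A = \supp(P)$ and, restricting the map $T$ that $P$ induces on full positions $\mathring{A}$, run the trivial construction to obtain a derivation $Q$ with $\supp(Q) = A$. The goal is then $Q = P$ (as labelled trees, hence as derivations), which by coinduction on $A$ reduces to checking that for every $a \in A$ the judgment $Q(a)$ equals $P(a)$, i.e.\ that the contexts $C_Q(a) = C_P(a)$ and the types $T_Q(a) = T_P(a)$ agree. Since both $P$ and $Q$ type $t$ and have the same support, the term components of the judgments already coincide, so everything comes down to reconstructing the context and type annotations.

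The key observation driving the argument is that $P$ is \emph{quantitative}: by definition $C_P(a)(x) = \bigcup_{a' \in \Ax(a)(x)} \tr(a') \cdot T_P(a')$, so the context entry for $x$ at $a$ is entirely determined by the axiom leaves above $a$ typing $x$, together with their axiom tracks $\tr(a')$ and the types $T_P(a')$ they carry. I would argue as follows. At a \textbf{full position} $a$, the trivial construction sets $T_Q(a) = T(a) = T_P(a)$ by fiat, so there is nothing to prove for the right-hand type; the context entries are handled uniformly with the other cases. At an \textbf{abstraction position} $a$ with $t\rstr{\ovl a} = \lambda x_1 \ldots x_n . u$, the typing rules force $T_P(a) = C_P(a\cdot 0)(x_1) \rew \cdots \rew C_P(a\cdot 0^n)(x_n) \rew T_P(\mathring a)$; the trivial construction produces exactly this shape with $C_Q(a\cdot 0^i)(x_i)$ in place of $C_P(a\cdot 0^i)(x_i)$, so agreement of types at $a$ follows once we know agreement of contexts at the positions $a\cdot 0^i$ and agreement of types at $\mathring a$. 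At a \textbf{partial position} $a = \mathring a \cdot 1^n$ with $t\rstr{\mathring a} = t_{\ovl a}\, t_1 \ldots t_n$, the $\app$-rule forces $T_P(a) = F_1 \rew \cdots \rew F_n \rew T_P(\mathring a)$ where $F_k$ is the join of the $S'$-components of the argument premises — which by the disjointness side-condition and quantitativity is precisely $(\tr(a')\cdot T_P(a'))_{a' \in AP_k(a)}$, matching $F_k(a)$ of the trivial construction after the substitution $[T_P(a')/X_{a'}]$. Finally, for the contexts themselves, at an axiom position $a'$ the rule $\ax$ gives $C_P(a')(x) = \tr(a')\cdot T_P(a')$, and propagating upward through $\abs$ and $\app$ rules (using quantitativity at each step) yields $C_P(a)(x) = (\tr(a')\cdot T_P(a'))_{a'\in\Ax(a)(x)}$, which is exactly $E(a)(x)[T_P(a')/X_{\code{a'}}]_{a'\in\Ax(a)(x)}$ since the trivial construction assigns track $\code{a'}$ to the axiom created at $a'$ — and here one uses that in a genuine derivation the axiom track at $a'$ is whatever $P$ already fixed it to be, so one should read $\code{\cdot}$ as \emph{the axiom-track function of $P$}, or equivalently argue up to the isomorphism that renames tracks.

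The main obstacle I anticipate is precisely this last point: the trivial construction hard-codes a particular injection $a \mapsto \code a$ for axiom tracks, whereas an arbitrary $P$ uses whatever tracks it happens to use. So the honest statement is that the trivial construction recovers $P$ \emph{up to the renaming of axiom tracks}, i.e.\ up to the isomorphism $\equiv$ on derivations; to get literal equality one feeds the construction the track function $\tr$ of $P$ itself rather than a fixed $\code\cdot$. I would make this precise either by reformulating the trivial construction as parametric in the track-assignment (the cleanest fix) or by invoking the paragraph on resetting axiom-track values. Modulo that bookkeeping, the proof is a straightforward structural induction/coinduction on positions in $A$: there are no redexes to worry about since $t$ is a normal form, the well-foundedness of the coinductive definition of $T_Q$ was already checked in the preceding proposition, and every rule is pinned down by its conclusion because the system is syntactic (no multiset collapse) and quantitative (no un-axiomatised context entries).
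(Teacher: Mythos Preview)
Your approach is correct and is precisely the natural one: the paper does not spell out a proof here (it merely says ``it is easy to check''), and the intended argument is the position-by-position verification you describe, driven by the fact that in a normal form the shape of each judgment is entirely forced by the typing rules together with quantitativity.

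You have also put your finger on a genuine sloppiness in the statement. The trivial construction, as written in \S~\ref{subsecSpCons}, fixes an injection $a\mapsto\code{a}$ once and for all, and uses $\code{a'}$ both as the root in $E(a)(x)$ and in $F_k(a)$. For the output to be \emph{literally} $P$, one needs $\code{a'}=\tr(a')$ at every axiom leaf $a'$, and $\code{a'}=\ell$ (the last integer of $a'$) for every $a'\in AP_k(a)$. Neither holds for a generic fixed encoding. Your two proposed remedies are exactly the right ones: either (i) regard the trivial construction as parametric in the track-assignment and feed it the tracks read off from $A$ and $\tr_P$, or (ii) weaken the conclusion to ``yields a derivation isomorphic to $P$''. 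Reading (i) is what the paper implicitly relies on later: the proof that $P_n\leqslant P$ in \S~6.3 and Appendix~\ref{appNF} only makes sense if the trivial construction on $(A_n,T_n)$ reuses the tracks already present in $A\supseteq A_n$. So your diagnosis and your suggested fix are both on target; the paper simply elides this bookkeeping.
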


\subsection{Approximability} 

We explain here why every quantitative derivation $P$ typing a normal 
form is approximable (see Appendix \ref{appNF} for a complete proof). This means that we can build a finite derivation $\supf P \leqslant P$ containing any finite part $\supo B$ of $\bisupp (P)$. We will proceed by: 
\begin{itemize} 
\item Choosing a finite d-support $\supf A$ of $A$ \textit{i.e.} we will 
discard all positions in $A$ but finitely many.
\item Choosing, for each $T(a)$ s.t. $a$ is full, a finite part of 
$\supf T(a)$ of $T(a)$. 
\end{itemize} 
The trivial construction using $\supf A$ and $\supf T$ will yield a derivation 
$\supf P\leqslant P$ typing $t$. 

Namely, we fix an integer $n$ and discard every position $a\in A$ such that $\ad{a}>n$ or $a$ holds a track $\geqslant n$. It yields a finite d-support $A_n\subset A$. Then, for each $\mathring{a}$, we discard every inner position inside $T(\mathring{a})$ according to the same criterion. It yields finite types $T_n(\mathring{a})$. The trivial construction starting from $A_n$ and $T_n$ yields a (finite) derivation $P_n\leqslant P$. We prove then that $n$ can always be chosen big enough to ensure that $\supo B \subset \bisupp(P_n)$.

\subsection{The Infinitary Subject Expansion Property}

\label{secExpans}


In Section \ref{secRed}, we defined the derivation $P'$ resulting
from a s.c.r.s. from any (approximable or not) derivation
$P$. Things do not work so smoothly for subject expansion when we try
to define a good derivation $P$ which results from a derivation $P'$
typing the limit of a s.c.r.s. Indeed, approximability play a central
role w.r.t. expansion. Assume that:
\begin{itemize} 
\item $t\rightarrow^{\infty} t'$. Say through the s.c.r.s. $t=t_0 
\rewb{b_0} t_1\rewb{b_1} \ldots t_n \rewb{b_n} t_{n+1} \rew...$ with 
$b_n\in \{0,~ 1,~ 2\}^*$ and $\ad{b_n}\longrightarrow +\infty$.
\item $P'$ is an approximable derivation of $C'\vdash t':\,T'$.
\end{itemize} 

The main point is to understand how subject expansion works with a finite derivation $\supf P'\leqslant P'$. The technique of \S~ \ref{subsecExamples} can now be formally performed. 
Mainly, since $\supf P'$ is finite, for a large enough $n$, $t'$ can be replaced by $t_n$ inside $\supf P'$, due to the subject substitution lemmas \S~ \ref{secRed}, which yields a finite derivation $\supf P_n$ typing $t_n$. But when $t_n$ is typed instead of $t'$, we can perform $n$ steps of expansion (starting from $\supf P_n$) to obtain a finite derivation $\supf P$ typing $t$. Then, we define $P$ as the join of the $\supf P$ when  $\supf P'$ ranges over $\Approx(P')$. Complete proofs and details are to be found in Appendix \ref{appExpans}

\begin{prop}
The subject expansion property holds for approximable derivations and
strongly convergent sequences of reductions.
\end{prop}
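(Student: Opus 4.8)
The plan is to realise, at the formal level, the construction outlined in \S~\ref{subsecExamples}: we build $P$ as a directed join of finite derivations of $t$, one attached to each finite approximation $\supf P'\leqslant P'$. The underlying intuition is that a \emph{finite} $\supf P'$ only mentions term positions of bounded applicative depth, whereas strong convergence stabilises the term at any fixed applicative depth; so, for $n$ large, $\supf P'$ can be re-read as a typing of $t_n$, and from there expanded $n$ steps back into a typing of $t$ by the finitary subject expansion property of \S~\ref{subsecSRSE}. We fix once and for all an injection $\code{\cdot}:\bbN^*\to\bbN\setminus\{0,1\}$ so that each expansion step is deterministic: an axiom leaf created at position $\alpha$ receives the axiom track $\code{\alpha}$.

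Fix a finite $\supf P'\leqslant P'$ and write the s.c.r.s. as $t=t_0\rewb{b_0}t_1\rewb{b_1}\cdots$, so that $\ad{b_n}\to+\infty$. Since $\supp(\supf P')$ is finite, there is an $n_0$ such that, for all $n\geqslant n_0$, $\ad{b_n}$ exceeds the applicative depth of every position of $\supf P'$ and $t_n(\ovl{a})=t'(\ovl{a})$ for every $a\in\supp(\supf P')$ --- the stabilisation already used for infinitary subject reduction in \S~\ref{secRed}. For such an $n$, the subject substitution lemmas of \S~\ref{secRed} turn $\supf P'$ into a finite derivation of $t_n$ with unchanged support, contexts and types; being finite, it is quantitative, so the finitary subject expansion property applies, and iterating it along $t_0\rewb{b_0}\cdots\rewb{b_{n-1}}t_n$ yields a finite derivation $\supf P$ of $t$ whose context and type at the root are restrictions of $C'$ and $T'$. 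This $\supf P$ is independent of the choice of $n\geqslant n_0$: since $\ad{b_n}$ exceeds the depth of $\supf P'$, the redex contracted in $t_n\rewb{b_n}t_{n+1}$ is neither at nor below any position mentioned by $\supf P'$, so the matching expansion step merely replaces $t_{n+1}\rstr{\ovl{a}}$ by $t_n\rstr{\ovl{a}}$ in each judgment and leaves supports, contexts and types untouched; hence expanding from level $n+1$ gives the same derivation as expanding from level $n$. Write $\supf P'\mapsto\supf P$ for the resulting assignment.

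It remains to assemble the join. When $\supf P'_1\leqslant\supf P'_2$ inside the lattice $\Approx(P')$, choosing $n$ large for both and using the monotonicity encoded in the subject substitution lemmas of \S~\ref{secRed} followed by monotonicity of (anti)reduction (Lemma~\ref{lemRedBij}), the associated finite derivations of $t$ obey the same inequality; hence the image of $\Approx(P')$ under $\supf P'\mapsto\supf P$ is a directed family of finite derivations of $t$. Since the derivations of $t$ ordered by $\leqfty$ form a dcpo, the join $P$ of this family is again a derivation of $t$. Its context and type at the root are exactly $C'$ and $T'$: at a root biposition $P$ coincides with any $\supf P$ containing it, hence with $P'$, and approximability of $P'$ ensures every root biposition of $P'$ lies in some $\supf P'$, hence in the corresponding $\supf P$. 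The derivation $P$ is approximable: a finite $\supo B\subseteq\bisupp(P)$ meets only finitely many $\bisupp(\supf P)$, so, replacing the finitely many corresponding $\supf P'$ by their join in $\Approx(P')$ and using monotonicity, $\supo B$ sits inside a single finite $\supf P\leqslant P$. Finally, unwinding the construction --- or invoking the infinitary subject reduction property of \S~\ref{secRed} together with the determinism properties of \S~\ref{subsecSRSE} --- shows that $P$ reduces to $P'$ along the given s.c.r.s., so $P$ is a genuine expansion of $P'$.

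The delicate point, which is why the full construction is deferred to Appendix~\ref{appExpans}, is the global coherence of the $n$-step expansions across $\Approx(P')$. Subject expansion is defined only up to a choice of axiom tracks, and we need these choices to be (i) insensitive to the auxiliary level $n$ and (ii) compatible with the lattice operations on $\Approx(P')$; otherwise the family $\{\supf P\}$ would fail to be directed, or its join would carry the wrong root judgment. Fixing $\code{\cdot}$ globally, together with the determinism of reduction and expansion, is precisely what secures (i) and (ii); the remainder is bookkeeping on supports and bisupports, in the spirit of the finite case of \S~\ref{subsecSRSE} and the approximability arguments of \S~\ref{secNF}.
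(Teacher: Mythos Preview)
Your proof is correct and follows essentially the same approach as the paper's (Appendix~\ref{appExpans}): fix a global coding $\code{\cdot}$ to make expansion uniform, for each finite $\supf P'\leqslant P'$ substitute the subject to obtain a typing of some $t_n$, expand $n$ steps back to $t$, check that the result is independent of $n$, show the resulting family is directed by monotonicity of uniform expansion, and take the join. Your write-up is in fact more explicit than the paper's on two points the appendix leaves implicit: that the join $P$ is approximable (each biposition of a finite $\supo B$ lies in some $\supf P$, so the join of the corresponding finitely many $\supf P'$ covers $\supo B$), and that $P$ actually reduces to $P'$ along the s.c.r.s.; the paper contents itself with matching the root judgment.
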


Since subject-reduction and expansion of infinite length (in s.c.r.s) preserve unforgetful derivation, it yields our main characterization theorem :

\begin{theorem}
A term $t$ is weakly-normalizing in $\Lambda^{001}$ if and only if $t$ is typable by means of an approximable unforgetful derivation.
\end{theorem}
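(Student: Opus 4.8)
The plan is to prove the two implications of the equivalence separately, assembling the machinery of \S\ref{secDynamics}--\S\ref{secNF}.

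\emph{Direction ($\Leftarrow$): an approximable unforgetful derivation forces $t$ to be WN.} This is exactly Proposition~\ref{propNDTypWN}, so nothing new is needed. For completeness, recall why it holds: any finite approximation $\supf P\leqslant P$ is itself a derivation of $t$, so $t$ is head normalizable by Lemma~\ref{lemTypHN}; head-reducing to a HNF $\lambda x_1\dots x_p.(x\,t_1)\dots t_q$ keeps the derivation unforgetful (same concluding judgment) and approximable (Lemma~\ref{lemApproxRed}, using that approximable derivations are quantitative), and the HNF-decomposition lemma of \S\ref{subsecUF} then yields unforgetful approximable subderivations typing each $t_i$. Recursing into the $t_i$ strictly increases applicative depth, so the union of all these head-reduction steps is a strongly converging reduction sequence whose limit is in $\Lambda^{001}$ and contains no $\bot$, i.e.\ is a $001$-normal form; hence $t$ is WN.

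\emph{Direction ($\Rightarrow$): a WN term is approximably and unforgetfully typable.} Since $t$ is WN, fix a s.c.r.s.\ $t\rew^\infty t'$ with $t'$ a normal form; as s.c.r.s.\ stay inside $\Lambda^{001}$, $t'$ is a $001$-normal form. Apply the trivial construction of \S\ref{subsecSpCons} to $A=\supp(t')$ with the constant type function $T(a)=\alpha$ on full positions: by (the proof of) Proposition~\ref{propNFhasNDD} this produces an \emph{unforgetful} derivation $P'\rhd C'\vdash t':T'$, which is moreover \emph{quantitative} (the trivial construction always is), and by the analysis of \S\ref{secNF} --- every quantitative derivation typing a normal form is approximable, obtained by cutting $A$ and the $T(\mathring a)$ at a large enough threshold and applying the trivial construction to the truncation --- the derivation $P'$ is \emph{approximable}. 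Now feed $P'$ and the s.c.r.s.\ $t\rew^\infty t'$ to the infinitary subject-expansion proposition of \S\ref{secExpans}: it returns a derivation $P$ with $P\rew^\infty P'$ whose concluding judgment is $C'\vdash t:T'$, differing from that of $P'$ only in the subject. It remains to check $P$ inherits both properties. Unforgetfulness (Definition~\ref{defiUF}) constrains only the concluding context $C'$ and type $T'$, which expansion leaves untouched, so $P$ is unforgetful. Approximability holds because $P$ is built in \S\ref{secExpans} as the join $\bigvee\{\supf P\}$ of the finite derivations $\supf P$ obtained from the finite approximations $\supf P'\leqslant P'$ by subject substitution plus finitely many expansion steps; this family is directed, so by the dcpo theorem of \S\ref{subsecApprox} its join $P$ satisfies $\bisupp(P)=\bigcup\bisupp(\supf P)$, whence any finite $\supo B\subset\bisupp(P)$ already lies in some $\bisupp(\supf P)$ with $\supf P\leqslant P$ finite --- which is the definition of approximability (and $P$ is quantitative by the track choice in the expansion, the precondition of Lemma~\ref{lemApproxRed}). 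Thus $t$ is typable by an approximable unforgetful derivation.

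\emph{Main obstacle.} Practically all the weight lies in the infinitary subject-expansion proposition (Appendix~\ref{appExpans}): it is there that the truncation / subject-substitution / join method sketched in \S\ref{subsecExamples} must be carried out in full, and where rigidity is genuinely exploited --- the expansion step must be deterministic so that truncations remain coherent, i.e.\ so that the association between $x$-axiom leaves and argument subderivations survives cutting the derivation to a finite $\supf P'$. Granted that proposition, the theorem above is just bookkeeping, the only local verification being the stability of unforgetfulness and approximability under infinitary expansion, handled above.
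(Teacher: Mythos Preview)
Your proposal is correct and follows essentially the same route as the paper: $(\Leftarrow)$ is Proposition~\ref{propNDTypWN}, and $(\Rightarrow)$ combines Proposition~\ref{propNFhasNDD} (unforgetful typing of the NF $t'$), the approximability of NF-derivations from \S\ref{secNF}, and the infinitary subject-expansion proposition of \S\ref{secExpans}. The paper's proof is terser and leaves the preservation of approximability under infinitary expansion implicit in the statement of that proposition (the construction in Appendix~\ref{appExpans} builds $P$ as a directed join of finite derivations, exactly as you spell out), whereas you make this step explicit; but there is no substantive difference in strategy.
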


\begin{proof}
The $\Leftarrow$ implication 
is given by proposition \ref{propNDTypWN}. For the direct one:
assume $t$ to be WN and the considered s.c.r.s. to yield the NF $t'$ of $t$. 
Let $P'$ be an unforgetful derivation typing $t'$ (granted by proposition 
\ref{propNFhasNDD}). Then, the derivation $P$ obtained by the proposition below
is unforgetful and types $t$.
\end{proof}

\section{Conclusion}

We have provided an intersection type system characterizing
weak-normalizability in the infinitary calculus $\Lambda^{001}$.  The
use of functions from the set of integers to the set of types to
represent intersection -- instead of multisets or conjunctions --
allows to express a validity condition that could only be suggested in
De Carvalho's type assignment system.  Our type system system is
relatively simple and offers many ways to describe proofs
(\textit{e.g.} tracking, residuals).

It is then natural to seek out whether this kind of framework could be
adapted to other infinitary calculi and if we could also characterize
strong normalization in $\Lambda^{001}$, using for instance a memory
operator~\cite{BucciarelliKV15}. Although our derivations are very low-level objects, it can be shown they allow to represent any infinitary derivation of system $\mathscr{M}$~\cite{vialXX}. We would like to find alternatives to
the approximability condition, \textit{e.g.} formulating it only in term of
tracks. It is to be noticed that derivation approximations provide
\textit{affine} approximations that behave \textit{linearly} in Mazza's polyadic
calculus~\cite{DBLP:conf/lics/Mazza12}. Last, the type system
presented here can very easily adapted to the terms whose B\"ohm tree
may contain $\bot$, with the same properties. In particular, two terms
having the same B\"ohm tree can be assigned the same types in the same
contexts. We would like to investigate ways to get (partial forms of)
the converse.

\label{secConclusion}

\bibliographystyle{plain}
\bibliography{BiblioShort}

\newpage

\tableofcontents

\newpage

\newpage

\appendix

\section{Performing reduction in a derivation}

\label{appSR}

We assume here that $t|_{b}=(\lambda x.r)s$
$t\rewb{b} t'$.

\subsection{In De Carvalho's system $\mathscr{M}_0$}

We assume assume $\Pi \rhd \Gamma_\epsi \vdash t:\, \tau_\epsi$.
We consider the  (non-deterministic) transformation below, that we perform
on any subderivation of $\Pi$ corresponding to the position $b$.
For such a subderivation, it is meant that the $\Pi$ are the
argument subderivations typing $s$ and that
$\Pi_0$ is the subderivation typing $r$. We have indicated between
brackets the positions of the axiom leaves typing $x$, the bound
variable to be substituted during the reduction.

\begin{tikzpicture}
\draw (1.4,3.5) node {$\Pi_0$} ;

\draw (1.7,3.7) node [below right]{
	\begin{prooftree}
	\Infer{0}[ax]{x:\, [\sigma_i] \vdash x:\,\sigma_i}
	\Delims{\left( }{ \right)_{i\in I} }
	\end{prooftree}
} ;

\draw [dotted] (1.4,2.2) -- (1.4,3.2) ;
	\draw (0,2) node [below right] {$\Gamma,\, x:\,[\sigma_i]_{i\in I}
	\vdash r:\,\tau $} ;
\draw (0,1.5)--(3.4,1.5) ;
	\draw (0,1.5) node [below right] {$\Gamma \vdash \lambda x.r:\, 
	[\sigma]_{i\in I}\rew \tau $} ;

\draw (6.5,2.3) node [below] { 
	\begin{prooftree}
	\Hypo{\Pi_i }
	\Ellipsis{}{\Delta_i\vdash s:\, \sigma_i }
	\Delims{\left( }{ \right)_{i\in I} }
	\end{prooftree}
} ;

\draw (0,1)--(7.6,1) ;
	\draw (0,1) node [below right] {$\Gamma + \sum\limits_{i\in I} \Delta_i
	\vdash (\lambda x.r)s:\, \tau $} ;

\end{tikzpicture}
$\rightsquigarrow$
\begin{tikzpicture}
\draw (3.3,3.3) node [below] {
        \begin{prooftree}
        \Hypo{\Pi_i }
        \Ellipsis{}{\Delta_i\vdash s:\, \sigma_i }
        \Delims{\left( }{ \right)_{i\in I} }
        \end{prooftree} 
} ;

\draw (1.4,2.5) node {$\Pi_0$} ;
\draw [dotted] (1.4,1.2) -- (1.4,2.2) ;
        \draw (0,1) node [below right] {$\Gamma + \sum\limits_{i\in I} \Delta_i
        \vdash r[s/x]:\, \tau $} ;
\end{tikzpicture}

Thus, from the type-theoretical point of view, in De Carvalho's type
assignment system, reduction consists in:
\begin{itemize}
\item Destroying the application and
abstraction rules related to to fired redex and the axiom rules of
the variable to be substituted.
\item Moving argument parts of the redex, without adding any rule (contrary to the idempotent intersection frameworks). 
\end{itemize}

\subsection{In rigid derivations}

We assume here that $P \rhd C_\epsi \vdash t:\,\epsi$.\\
The derivation $P'$ defined in \S \ref{subsecSRSE} can obtained
by performing the following transformation at position $a$,
when $a$ ranges over $\Rep_A(b)$.

The subderivation $P\rstr{a}$ must look like:\\
\begin{tikzpicture}
\draw (1.4,3.5) node {$P_0$} ;

\draw (1.7,3.7) node [below right]{
	\begin{prooftree}
	\Infer{0}[ax]{x:\, (S_k)_k \vdash x:\,S_k \posPr{a\cdot 
10\cdot a_k}}
	\Delims{\left( }{ \right)_{k\in K} }
	\end{prooftree}
} ;

\draw [dotted] (1.4,2.2) -- (1.4,3.2) ;
\draw (0,2) -- (6.1,2) ;
	\draw (0,2) node [below right] {$C,\, x:\,(S_k)_{k\in K}
	\vdash r:\,T \posPr{a\cdot 10}$} ;
\draw (0,1.5)--(6.2,1.5) ;
	\draw (0,1.5) node [below right] {$C \vdash \lambda x.r:\, 
	(S_k)_{k\in K}\rew T \posPr{a\cdot 1}$} ;

\draw (9.5,2.5) node [below] { 
	\begin{prooftree}
	\Hypo{P_k }
	\Ellipsis{}{D_k\vdash s:\, S_k \posPr{a\cdot k}}
	\Delims{\left( }{ \right)_{k\in K } }
	\end{prooftree}
} ;

\draw (0,1)--(11.9,1) ;
	\draw (0,1) node [below right] {$C \cup \bigcup\limits_{k\in K} D_k
	\vdash (\lambda x.r)s:\, T \posPr{a}$} ;
\end{tikzpicture}

The subderivation $P\rstr{a}$ must look like:\\

\begin{tikzpicture}
\draw (1.4,3.5) node {$P_0$} ;

\draw (1.7,3.7) node [below right]{
	\begin{prooftree}
	\Infer{0}[\ax]{ x:\, (S_k)_k \vdash x:\,S_k}
	\Delims{\left( }{ \right)_{k\in K} }
	\end{prooftree}
} ;

\draw [dotted] (1.4,2.2) -- (1.4,3.2) ;
\draw (0,2) -- (6.1,2) ;
	\draw (0,2) node [below right] {$C,\, x:\,(S_k)_{k\in K}
	\vdash r:\,T \posPr{a\cdot 10}$} ;
\draw (0,1.5)--(6.2,1.5) ;
	\draw (0,1.5) node [below right] {$C \vdash \lambda x.r:\, 
	(S_k)_{k\in K}\rew T \posPr{a\cdot 1}$} ;

\draw (9.5,2.5) node [below] { 
	\begin{prooftree}
	\Hypo{P_k }
	\Ellipsis{}{D_k\vdash s:\, S_k \posPr{a\cdot k}}
	\Delims{\left( }{ \right)_{k\in K } }
	\end{prooftree}
} ;

\draw (0,1)--(11.9,1) ;
	\draw (0,1) node [below right] {$C \cup \bigcup\limits_{k\in K} D_k
	\vdash (\lambda x.r)s:\, T \posPr{a}$} ;
\end{tikzpicture}

We replace $P\rstr{a}$ by the derivation below:\\

\begin{tikzpicture}
\draw (4.2,3.3) node [below] {
        \begin{prooftree}
        \Hypo{P_k }
        \Ellipsis{}{D_k\vdash s:\, S_k\posPr{a\cdot a_k} }
        \Delims{\left( }{ \right)_{k\in K} }
        \end{prooftree} 
} ;

\draw (1.4,2.5) node {$P_0$} ;
\draw [dotted] (1.4,1.2) -- (1.4,2.2) ;
\draw (0,1)--(5.1,1) ;
        \draw (0,1) node [below right] {$C \cup \bigcup\limits_{k\in K} D_k
        \vdash r[s/x]:\, T \posPr{a}$} ;

\end{tikzpicture}

Notice how this transformation is deterministic: for instance, assume $7 
\in K$. There must be an axiom rule typing $x$ using axiom track 7 : it 
yields $x:\, (S_7)_7\vdash x:\,S_7$ at position $a\cdot 10\cdot a_7$. 
There must also a subderivation at argument track 7: it is $P_7$ at 
position $a\cdot 7$. Then, when we fire the redex at position $b$,
the subderivation $P_7$ \textit{must} replace the axiom rule on track 7,
even if there may be several $i\neq 7$ such that $S_i=S_7$.

\subsection{The quantitative argument}

\label{appQuantArg}

We expose De Carvalho's original argument in system $\mathscr{M}_0$.
We formulate it here w.r.t. rigid derivations.

We work here with \textit{finite} derivations. If $P$ is a finite
derivation, we write $\nr(P)$ ("number of rules") for $|\supp (P)|$.

We want
to show that if the redex at position $b$ is typed -- \textit{i.e.} if
$b\in \ovl{\supp(P)}$--, then the reduced derivation $P'$ verifies
$\nr(P')<\nr(P)$.

With the same notations as previously, we have $\nr(P' \rstr{a})=
\nr(P\rstr{a})-2-|K|$ since the $\symbol{64}$-rule at position $a$
disappear, as well as the $\lambda$-rule at position $a\cdot 1$ and the
$|K|$ axiom rules typing $X$.

Thus, $\nr(P')\leqslant \nr(P)$ as soon as there is a $a\in \supp(P)$
s.t. $\ovl{a}=b$.

\newpage



\section{Equinecessity, Reduction and Approximability}

\label{appEquinecessity}

The rigid construction presented here ensure "trackability", contrary to 
multiset construction of system $\mathscr{M}_0$. We show here a few 
applications useful to prove that approximability is stable under 
reduction or expansion (Lemma \ref{lemApproxRed})
. We consider a \textit{quantitative} derivation 
$P$, with the usual associated notations (including $A=\supp (P)$).

\subsection{Equinecessary bipositions}

\begin{defi} 
Let $\bip_1,\,\bip_2$ two bipositions of $P$.
\begin{itemize}
\item We say $\bip_1$ \textbf{needs} $\bip_2$ if, for all $\supf P\leqslant 
P$, $\bip_1\in \supf P$ implies $\bip_2\in \supf F$.
\item We say $\bip_1$ and $\bip_2$ are
\textbf{equinecessary} (written $\bip_1 \lra \bip_2$)
 if, for all $\supf P\leqslant P$, $\bip_1\in \supf P$ iff
$\bip_2\in \supf P$
\end{itemize}
\end{defi}

There are many elementary equinecessity cases that are easy to observe.
We need only a few one and we define $\up(\bip)$ and $\topb(\bip)$
s.t. $\bip \lra \up(\bip)$ and $\bip \lra \topb(\bip)$ for all $\bip$.
\begin{itemize}
\item $\up(\bip)$ is defined for any $\bip\in \bisupp(P)$ which
is not on an axiom leaf.
\begin{itemize}
	\item $\up(a,\,x,\,k\cdot c)=(a\cdot \ell,\,x,\, k\cdot c)$, where 
$\ell$ is the unique integer s.t. $(a\cdot \ell,\,x,\, k\cdot c)\in
\bisupp(P) $.
	\item If $t(\ovl{a})=\lambda x$, $\up(a,\,\epsi)=(a\cdot 0,\,\epsi),~ 
\up(a,\,1\cdot c)=(a\cdot 0,\, c)$ and $\up(a,\, k \cdot c)=(a\cdot 0,\,
x,\, k\cdot c)$ if $k\geqslant 2$.
	\item If $t(\ovl{a})=\symbol{64}$, $\up(a,\,c)=(a \cdot 1,\, 1\cdot c)$.
\end{itemize}
\item $\topb(\bip)$ is a \textit{right biposition} and is defined by induction for 
any $\bip \in \bisupp (P)$.
\begin{itemize}
	\item If $\bip$ is not on an axiom leaf, then $\topb(\bip)=
\topb(\up (\bip))$.
	\item If $t(\ovl{a})=x$, $\topb(a,\,x,\,k\cdot c)=\top(a,\,c)=(a,\,c)$
\end{itemize}
\end{itemize}
The induction defining is well-founded because of the form of 
the supports of the 001-terms and because $P$ is quantitative.\\

Assume $t\rstr{b}=(\lambda x.r)s$. A very important case of equinecessity
is this one: 
$(a\cdot 1,\, k\cdot c)\lra
(a\cdot 10\cdot a_k,\, x,\, k\cdot c)$ and $(a\cdot 1,\, k\cdot c)
\lra (a\cdot k,\, c)$. Thus, $(a\cdot 10\cdot a_k,\,c) \lra (a\cdot k,\,
c)$.

\subsection{Approximability is stable under (anti)reduction}

We assume here that $P\rew P'$ ($P$ is still assumed to be quantitative).
Let $\supo B\subset \bisupp (P)$ a finite part. We notice 
that $\Res_b$ is defined for any right biposition which is on an axiom leaf 
typing  $y\neq x$.

So, let $\supo \tilde{B}$ be the set obtained from $\topb(B)$ by
replacing any $(a\cdot 10\cdot a_k,\, c)$ by  $(a\cdot k,\, c)$.
Then $|\supo \tilde{B}|\leqslant |\supo B|$ and any $\bip \in 
\supo B$ is equinecessary with a $\tilde{\bip}\in \supo\tilde{B}$.
So the partial proof of Lemma \ref{lemApproxRed} is valid for $\supo
\tilde{B}$. By equinecessity, it entails it is also valid for $\supo B$.

For the converse implication, we just have to replace $\supo B'$ by
$\topb (\supo B')$.

\newpage


\section{Lattices of (finite or not) approximations}

\label{appLat}

\subsection{Types, Forest Types and Contexts}

\label{LatticesTPF}

\begin{defi}
\begin{itemize}
\item Let $U_1$ and $U_2$ two (forest) types. If, as a labelled tree or forest, $U_1$ is a restriction of $U_2$, we write $U_1\leqfty U_2$. When $U_1$ is finite, we write simply $U_1\leqslant U_2$
\item We set $\Approx(U)=\{\supf U ~ |~ \supf U\leqslant U\}$ and 
$\Approx_{\infty} (U)=\{ U_0 ~ |~ U_0\leqfty U\}$
\end{itemize} 
\end{defi}

\begin{lem}
Let $(T_i)_{i\in I}$ be a non-empty family of types, such that $\forall i,\, j\in I,~ \exists T\in \Types,~ T_i,~T_j\leqfty T$ (\textit{i.e.} $T_i,\, T_j$ have an upper bound inside $\Types$).\\
We define the labelled tree $T(I)$ by $\supp (T(I))=\bigcap\limits_{i \in I} \supp (T_i)$ and $T(I)(c)=T_i(c)$ for any $i$.\\
Then, this definition is correct and $T(I)$ is a type (that is finite if one the $T_i$ is). We write $T(I)=\bigwedge\limits_{i\in I} T_i$.
\end{lem}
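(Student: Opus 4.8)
The plan is to verify four things in turn: (i) that $T(I)$ is a well-defined labelled tree; (ii) that it is (the tree of) an element of $\Types^+$; (iii) that its support carries no infinite branch ending in $1^\omega$, so that $T(I)\in\Types$, and that it is finite whenever some $T_i$ is; (iv) as a bonus justifying the notation, that $T(I)$ is the $\leqfty$-meet of the $T_i$. None of these should be hard; the only point needing a little care is (ii).

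For (i), I would first observe that every type's support contains $\epsi$ and that an intersection of subsets of $\bbN^*$ that are downward-closed for the prefix order is again downward-closed; hence $\supp(T(I))=\bigcap_{i\in I}\supp(T_i)$ is a non-empty tree of $\bbN^*$. For the labelling, given $c\in\supp(T(I))$ and $i,j\in I$, the hypothesis supplies $T\in\Types$ with $T_i,T_j\leqfty T$; since $\leqfty$ is exactly restriction of labelled trees, $T_i(c)=T(c)=T_j(c)$, so the value $T(I)(c):=T_i(c)$ does not depend on the chosen index. This is the one place the pairwise-upper-bound assumption is used, and it is precisely what makes the candidate labelled tree coherent.

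For (ii), I would recall that a labelled tree (over $\mathscr{X}\cup\{\rew\}$, with non-empty downward-closed support) represents an element of $\Types^+$ iff at every node the local shape condition holds: either the label is a variable $\alpha\in\mathscr{X}$ and the node is a leaf, or the label is $\rew$, with $c\cdot1$ a child and every other child $c\cdot k$ satisfying $k\geqslant2$ (so $c\cdot0$ is never a child). I then check this at each $c\in\supp(T(I))$. By coherence all the $T_i$ carry the same label at $c$. If it is a variable, $c$ is a leaf in each $T_i$, hence in the intersection. If it is $\rew$, write $T_i\rstr{c}=F^i\rew U^i$; then $c\cdot1\in\supp(T_i)$ for every $i$, because $\epsi\in\supp(U^i)$, so $c\cdot1\in\supp(T(I))$; and any child $c\cdot k$ of $c$ in the intersection is a child of $c$ in each $T_i$, hence lies in each $\{1\}\cup\Rt(F^i)$ with $\Rt(F^i)\subseteq\bbN-\{0,1\}$, so $k=1$ or $k\geqslant2$. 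Thus the local condition holds everywhere and $T(I)\in\Types^+$. For (iii): an infinite branch of $\supp(T(I))$ ending in $1^\omega$ would be such a branch in each $\supp(T_i)$, contradicting $T_i\in\Types$; hence $T(I)\in\Types$. Finiteness is immediate, since if $T_{i_0}$ is finite then $\supp(T(I))\subseteq\supp(T_{i_0})$ is finite.

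For (iv), by construction $\supp(T(I))\subseteq\supp(T_i)$ with agreeing labels, so $T(I)\leqfty T_i$ for every $i$; and if $S\leqfty T_i$ for all $i$ then $\supp(S)\subseteq\bigcap_{i\in I}\supp(T_i)=\supp(T(I))$ with agreeing labels, so $S\leqfty T(I)$; hence $T(I)=\bigwedge_{i\in I}T_i$. I do not anticipate a genuine obstacle here: the only slightly delicate step is the one in (ii) where I pass from "the local shape condition holds at every node" to "$T(I)$ is an actual member of the coinductively defined set $\Types^+$", i.e. invoking that the tree representation of the coinductive grammar is characterised by exactly that local condition; everything else is a routine unwinding of the definitions of $\leqfty$, $\supp$, and forest types.
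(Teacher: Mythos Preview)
Your proposal is correct and follows essentially the same approach as the paper: the well-definedness of the labelling via the pairwise upper bound, and the key observation that $T(I)(c)=\rightarrow$ forces $c\cdot 1\in\supp(T(I))$ because this holds in every $T_i$. You are simply more thorough than the paper, which omits the explicit checks of non-emptiness, the leaf case, the exclusion of track $0$, and the verification of the meet property (iv).
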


\begin{proof}
Since $\supp ( T(I))=\bigcap\limits_{i \in I} \supp (T_i)$, $\supp (T(I))$ is a tree without infinite branch ending by $1^\omega$.

Let us assume $c \in \supp(T(I))$. Then, for all $i\in I$, $c \in \supp (T_i)$. For $i,\,j \in I$, there is a $T$ s.t. $T_i,\,T_j\leqfty T$. Thus, $T_i(c)=T(c)=T_j(c)$ and the definition of $T(I)$ is correct.

When $T(I)(c)=\rightarrow$, then $c \in \supp (T_i)$ for all $i\in I$, so $c \cdot 1 \in \supp (T_i)$ for all $i\in I$, so $c \cdot 1 \in \supp (T(I))$, so $T(I) \in \Types$.
\end{proof}

\begin{lem}
Let $(T_i)_{i\in I}$ be a non-empty family of types, such that $\forall i,\, j\in I,~ \exists T\in \Types,~ T_i,~T_j\leqfty T$.\\
We define the labelled tree $T(I)$ by $\supp (T(I))=\bigcup\limits_{i \in I} \supp (T_i)$ and $T(I)(c)=T_i(c)$ for any $i$ s.t. $c \in \supp(T_i)$.\\
Then, this definition is correct and $T(I)$ is a type (that is finite if $I$ is finite and all the $T_i$ are). We write $T(I)=\bigvee\limits_{i\in I} T_i$.
\end{lem}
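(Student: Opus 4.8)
The plan is to follow closely the proof of the preceding lemma for $\bigwedge$, working throughout at the level of supports and labels rather than unfolding the coinductive definition, and to isolate the one genuinely new difficulty: ruling out an infinite branch ending by $1^\omega$ in the union of the supports.

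First I would establish that $T(I)$ is well-defined, i.e.\ that $T(I)(c)$ does not depend on the index chosen: if $c\in\supp(T_i)\cap\supp(T_j)$, picking $T\in\Types$ with $T_i,T_j\leqfty T$ gives $T_i(c)=T(c)=T_j(c)$. Its support is then a non-empty, downward-closed subset of $\bbN^*$ (a union of such), hence a tree. Next I would check the local structural conditions making $T(I)$ an element of $\Types^+$: at a position $c$ where some $T_i$ carries a type variable, no $T_j$ can have a proper child at $c$ (that would force the label $\rew$ there, contradicting well-definedness), so $c$ stays a leaf; at a position $c$ labelled $\rew$ by some $T_i$, track $1$ is present (from that $T_i$), and since every $T_j$ containing $c$ also labels it $\rew$, track $0$ never appears and all remaining children sit on tracks $\geq 2$ — taking unions preserves these facts. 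The same support inclusion shows $T_i\leqfty T(I)$ for all $i$, and that $T(I)\leqfty U$ whenever $U\in\Types^+$ bounds all the $T_i$, so $T(I)$ is genuinely the least upper bound $\bigvee_{i\in I}T_i$.

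The step I expect to be the main obstacle is showing $T(I)\in\Types$ and not merely $\Types^+$, i.e.\ that $\supp(T(I))$ has no infinite branch $c\cdot 1^\omega$. I would argue by contradiction: assuming $c\cdot 1^n\in\supp(T(I))$ for all $n$, fix any index $i_0$ with $c\in\supp(T_{i_0})$ and prove by induction on $k$ that $c\cdot 1^k\in\supp(T_{i_0})$. For the inductive step, $c\cdot 1^{k+1}\in\supp(T(I))$ yields an index $j$ with $c\cdot 1^{k+1}\in\supp(T_j)$, so $T_j(c\cdot 1^k)=\rew$; since $T_{i_0}$ and $T_j$ have a common upper bound and both contain $c\cdot 1^k$, well-definedness gives $T_{i_0}(c\cdot 1^k)=\rew$, and as $T_{i_0}$ is a type this forces $c\cdot 1^{k+1}\in\supp(T_{i_0})$. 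Then $\supp(T_{i_0})$ itself has an infinite branch ending by $1^\omega$, contradicting $T_{i_0}\in\Types$. Finally, the finiteness clause is immediate: if $I$ and every $T_i$ are finite, then $\supp(T(I))=\bigcup_{i}\supp(T_i)$ is a finite union of finite sets, so $T(I)$ is finite.
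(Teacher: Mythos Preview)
Your proposal is correct and follows essentially the same approach as the paper: well-definedness via the common upper bound, the tree structure of the union, and the check that $\rew$ at $c$ forces $c\cdot 1$ in the support. The only cosmetic difference is in the no-$1^\omega$-branch step: the paper argues directly that from any $c\in\supp(T_i)$ there is an $n$ with $T_i(c\cdot 1^n)$ a type variable, and then uses the common upper bound to conclude $c\cdot 1^n$ is a leaf in every $T_j$ containing it (hence in $T(I)$); you instead assume an infinite $1$-branch and push it back into a single $T_{i_0}$ by induction --- same idea, contrapositive packaging.
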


\begin{proof}
Since $\supp (T(I))=\bigcup\limits_{i \in I} \supp (T_i)$, then $\supp T(I)$ is a
tree.

Let us assume $c \in \supp (T(I))$ and $c \in \supp(T_i)\cap \supp(T_j)$.
Let $T$ be a type s.t. $T_i,\,T_j\leqfty T$. Thus, we have $T_i(c)=T(c)=T_j(c)$ and the definition of $T(I)$ is correct.

Moreover, since $T_i$ is a type, there is a $n\geqslant 0$ s.t. $c \cdot 1^n$ is a leaf of $\supp(T_i)$ and $T_i(c \cdot 1^n)=\alpha$ ($\alpha$ is a type variable). Since $T_i\leqfty T$, $T(c \cdot 1^n)=\alpha$. Since $T$ is a correct type, $T(c\cdot 1^n)=\alpha$ entails that $c\cdot 1^n$ is a leaf of $\supp (T)$ and $T(c\cdot 1^n)=\alpha$. Since $T_j\leqfty T$, $c \cdot 1^n$ is a leaf of $\supp (T_j)$ and $T_j(c \cdot 1^n)=\alpha$. So $c\cdot 1^n$ is a leaf of $T(I)$ and $T(I)(c)=\alpha$ and $\supp (T(I))$ cannot have an infinite branch ending by $1^\omega$.

If moreover $T(I)(c)=\rightarrow$, then $c \in \supp (T_i)$. Since $T_i$ is a correct type, $c \cdot 1 \in \supp (T_i)$ and thus, $c\cdot 1 \in \supp (T(I))$, so $T(I) \in \Types$.
\end{proof}

\begin{prop}
The set $\Types$ endowed with $\leqfty$ is a \textbf{direct complete partial order (d.c.p.o.)}. The join is given by the above operator.\\
Moreover, for any type $T$, $\Approx(T)$ is a distributive lattice and $\Approx_{\infty}(T)$ is a complete distributive lattice, and the meet is given by the above operator.
\end{prop}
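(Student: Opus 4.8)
The plan is to derive everything from the two lemmas just proven, which already contain the only analytic content (that unions and intersections of supports of types are again supports of types, with no forbidden $1^{\omega}$ branch and respecting the $\rew$\nobreakdash-closure). First I would record that $\leqfty$ is a partial order on $\Types$: reflexivity and transitivity are immediate (a restriction of a restriction is a restriction), and antisymmetry holds because $U_1\leqfty U_2\leqfty U_1$ forces $\supp(U_1)=\supp(U_2)$, hence $U_1=U_2$ as labelled trees.

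For the dcpo claim I would take a directed (hence non\nobreakdash-empty) $D\subseteq\Types$. Directedness says precisely that any two members of $D$ have an upper bound in $\Types$, which is the hypothesis of the second lemma above, so $\bigvee D$ — the labelled tree carried by $\bigcup_{T_*\in D}\supp(T_*)$ — is a well-defined type. A one-line check then shows it is the least upper bound: it dominates every member (supports included, labels agreeing), and any upper bound $T$ of $D$ satisfies $\supp(T)\supseteq\bigcup_{*}\supp(T_*)$ with the labels forced, so $\bigvee D\leqfty T$. Thus $(\Types,\leqfty)$ is directed-complete with joins computed by the displayed operator.

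Now fix $T\in\Types$. Every element of $\Approx_\infty(T)$ is by definition $\leqfty T$, so any subfamily $(T_i)_{i\in I}$ is pairwise bounded by $T$; for $I\neq\emptyset$ the two lemmas produce $\bigvee_i T_i$ and $\bigwedge_i T_i$ as types whose supports lie between $\emptyset$ and $\supp(T)$ with labels inherited from $T$, hence both belong to $\Approx_\infty(T)$ and are its supremum and infimum of the family (same check as above). For the empty family I would take the top to be $T$ itself and the bottom to be the spine $\spo(T)$, defined corecursively by $\spo(\alpha)=\alpha$ and $\spo(F\rew T')=\eft\rew\spo(T')$: this corecursion is productive, $\spo(T)\leqfty T$, and $\spo(T)$ is \emph{finite} since its support is an initial segment of a $1$\nobreakdash-branch of $T$, which terminates because $T\in\Types$ forbids branches ending in $1^{\omega}$; hence $\spo(T)\leqfty U$ for all $U\in\Approx_\infty(T)$. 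This yields $\Approx_\infty(T)$ a complete lattice with meet given by the displayed $\bigwedge$\nobreakdash-operator, and — since binary joins of finite restrictions are finite, binary meets of two restrictions one of which is finite are finite (the finiteness clauses of the lemmas), and $\spo(T)$ is finite — $\Approx(T)$ a (bounded-below) sublattice. Distributivity I would get by transport: $U\mapsto\supp(U)$ embeds $\Approx_\infty(T)$ (resp.\ $\Approx(T)$) into the powerset lattice $(\mathcal P(\supp T),\subseteq)$ (resp.\ its sublattice of finite subsets), injectively since a restriction of $T$ is determined by its support, and carries $\vee$ to $\cup$ and $\wedge$ to $\cap$ precisely by the two lemmas; distributivity of the powerset lattice then transfers to the image sublattice.

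I expect no serious obstacle: the genuine difficulty was already absorbed into the two lemmas. The only points needing a little care are checking that $\spo(T)$ is well-defined and truly finite — which is exactly where the restriction to $\Types$ rather than $\Types^{+}$ is used — and noting that no labelling ambiguity arises when forming $\bigvee$ and $\bigwedge$, which is precisely what pairwise boundedness guarantees.
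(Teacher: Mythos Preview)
Your proposal is correct and follows essentially the same approach as the paper, whose entire proof is the single sentence ``The distributivity stems from the distributivity of the set-theoretic union and intersection'' --- i.e., exactly your powerset-embedding argument, with everything else left implicit from the two lemmas. Your explicit construction of the bottom element as the finite ``spine'' of $T$, and your remark that this is precisely where the exclusion of $1^{\omega}$ branches in $\Types$ is used, fills in a detail the paper omits; one cosmetic point is that the name $\spo$ is already taken elsewhere in the paper (for ``source position''), so you should rename your spine operator.
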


\begin{proof}
The distributivity stems from the distributivity of the set-theoretic union and intersection.
\end{proof}

We can likewise construct the joins and the meets of families of forest types (via the set-theoretic operations on the support), provided every pair of elements have an upper bound. The set $\FTypes$ also is a d.c.p.o. and for all f.t. $F$, $\Approx(F)$ is a distributive lattice and $\Approx_{\infty}(F)$ is a complete distributive lattice.

\subsection{A Characterization of Proper Bisupports}

\label{subsecCharBisupp}

Let $B_0\subset \bisupp (P)$. We want to know on what condition $B_0$ is
the support of a derivation $P_0\leqfty P$.

We write $A_0$ for the set of all underlying
outer positions of $\bip$, when $\bip$ spans over $B$.
\begin{itemize}
\item For all $a\in A_0$, we write $T_0(a)$ the labelled tree
induced by $T(a)$ on $\{ c\in \mathbb{N}^*~ |~ (a,~ c) \in B_0\}$.
\item For all $a \in A_0$ and $x\in \mathscr{V}$, we write $C_0(a)(x)$ for the function induced by $C(a)(x)$ on $\{ c\in \mathbb{N}^*~ |~ (a,~ x,~ c)\in B_0\}$.
\end{itemize}

A tedious verification grants that there is a $P_0\leqfty P$ s.t. $\bisupp (P_0)=B_0$ iff the conditions below are satisfied:

\begin{itemize}
\item Support related conditions: $A_0$ is a tree s.t.:
\begin{itemize}
\item For all $a \in A_0$ s.t. $t(\ovl{a})=\symbol{64}$, $a\in
A_0$ implies $(a\cdot 1,~ \epsi)\in B_0$.
\item  For all $a \in A_0$ s.t. $t(\ovl{a})=\lambda x$, $a\in
A_0$ implies $(a,~ \epsi)\in B_0$.\\
\end{itemize}

\item Inner supports related conditions: for all $a \in A,~ T_0(a)$ is a type and for all $x\in \mathscr{V},~ C_0(a)(x)$ is a forest type.\\

\item Axiom rule related conditions: 
for all $x\in \mathscr{V}$,
all $a\in \Ax(x)$ and all $c \in \supp (T(a)),~ 
(a,\, c)\in B_0$ iff $(a,\, x,\, k\cdot c)\in B_0$, where
$k=\tr(a)$.\\

\item Abstraction rule related conditions: for all $x\in \mathscr{V}$,
all $a\in A$ s.t. $t(\ovl{a})=\lambda x$:
\begin{itemize}
\item For all $c \in \mathbb{N}^*,~ (a,\, 1\cdot c)\in B_0$ iff
$(a\cdot 0,\, c)\in B_0$.
\item For all $c \in \mathbb{N}^*$ and all $k\geqslant 2, ~ 
(a,~ k\cdot c)\in B_0$ iff $(a\cdot 0,\, x,\, k\cdot c)\in B_0$.
\item For all $y\in \mathscr{V}-\{x\},~ k\geqslant 2$ and $c \in 
\mathbb{N}^*,~ (a,\, y,\, k\cdot c)\in B_0$ iff 
$(a\cdot 0,\, y,\, k\cdot c) \in B_0$.\\
\end{itemize}

\item Application related conditions:
for all $a\in A$ s.t. $t(\ovl{a})=\symbol{64}$:
\begin{itemize}
\item For all $c \in \mathbb{N}^*,~ (a,~ c)\in B_0$ iff
$(a\cdot 1,~ 1\cdot c)\in B_0$.
\item For all $k\geqslant 2$ and all $c\in \mathbb{N}^*,~ 
(a,~ k\cdot c)\in B_0$ iff $(a\cdot k,~ c)\in B_0$.
\item For all $y\in \mathscr{V},~ k \geqslant 2$ and 
$c \in \mathbb{N}^*$, $(a,~ y,~ k \cdot c)\in B_0$ iff 
$\exists ! \ell \geqslant 1,~ (a\cdot \ell,~ y,~ k\cdot c)\in B_0$.
\end{itemize}
\end{itemize}

\begin{rmk} \label{rmkCharBisupp}
If $P$ is not given, that is, if we have a function 
$P:\, B\rightarrow \mathscr{V}_t \cup \{\rightarrow \}$ where $B$ is 
a set of bipositions (\textit{i.e.} $B\subset \mathbb{N}^*\times \mathbb{N}^*
\cup \mathbb{N}^* \times \mathscr{V}\times \mathbb{N}^*$), on what
condition $P$ is a derivation of $t$ whose support is
$A= \{ a \in \mathbb{N}^* ~ | ~ \text{$a$ is the underlying pos. of a $\bip \in B$} \}$?\\
The above conditions adapts well by replacing $B_0$ by $B$ and $A_0$ by $A$ and adding the following constraints (mostly on labels):
\begin{itemize}
\item $\ovl{A}\subset \supp (t)$.
\item For all $a \in A$ s.t. $t(\ovl{a})=\lambda x$, 
$P(a,\, \epsi)=\rightarrow$ and there is not $k\geqslant 1$ s.t. $a \cdot k\in A$.
\item For all $a \in A$ s.t. $\exists x\in \mathscr{V},~
t(\ovl{a})=x$, 
$C(a)(y)$ is empty for all $y\neq x$ and $\Rt (C(a)(x))$ has
exactly one element.
Thus, for each $a \in A$ s.t. $t(\ovl{a})=x$, we can still 
define $\tr(a)$ as the unique $k$ s.t. $\exists c\in \mathbb{N}^*,~
(a,\, x,\, k\cdot c)\in B$.
\item For all  $a \in A$ s.t. $t(\ovl{a})=\symbol{64}$, $P(a\cdot 1,~ \epsi)=\rew$ and $a\cdot 0 \notin A$. 
\item We must have $P(\bip)=P(\bip')$ for any $\bip$ and $\bip'$ related in
one of the above conditions.
\end{itemize}
\end{rmk}

\subsection{Meets and Joins of Derivations Families}

When $P_0,\, P$ are two derivations typing the same term, we also write $P_0 \leqfty P$ to mean that $P_0$ is the restriction of $P$ on $\bisupp (P_0)$. We set  $\Approx_{\infty} (P)=\{ P_0 \in \Deriv ~ | ~ P_0 \leqfty P\}$.

\begin{lem}
Let $(P_i)_{i\in I}$ be a non-empty family of derivations typing the same term $t$, such that $\forall i,\,j \in I,~ \exists P\in \Deriv,~ P_i,\,P_j\leqfty P$.\\
We define $P(I)$ by $\bisupp (P(I))=\bigcap\limits_{i\in I} \bisupp (P_i)$ and $P(I)(\bip)=P_i(\bip)$ for any $i$.\\
Then, this derivation is correct and the labelled tree $P(I)$ is a derivation (that is finite if one of the $P_i$ is finite). We write $P(I)=\bigwedge\limits_{i\in I} P_i$.
\end{lem}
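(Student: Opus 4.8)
The plan is to follow the template already used in Appendix~\ref{LatticesTPF} for the meet $\bigwedge_{i\in I}T_i$ of a bounded family of types: first check that the labelling of $P(I)$ is well defined, then verify that the pair (bisupport, labelling) satisfies the characterization of derivation bisupports of Remark~\ref{rmkCharBisupp}.

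\emph{Well-definedness.} Write $B=\bisupp(P(I))=\bigcap_{i\in I}\bisupp(P_i)$. The family is non-empty and, since the support of every type contains $\epsi$, we have $(\epsi,\epsi)\in\bisupp(P_i)$ for all $i$, so $B\neq\emptyset$. For $\bip\in B$ and $i,j\in I$ the hypothesis provides $P\in\Deriv$ with $P_i\leqfty P$ and $P_j\leqfty P$, whence $P_i(\bip)=P(\bip)=P_j(\bip)$; thus $P(I)(\bip):=P_i(\bip)$ does not depend on $i$. The same remark shows that, for a fixed outer position $a$ and a variable $x$, the supports $\supp(T_i(a))$ (resp.\ $\supp(C_i(a)(x))$), and hence the types $T_i(a)$ (resp.\ forest types $C_i(a)(x)$), are pairwise bounded for $\leqfty$; so the lemma constructing $\bigwedge_i T_i$ and its forest-type analogue apply, and the inner-support conditions of Remark~\ref{rmkCharBisupp} hold — the restrictions induced by $B$ are exactly $\bigwedge_i T_i(a)$ and $\bigwedge_i C_i(a)(x)$, which are a type and a forest type.

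\emph{Intersection preserves the local conditions.} Since $(a,\epsi)\in\bisupp(P_i)$ whenever $a\in\supp(P_i)$, the set of underlying outer positions of $B$ is $A=\bigcap_i\supp(P_i)$, a tree with $\ovl{A}\subset\supp(t)$. Every remaining clause of Remark~\ref{rmkCharBisupp} has one of three shapes and is preserved by the intersection $B=\bigcap_i\bisupp(P_i)$: (i) the structural and label constraints at abstraction, application and axiom nodes (for instance $C(a)(y)=\eft$ for $y$ not the head variable, $\Rt(C(a)(x))$ a singleton, $a\cdot 0\notin A$ at an application node, and the agreement $P(\bip)=P(\bip')$ on related bipositions) are inherited from $P_{i_0}$ for any fixed $i_0$, using $B\subset\bisupp(P_{i_0})$ and well-definedness of labels; (ii) the equivalences "$\bip_1\in B_0\iff\bip_2\in B_0$'' (the axiom, abstraction and first two application clauses) pass to the intersection because $\bip_1\in B$ iff $\bip_1\in\bisupp(P_i)$ for all $i$ iff $\bip_2\in\bisupp(P_i)$ for all $i$ iff $\bip_2\in B$; (iii) the implications of the form "$a\in A_0$ forces some biposition into $B_0$'' hold because $a\in A$ means $a\in\supp(P_i)$ for every $i$, so the prescribed biposition belongs to every $\bisupp(P_i)$, hence to $B$.

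\emph{The existential-uniqueness clause and conclusion.} The one clause genuinely needing the boundedness hypothesis is the third application clause: $(a,y,k\cdot c)\in B$ iff there is a \emph{unique} $\ell\geqslant 1$ with $(a\cdot\ell,y,k\cdot c)\in B$. The "$\Leftarrow$'' direction is of shape (ii). For "$\Rightarrow$'', each $P_i$ yields a unique $\ell_i\geqslant 1$ with $(a\cdot\ell_i,y,k\cdot c)\in\bisupp(P_i)$; fixing $i_0\in I$ and, for each $i$, a common upper bound $P^{(i)}$ (for $\leqfty$) of $P_{i_0}$ and $P_i$, the uniqueness of the relevant index \emph{inside $P^{(i)}$} — applied to $(a\cdot\ell_{i_0},y,k\cdot c)\in\bisupp(P_{i_0})\subseteq\bisupp(P^{(i)})$ and to $(a\cdot\ell_i,y,k\cdot c)$ — forces $\ell_i=\ell_{i_0}$. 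Hence all the $\ell_i$ equal a single $\ell$, so $(a\cdot\ell,y,k\cdot c)\in B$, and uniqueness of $\ell$ in $B$ is inherited from $P_{i_0}$. I expect this to be the only delicate point: uniqueness is not stable under arbitrary intersections, and it is precisely the pairwise-upper-bound assumption that restores it. By Remark~\ref{rmkCharBisupp}, $P(I)$ is then a derivation of $t$; and if some $P_{i_0}$ is finite, $B\subset\bisupp(P_{i_0})$ is finite, so $P(I)$ is finite as well.
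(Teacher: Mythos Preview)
Your proof is correct and follows the same approach as the paper: verify that $B=\bigcap_i\bisupp(P_i)$ satisfies the bisupport characterization of Remark~\ref{rmkCharBisupp}, using the type/forest-type meet lemmas of Appendix~\ref{LatticesTPF} for the inner-support conditions and the pairwise upper-bound hypothesis for well-definedness of labels. The paper's proof is considerably terser, dispatching all the local clauses with the single line ``$\bip\in\bisupp(P(I))$ iff $\forall i,\bip\in\bisupp(P_i)$ iff $\forall i,\bip'\in\bisupp(P_i)$ iff $\bip'\in\bisupp(P(I))$'' and stating that the rest is ``proven likewise''; your treatment is more detailed, and in particular you isolate the third application clause (the $\exists!\,\ell$ condition) as the one place where the argument is not a pure equivalence of bipositions and where the pairwise bound is genuinely needed to align the witnesses $\ell_i$ across the family --- a point the paper leaves implicit.
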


\begin{proof}
The proof is done by verifying that $P(I)$ satisfies the characterization of
the previous subsection, including Remark \ref{rmkCharBisupp}. It mostly comes to: 
\begin{itemize}
\item The correctness of the definition is granted by the upper bound condition.
\item The definition $P(I)$ grants proper types and contexts, thanks
to subsection \ref{LatticesTPF} .
\item For any $\bip$ and $\bip'$ put at stakes in any of the conditions
of the previous subsection, $\bip \in \bisupp (P(I))$ iff $\forall i\in I,~ \bip \in \bisupp (P_i)$ iff $\forall i \in I,~ \bip'\in  \bisupp (P_i)$ iff $\bip'\in \bisupp (P(I))$.
\item The remaining conditions are proven likewise.
\end{itemize}
\end{proof}

\begin{lem}
Le $(P_i)_{i\in I}$ b a non-empty family of derivations typing the same term, such that $\forall i,\, j\in I,~ \exists P\in \Deriv,~ P_i,~P_j\leqfty P$.\\
We define the labelled tree $P(I)$ by $\bisupp (P(I))=\bigcup\limits_{i \in I} \bisupp (P_i)$ and $P(I)(\bip)=P_i(\bip)$ for any $i$ s.t. $\bip \in \bisupp (P_i)$.\\
Then, this definition is correct and $P(I)$ is a derivation (that is finite if $I$ is finite and all the $P_i$ are). We write $P(I)=\bigvee\limits_{i\in I} P_i$.
\end{lem}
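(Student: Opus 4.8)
The plan is to mirror the proof of the preceding lemma (the meet case): I would check that $P(I)$, viewed as a partial labelling of bipositions, satisfies the characterization of proper bisupports of \S~\ref{subsecCharBisupp}, together with Remark~\ref{rmkCharBisupp} (since $P(I)$ is not given a priori as the restriction of an already-known derivation). Write $A(I)=\bigcup_{i\in I}A_i$ for the set of underlying outer positions, where $A_i=\supp(P_i)$, and note $\epsi\in\bigcap_{i\in I}A_i$ since every $P_i$ types $t$.

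First I would check that the labelling is well defined: if $\bip\in\bisupp(P_i)\cap\bisupp(P_j)$, pick $P\in\Deriv$ with $P_i,P_j\leqfty P$; then $P_i(\bip)=P(\bip)=P_j(\bip)$, so $P(I)(\bip)$ does not depend on the chosen index. Likewise $\ovl{A(I)}\subseteq\supp(t)$ because each $\ovl{A_i}\subseteq\supp(t)$, and $A(I)$ is a tree, being a union of non-empty downward-closed sets. For each $a\in A(I)$, the right-hand inner labelling at $a$ is the union of the labelled trees $T_i(a)$ over the $i$ with $a\in A_i$; restricting the pairwise-upper-bound hypothesis to position $a$ shows any two of these have a common upper bound in $\Types$, so by the join lemma of \S~\ref{LatticesTPF} their union is again a type, finite when $I$ and the $T_i(a)$ are finite. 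The same applies to each $C(I)(a)(x)$ via the join lemma for forest types.

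Then comes the bulk: running through the support-, axiom-, abstraction- and application-related conditions of \S~\ref{subsecCharBisupp}. These split into two kinds. The \emph{equivalences} of the form ``$\bip\in B$ iff $\bip'\in B$'' (and the label compatibility $P(\bip)=P(\bip')$) are immediate: if $\bip\in\bisupp(P(I))$ then $\bip\in\bisupp(P_i)$ for some $i$, and since $P_i$ is a derivation it already satisfies that clause, so $\bip'\in\bisupp(P_i)\subseteq\bisupp(P(I))$ with matching label; symmetrically starting from $\bip'$. The \emph{functional/uniqueness} clauses — the ``$\exists!\,\ell$'' in the application condition, the ``exactly one element'' of $\Rt(C(a)(x))$ at a variable node, and the ``no $a\cdot k$ with $k\geqslant1$'' at an abstraction node — are where the upper-bound hypothesis is genuinely used: were two witnesses $a\cdot\ell_1$ and $a\cdot\ell_2$ both in $\bisupp(P(I))$, they would come from some $P_{i_1}$ and $P_{i_2}$, hence both lie in a common upper bound $P$, contradicting that $P$ is a derivation. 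Finiteness of $P(I)$ when $I$ and all the $P_i$ are finite is clear, as $\bisupp(P(I))$ is then a finite union of finite sets; and since no new rule is ever introduced, $P(I)$ concludes with $C(I)(\epsi)\vdash t:\,T(I)(\epsi)$.

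I expect the main obstacle to be purely bookkeeping: there is nothing deep, but one must make sure the ``$\exists i$'' quantifier genuinely transfers across each characterization clause — which works precisely because each individual $P_i$ already satisfies that clause, and because two distinct $P_i$'s cannot disagree on shared data (common upper bounds handle both well-definedness and the uniqueness clauses). Once this pattern is isolated, the verification is routine and parallels the meet case almost verbatim; it then follows formally that $P(I)$ is the least upper bound of $(P_i)_{i\in I}$ for $\leqfty$, justifying the notation $\bigvee_{i\in I}P_i$.
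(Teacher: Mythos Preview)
Your proposal is correct and follows exactly the paper's approach: verify that $P(I)$ satisfies the characterization of proper bisupports in \S~\ref{subsecCharBisupp} (together with Remark~\ref{rmkCharBisupp}), using that $\bip\in\bisupp(P(I))$ iff $\exists i,\ \bip\in\bisupp(P_i)$ to transfer each clause from the $P_i$. Your treatment is in fact more careful than the paper's terse proof, since you explicitly isolate the uniqueness clauses (the ``$\exists!\,\ell$'' and the singleton $\Rt(C(a)(x))$) and explain why they require the pairwise upper-bound hypothesis, which the paper leaves implicit.
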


\begin{proof}
The proof is done by verifying that $P(I)$ satisfies the characterization of
the previous subsection, as well as for the previous lemma. But here, for any $\bip$ and $\bip'$ put at stakes in any of the conditions of the previous subsection, $\bip \in \bisupp (P(I))$ iff $\exists i\in I,~ \bip \in \bisupp (P_i)$ iff $\exists i \in I,~ \bip'\in \bisupp (P_i)$ iff $\bip'\in \bisupp (P(I))$.
\end{proof}

The previous lemmas morally define the join and the meet of derivations (under the same derivation) as their set-theoretic union and intersection. More precisely, they entail:


\begin{prop}
The set of derivations typing a same term $t$, endowed with $\leqfty$ is a d.c.p.o. The join of a direct set is given by the above operator.\\
Moreover, for any derivation $P$, $\Approx(P)$ is a distributive lattice (sometimes empty) and $\Approx_{\infty}(P)$ is a complete distributive lattice, and the meet is given by the above operator.
\end{prop}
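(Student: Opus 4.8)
The plan is to deduce the whole statement from the two lemmas of this subsection --- existence of meets and joins of a family of derivations admitting a common upper bound, computed as intersection and union on bisupports --- together with the characterisation of proper bisupports of~\S\ref{subsecCharBisupp}; what is left is essentially order-theoretic bookkeeping, plus a check that $\leqfty$ is a genuine partial order.

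First I would verify that $\leqfty$ is a partial order on the derivations typing a fixed $t$. Reflexivity and transitivity are immediate from the definition of $\leqfty$ on derivations. For antisymmetry, I would observe that a derivation $P$ typing $t$ is completely determined by the pair $(\bisupp(P),\,P\rstr{\bisupp(P)})$: every position $a\in\supp(P)$ contributes the right biposition $(a,\epsi)$, since the support of the type on the right of a sequent always contains $\epsi$, so $\supp(P)$ is recovered as the set of first components of the bipositions of $P$; then each $T(a)$ is the labelled tree $c\mapsto P(a,c)$ on $\{c\mid(a,c)\in\bisupp(P)\}$ and each $C(a)(x)$ the forest $c\mapsto P(a,x,c)$ on $\{c\mid(a,x,c)\in\bisupp(P)\}$, while the term component $t\rstr{\ovl a}$ of each judgment is fixed by $t$. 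Hence $\bisupp(P_1)=\bisupp(P_2)$ with equal labels forces $P_1=P_2$.

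Next I would turn to the order-theoretic content. If $D$ is a non-empty directed set of derivations typing $t$, any two of its elements admit an upper bound in $D$, so the join lemma of this subsection applies and produces a derivation $\bigvee D$ with bisupport $\bigcup_{P_*\in D}\bisupp(P_*)$ and the inherited labelling; this bisupport contains each $\bisupp(P_*)$ and is contained in the bisupport of any upper bound of $D$, so $\bigvee D$ is the least upper bound, which is the dcpo property and also identifies $\vee D$ with the announced union operator. Fixing now a derivation $P$: for any non-empty family $(P_i)_{i\in I}$ of approximations of $P$, the common upper bound $P$ makes both family lemmas applicable, and $\bigwedge_i P_i$, $\bigvee_i P_i$ have bisupports $\bigcap_i\bisupp(P_i)$ and $\bigcup_i\bisupp(P_i)$ --- both included in $\bisupp(P)$ --- with labels agreeing with $P$, hence both lie in $\Approx_\infty(P)$ and are its meet and join. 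Adjoining the greatest element $P$ to take care of the empty family, $\Approx_\infty(P)$ has all non-empty meets and a top, hence all meets, hence is a complete lattice; for $\Approx(P)$ I would note that finite intersections and unions of finite bisupports are finite, so $\Approx(P)$ is closed under binary meets and joins and is a lattice. In both cases distributivity is inherited from the distributivity of $\cap$ and $\cup$ on $\bisupp(P)$, since the lattice operations are literally intersection and union of bisupports carrying the forced common labelling.

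The main obstacle --- really the only place where something could go wrong --- is the assertion that the set-theoretic union, resp.\ intersection, of the bisupports of a family of derivations admitting a common upper bound is again the bisupport of a derivation typing $t$. This is exactly what the two family lemmas establish, by verifying that every clause of the characterisation of~\S\ref{subsecCharBisupp} --- each of the biconditional shape ``$\bip\in B$ iff $\bip'\in B$'', together with ``the underlying positions form a tree'' --- is preserved under arbitrary unions and intersections of proper bisupports, the common upper bound being what keeps the labels coherent. Granting those lemmas, the present proposition is pure order theory; the one subtlety to keep in mind is that completeness of $\Approx_\infty(P)$ needs the empty meet and join, which is why the greatest element $P$ must be adjoined explicitly.
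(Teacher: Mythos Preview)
Your proposal is correct and matches the paper's (implicit) approach: the proposition is stated immediately after the two family lemmas and is meant to follow from them by exactly the order-theoretic bookkeeping you carry out, with distributivity inherited from set-theoretic $\cap$/$\cup$ on bisupports. One cosmetic remark: there is no need to ``adjoin'' $P$ to $\Approx_\infty(P)$, since $P\leqfty P$ already makes it the top element.
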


\subsection{Reach of a derivation}

\label{subsecReach}
\begin{defi}
\begin{itemize}
\item For any derivation $P$, we set $\Reach (P) =\{ \bip \in \bisupp (P)~ |~
\exists  \supf P\leqslant P,~ \bip \in \supf P\}$.
\item If $\bip \in \Reach (P)$, we say $\bip$ is \textbf{reachable}.
\item If $B\subset \bisupp (P)$, we say $B$ is \textbf{reachable} if there is 
$\supf P\leqslant P$ s.t. $B\subset \bisupp  (\supf P)$.
\end{itemize}
\end{defi}

Since $\Approx (P)$ is a complete lattice and the bisupports of its 
elements are finite, we can write $P<\bip>$ (resp. $P<B>$) for the 
smallest $\supf P$ containing $\bip$ (resp. containing $B$), for any 
$\bip \in \Reach (P)$ (resp. for any reachable $B\subset \bisupp 
(P)$).\\

\begin{prop}
Let $B\subset \bisupp (P)$. Then $B$ is reachable iff $B$ is finite and
$B\subset \Reach (P)$.\\
In that case, $P<B>=\bigvee\limits_{\bip \in B} P<\bip>$.
\end{prop}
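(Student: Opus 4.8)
The plan is to prove the characterization of reachable sets by combining the dcpo/lattice structure established for $\Approx_\infty(P)$ with the finiteness of every finite approximation's bisupport. I would set up the argument around the two directions of the equivalence and then identify $P\langle B\rangle$.

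First I would prove the forward direction: if $B$ is reachable, then by definition there is a finite $\supf P\leqslant P$ with $B\subset \bisupp(\supf P)$. Since $\bisupp(\supf P)$ is finite (a finite derivation has finite bisupport), $B$ is finite. Moreover, for each individual $\bip\in B$ we have $\bip\in \bisupp(\supf P)$ with $\supf P\leqslant P$ finite, so $\bip\in \Reach(P)$; hence $B\subset \Reach(P)$. For the converse direction, assume $B$ is finite and $B\subset \Reach(P)$. For each $\bip\in B$, pick the smallest finite approximation $P\langle\bip\rangle\leqslant P$ containing $\bip$, which exists because $\Approx(P)$ is a finite (in particular complete) lattice — this is exactly the content of the sentence preceding the proposition. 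Then form $\supf P := \bigvee_{\bip\in B} P\langle\bip\rangle$. Since $B$ is finite and each $P\langle\bip\rangle$ is finite, the join of finitely many finite elements is finite (the join operator on $\Approx(P)$ takes finite families of finite derivations to finite derivations, by the join lemma in \S\ref{appLat}). Each $P\langle\bip\rangle\leqslant P$, so their join is $\leqslant P$ as well (using that $\Approx_\infty(P)$ is a complete lattice, so the join computed inside it agrees with the set-theoretic union of bisupports and stays below $P$). Finally $\bisupp(\supf P)=\bigcup_{\bip\in B}\bisupp(P\langle\bip\rangle)\supseteq B$, so $B$ is reachable.

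For the identification $P\langle B\rangle=\bigvee_{\bip\in B}P\langle\bip\rangle$, I would argue that the right-hand side is a finite approximation of $P$ containing $B$ (just shown), so by minimality $P\langle B\rangle\leqslant \bigvee_{\bip\in B}P\langle\bip\rangle$. Conversely, $P\langle B\rangle$ contains every $\bip\in B$, and since $P\langle\bip\rangle$ is the \emph{smallest} approximation containing $\bip$, we get $P\langle\bip\rangle\leqslant P\langle B\rangle$ for each $\bip$; taking the join, $\bigvee_{\bip\in B}P\langle\bip\rangle\leqslant P\langle B\rangle$. The two inequalities give equality.

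The main obstacle is a bookkeeping one rather than a conceptual one: one must be careful that the join $\bigvee_{\bip\in B}P\langle\bip\rangle$ computed in the lattice $\Approx_\infty(P)$ really is a \emph{finite} derivation and really lies below $P$. Both facts follow from the join lemma of Appendix~\ref{appLat} (the join of a finite family of finite derivations under a common bound $P$ is a finite derivation $\leqslant P$), so the delicate point is just to check that the hypothesis of that lemma — the existence of a common upper bound — is met, which is immediate here since $P$ itself bounds all the $P\langle\bip\rangle$. Once that is in place, everything else is a direct application of the definitions of $\Reach$, reachability, and the minimal approximations $P\langle\cdot\rangle$.
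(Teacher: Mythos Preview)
Your argument is correct and is exactly the natural one; the paper itself states this proposition without proof, so there is nothing to compare against beyond noting that your line of reasoning is the intended one. One small wording quibble: you echo the paper's claim that ``$\Approx(P)$ is a finite (in particular complete) lattice,'' but what you actually use---and all you need---is that finite joins of finite approximations of $P$ are again finite approximations of $P$ (the join lemma of Appendix~\ref{appLat}) and that $P\langle\bip\rangle$ exists for $\bip\in\Reach(P)$, which the paper grants just before the proposition.
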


\begin{defi}
If $\Reach (P)$ is non-empty, we define $P<\Reach>$ as the induced
labelled tree by $P$ on $\Reach (P)$.
\end{defi}

We have actually $P<\Reach> = \bigvee\limits_{\bip \in \Reach (P)} P<\bip>$, so
$P$ is a derivation. By construction, $P$ is approximable.

We can ask ourselves if $P$ is approximable as soon as every biposition
at the root is in its reach. It would lead to a reformulation of the 
approximability condition. We have been unable to answer this question yet.

\subsection{Proof of the subject expansion property}

\label{appExpans}

We reuse the notations and assumptions of \S \ref{secExpans}. We set 
$A'=\supp(P')$. As mentioned in \S \ref{subsecSRSE}, performing an 
expansion of a term inside demands to choose new axiom tracks. We will 
do this \textit{uniformly}, \textit{i.e.} we fix an injection 
$\code{\cdot}$ from $\mathbb{N}^*$ to $\mathbb{N}-\{0,\,1\}$ and any 
axiom rule created at position $a$ will use the axiom track value 
$\code{a}$.

Assume $\supf P'\leqslant P'$. Let $N \in \mathbb{N}$ s.t., for all 
$n\geqslant N,~ b_n \notin \ovl{\supf A'}$ with $\supf A'=\supp (\supf 
P')$. For $n\geqslant N$, we write $\supf P'(n)$ for the derivation 
replacing $t'$ by $t_n$ in $\supf P'$. This derivation is correct 
according to the subject substitution lemma (section \ref{secRed}), 
since $t_n(\ovl{a})=t'(\ovl{a})$ for all $a\in \supf A'$.

Then we write $\supf P'(n,~ k)$ (with $0\leqslant k \leqslant n$) the 
derivation obtained by performing $k$ expansions (w.r.t. our 
reduction sequence and $\code{\cdot}$). Since $b_{n}$ is not in $A$, we 
observe that $\supf P'(n+1,\, 1)= \supf P'(t_n)$. Therefore, for all 
$n\geqslant N,~ \supf P'(n,\, n)=\supf P'(N,\, N)$. Since we could 
replace $N$ by any $n\geqslant N$, $\supf P$ is morally $\supf 
P'(\infty,\, \infty)$. We write $P=P'(\init)$ to refer to this 
deterministic construction.\\


We set $\mathscr{D}=\{\supf P'(\init)~ |~ \supf P'\leqslant P'\}$. Let us 
show that $\mathscr{D}$ is a directed set.

Let $\supf P'_1,\, \supf P'_2\leqslant P'$. We set 
$\supf P'=\supf P'_1 \vee \supf P'_2$.             
Let $N$ be great enough so that $\forall n\geqslant N,~ b_n \notin 
\ovl{\supf A'}$ with $\supf A'=\supp (\supf P')$.

We have $\supf P'_i\leqslant \supf P'$, so $\supf P_i'(N) \leqslant 
\supf P(N)$, so, the by monotonicity of \textit{uniform} expansion, 
$\supf P'_i(N,\,N)\leqslant \supf P'(N,\,N)$, i.e. 
$\supf P_i(\init)\leqslant \supf P(\init)$.

Since $\mathscr{D}$ is directed, we can set 
$P=\bigvee\limits_{\supf P'\leqslant P'} \supf P'(\init)$.
Since for any $\supf P'\leqslant P$ and the associated usual notation, 
$\supf C(\epsi)=\supf C'(\epsi),~ \supf T(\epsi)=\supf T'(\epsi)$ and 
$C(\epsi),~ C'(\epsi),~ T(\epsi),~ T'(\epsi)$ are the respective
infinite join of $\supf C(\epsi),~ \supf C'(\epsi),~ 
\supf T(\epsi),~ \supf T'(\epsi)$ when $\supf P'$ ranges over $\Approx(P')$,
we conclude that $C(\epsi)=C'(\epsi)$ and $T(\epsi)=T'(\epsi)$.

We can also prove that different choices of coding functions $\code{\cdot}$ 
yield \textit{isomorphic} derivation typing $t$. We start by proving it when 
$P'$ is finite.

\newpage
\section{Approximability of the quantitative NF-derivations}

\label{appNF}

We show in this appendix that every quantitative derivation typing a 
normal form $t$ is approximable. We use the same notations as in Section 
\ref{secNF}: we consider a derivation $P$ built as in Subsection 
\ref{subsecSpCons}, from a normal form $t$, a d-support $A$ of $t$ and a 
type $T(a)$ given for each full position of $A$. It yields a family of 
contexts $(C(a))_{a\in A}$ and of types $(T(a))_{a\in A}$ such that 
$P(a)$ is $C(a)\vdash t|_{\overline{a}}: \,T(a)$ for all $a\in A$.

 

\subsection{Degree of a position inside a type in a derivation}

\textbullet~ For each $a$ in $A$ and each position $c$ in $S(a)$ such that
$S(a)(c)\neq X_i$, we define the number $\depth_s(c)$ by:
\begin{itemize}
\item When $a$ is a full node, $\depth_s(c)$ is the applicative depth of $a$.
\item When $a$ is an abstraction position: the value of $\depth_s(c)$ for
the positions colored in red is the applicative depth of $a$
(and of $\mathring{a}$).
$$E(a)(x_1)\red{\rightarrow} E(a)(x_2)\red{\rightarrow} \ldots \red{\rightarrow}
E(a)(x_n)\red{\rightarrow T(\mathring{a})}$$
\item When $a$ is partial: the value of $\depth_s(c)$ for the positions 
colored in red is the applicative depth of $a$.
$$F_1(a)\red{\rightarrow} \ldots
\red{\rightarrow} F_k(a)\red{\rightarrow T(\mathring{a})}$$
\end{itemize}

For each $a\in A$ and each position $c$ in $T(a)$, we define the number
$\depth_s(c)$ (that is the applicative depth of the position $a'$ on which
$c$ depends) by extending $\depth_s$ via substitution.

There again, for each $a\in A$, each variable $x$ and each position
$c$ in $C(a)(x)(c)$ we define $\depth_s(c)$ by extending $\depth_s$ via 
substitution.

The definition of $\depth_s(c)$ in $T(a)$ and $C(a)(x)$ is sound, because in
$E(a)(x)$ and $F_k(x)$, there a no symbol other than the $X_i$. But
the $X_i$ disappear thanks to the coinductive definition of $T(a)$: every
position $c$ of $T(a)$ will receive a value for $\depth_s(c)$.

\textbullet~ For $c\in \mathbb{N}^*$, we set $s(c)=\max 
(\ell,~ c_0,~ c_1,\ldots,~ c_{n-1})$ 
where $n=|a|$ and $\ell=|\{ 0 \leqslant i \leqslant n~ |~ c_i\geqslant 2\}|$.\\
For each $a$ in $A$ and each position $c$ in $S(a)$ such that
$S(a)(c)\neq X_i$, we define the number $\depth_i(c)$ by:
\begin{itemize}
\item When $a$ is a full node: $\depth_i(c)=s(c)$ ($c$ is a position of 
$T(a)$).
\item When $a$ is an abstraction node:
if $c$ is a position colored in blue, $\depth_i(c)=0$ and if $c$ is
colored in red, $\depth_i(c)=s(c')$ (when $c'$ is the position corresponding to
$c$ in $T(\mathring{a})$, \textit{i.e.} $c=0^k\cdot c')$
$$E(a)(x_1)\blue{\rightarrow} E(a)(x_2)
\blue{\rightarrow} \ldots \blue{\rightarrow}
E(a)(x_k)\blue{\rightarrow} \red{T(\mathring{a})}$$
\item  When $a$ is partial:
if $c$ is a position colored in blue, $\depth_i(c)=0$ and if $c$ is
colored in red, $\depth_i(c)=s(c')$ (when $c'$ is the position corresponding 
to $c$ in $T(\mathring{a}$), \textit{i.e.} $c=0^k\cdot c')$
$$F_1(a)\blue{\rightarrow} ...
\blue{\rightarrow} F_k(x)\blue{\rightarrow}\red{T(\mathring{a})}$$
\end{itemize}

We extend likewise $n(c)$ for inner positions of $T(a)$ or in $C(a)(x)$
via substitution.

\begin{defi}
If $c$ is a position in $T(a)$ or in $C(a)(x)$, the \textbf{degree} of $c$ is
defined by $\deg c=\max(\depth_i(c),~ \depth_i(c))$.
\end{defi}

\subsection{More formally...}

For $c\in \mathbb{N}^*$, we set 
$s(c)=\max 
(\ell,~ c_0,~ c_1,\ldots,~ c_{n-1})$
where $n=|a|$ and $\ell=|\{ 0 \leqslant i \leqslant n~ |~ c_i\geqslant 2\}|$.

For all $a\in A$ and $k\in \mathbb{N}$, we set $S\supo (a)=X_a$ and 
$S^{k+1}(a)= S^k(a)[S(a')/X_{a'}]_{a'\in \mathbb{N}^*}$.

For all $k \in \mathbb{N}$, we set $\supp^* (S^k(a))=\{c\in \supp (S^k(a))~ |~ 
S^k(a)(c)\neq X_{a'}\}$.\\

If $c\in \supp (T(a))$, there is a minimal $n$ s.t. $c\in \supp^* (S^k(a))$.
We denote it $\cd(a)(c)$ (call-depth of $c$ at pos. $a$).

In that case, there are unique $c'\in \supp (T(a))$ and $a'\in A$ s.t.
$c'\leqslant c,~ S^{k-1}(a)(c')=X_{a'}$ (we have necessarily $a\leqslant a'$).
We write $a'=\spo(a)(c)$ (source position of $c$ at pos. $a$)
and $c'=\lcp(a)(c)$ (last calling position of $c$ at pos. $a$).\\
Then, we set $\depth_s(a)(c):= \ad{a'}$.\\

With the same notations, $T(a')$ is of the shape $C(a_1)(x_1)\rightarrow 
\ldots C(a_k)(x_k) \rew C(\mathring{a}')$ or $\Rft_1(a')\rightarrow 
\Rft_2(a')\rightarrow \ldots \rew \Rft_k(a') \rightarrow T(\mathring{a'})$, where the forest type $\Rft_k(a')$ is $(\code{a_0}\cdot T(a_0))_{a_0 \in AP_k(a')}$.
There are two cases:
\begin{itemize}
\item $c=c'\cdot 0^j$ with $j< k$: we set $n(a)(c)=0$.
\item $c=c'\cdot 0^k:c"$ (with $c"\in \supp (T(\mathring{a'}))$): we write 
$\sip (c)= c"$ (source inner position of $c$ at pos. $a$) and set 
$\depth_i(a)(c)= s(c")$.
\end{itemize}


\begin{defi}
If $c$ is a position in $T(a)$ or in $C(a)(x)$, the \textbf{degree} of $c$ at 
pos. $a$ is defined by $\deg (a,~ c)=\max(\depth_s(c),\, \depth_i(c))$.
\end{defi}

\begin{lem} \label{CuttingLemma}
For all $k\in \mathbb{N}$ and all $a,~ c\in \mathbb{N}^*$, we have $a\in A_n$ and 
$c\in \supp (S_n^k(a))$ iff $a\in A,~ c\in \supp (S^k(a))$ and
$\deg (a,~ c) \leqslant n$. \\
In that case, $S^k(a)(c)=S_n^k(a)(c)$.
\end{lem}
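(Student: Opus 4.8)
The plan is to prove Lemma~\ref{CuttingLemma} by induction on $k$, using that $S^k(a)$ and $S_n^k(a)$ are produced by $k$-fold iteration of the \emph{same} one-step substitution $[S(a')/X_{a'}]_{a'}$, the only difference being that in the truncated construction one substitutes only at nodes $a'\in A_n$ and replaces each forest type $E(a')(x)$, $F_j(a')$ and each return type $T(\mathring{a'})$ occurring inside $S(a')$ by its cut version (keeping an argument track $\code{a_0}$ only when $a_0\in A_n$, and an inner position $c$ of a return type only when $s(c)\leqslant n$).

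First I would settle the base case $k=0$: here $S^0(a)=S_n^0(a)=X_a$, so the only position is $c=\epsi$ and the statement reduces to $a\in A_n \iff (a\in A$ and $\deg(a,\epsi)\leqslant n)$. This is obtained by unfolding the definition of $A_n$ and checking, according to whether $a$ is a full, an abstraction or a partial node, that the two clauses of the truncation criterion ($\ad{a}\leqslant n$, and every track of $a$ together with every inner index below it is $<n$) match exactly the two quantities $\depth_s(a)(\epsi)$ and $\depth_i(a)(\epsi)$ whose maximum is $\deg(a,\epsi)$.

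For the inductive step, fix $c\in\supp(S^{k+1}(a))$. If $c$ already lies in $\supp(S^k(a))$ with $S^k(a)(c)\neq X_{a'}$, then $\cd(a)(c)\leqslant k$, the $(k+1)$-st unfolding leaves $c$ untouched, $\deg(a,c)$ depends only on the layers present at stage $\cd(a)(c)$, and the claim follows from the induction hypothesis at stage $k$. Otherwise $c$ factors uniquely as $c=c'\cdot c''$ with $S^k(a)(c')=X_{a'}$ and $c''\in\supp(S(a'))$; the induction hypothesis turns ``$c'\in\supp(S_n^k(a))$ and $S_n^k(a)(c')=X_{a'}$'' into ``$c'\in\supp(S^k(a))$, $S^k(a)(c')=X_{a'}$, and the stage-$k$ degree of $c'$ is $\leqslant n$'', so it remains to analyse the single extra layer $S(a')$. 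One case-splits on whether $a'$ is full, an abstraction, or partial, and on whether $c''$ enters a forest type $E(a')(x)$ or $F_j(a')$ — in which case $c''$ threads through a track $\code{a_0}$ towards a leaf $X_{a_0}$ and survives the cut iff $a_0\in A_n$, which is controlled by $\depth_s$ through a further use of the induction hypothesis — or enters the return type $T(\mathring{a'})$ — in which case $c''$ survives iff $s(\sip(c''))\leqslant n$, which is the $\depth_i$ contribution. Since $\deg(a,c)$ is the maximum of $\ad{a'}$, the stage-$k$ degree of $c'$, and the degree of $c''$ inside $S(a')$, and a maximum of finitely many integers is $\leqslant n$ iff each term is, one obtains $c\in\supp(S_n^{k+1}(a))\iff(c\in\supp(S^{k+1}(a))$ and $\deg(a,c)\leqslant n)$. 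The equality $S^k(a)(c)=S_n^k(a)(c)$ on the shared part of the supports then follows by a trivial sub-induction, since the two constructions copy the same label — an $X$-variable, $\rightarrow$, or a type variable — at every surviving position.

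The main obstacle is precisely this bookkeeping in the inductive step: one has to describe exactly how a position of $S^{k+1}(a)$ is threaded through a chain of substituted $X$-variables followed by an excursion into a forest type or a return type, and verify that $\deg(a,c)=\max(\depth_s(a)(c),\depth_i(a)(c))$ telescopes along that chain so that its two components match, clause by clause, the ``applicative depth $\leqslant n$'' and ``all tracks and inner indices $<n$'' halves of the truncation. The auxiliary functions $\cd$, $\spo$, $\lcp$, $\sip$ are exactly the data that make this chain-tracking rigorous; once their compatibility with one-step unfolding is recorded, the induction is routine.
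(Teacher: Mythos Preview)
Your proposal is correct and follows essentially the same route as the paper: induction on $k$, with the inductive step factoring $c=c'\cdot c''$ through the last calling position $c'=\lcp(a)(c)$ and source position $a'=\spo(a)(c)$, then case-splitting on whether the new layer $c''$ lands on an $X$-leaf or inside a return type $T(\mathring{a'})$. One small imprecision: $\deg(a,c)$ is not actually a telescoping maximum accumulated across the $k$ layers---it is determined solely by the \emph{last} layer (namely $\ad{\spo(a)(c)}$ and $s(\sip(c))$)---so the ``stage-$k$ degree of $c'$'' you mention is redundant (for an $X$-leaf $c'$ it equals $\ad{a'}$, which is already one of your other terms); this does not affect the argument, but streamlines the bookkeeping once noticed.
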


\begin{proof}
By a simple but tedious induction on $k$.\\

\begin{itemize}
\item Case $k=0$: \\
If $a\in A_n$ and $c\in \supp (S_n^0(a))$, then $\ad{a}\leqslant n$ (by def. of
$A_n$) and $c=\epsi$. 
By definition, $\depth_s(a)(\epsi)=\ad{a}\leqslant n$ and 
$\depth_i(a)(\epsi)=0$. Thus, $\deg (a,~ c)\leqslant n$.\\
Conversely, if $c\in \supp (S^0(a))$ and $\deg (a,~ \epsi)\leqslant n$, 
we have likewise $c=\epsi$ and $\spo(a)(\epsi)=a$, so $\ad{a}\leqslant 
n$. Thus, $a\in A_n$ and then $c\in S_n^0(a)$.\\
\item Case $k+1$:\\
If $a\in A_n$ and $c\in \supp (S_n^{k+1}(a))$, we assume that $a\notin 
\supp (S_n^k(a))$ (case already handled by IH). We set $a'=\spo_n
(a)(c)$ and $c'=\lcp_n(a)(c)$ (thus, $S_n^k(a)(c')=X_{a'}$). 
By IH, we have also $a'\in A,~ a'=\spo(a)(c)$ and $c'=\lcp(a)(c)$. 
We have two subcases, depending if $S_n^{k+1}(a)(c)=X_{a"}$ holds or not.\\
\begin{itemize}
\item If $S_n^{k+1}(a)(c)=X_{a"}$ (with necessarily $a"\in A_n$), 
then $c=c'\cdot 0^j:\ell$ with $j< \rdeg(a')$ and $\ell$ integer
and, by IH, $c'\in \supp (S^k(a))$ and $S^k(a)(c')=X_{a'}$.\\
Then $c=c'\cdot 0^j:\ell \in \supp (S^{k+1}(a)),~ \depth_s(a)(c)=
\ad{a"}\leqslant n$ (since $a"\in A_n)$ 
and $\depth_i(a)(c)=0$. So we have $\deg (a, c)\leqslant n$. \\
\item  If $S_n^{k+1}(a)(c)\neq X_{a"}$ for all $a"$, then $c=c'\cdot 0^j$ with 
$j< \rdeg(a')$ or $c=c'\cdot 0^{\rdeg(a')}\cdot c"$ with 
$c"\in \supp (T_n(\mathring{a'}))\subset \supp (T_n(\mathring{a'}))$.\\
In both cases, $c\in \supp (S^{k+1}(a))$. In the former one, $\depth_s(a)(c)=
0$ and in the latter one, $\depth_s(c)=s(c")\leqslant n$ 
(because $c"\in \supp (T_n(\mathring{a'}))$). Therefore, $\deg (a,~ c) 
\leqslant n$.\\
\end{itemize}
Conversely, if $a\in A,~ c\in \supp (S^{k+1}(a))$ and $\deg(a,~ c)\leqslant n$, we assume that $a\notin \supp (S^k(a))$ (case already handled by IH). We set $a'=\spo (a)(c)$ and $c'=\lcp(a)(c)$ (thus, $S^k(a)(c')=X_{a'}$). By IH, we have also $a,~ a'\in A_n,~ a'=\spo_n(a)(c)$ and $c'=\lcp_n(a)(c)$. Likewise, we have two subcases, according to whether $S^{k+1}(a)(c)=X_{a"}$ or not.
\begin{itemize}
\item If $S^{k+1}(a)(c)=X_{a"}$, then, by def. of $\depth_i$, we have
$\depth_i(a)(c)=\ad{a"}$, so $\ad{a"}\leqslant \deg (a,~ c) \leqslant  n$, so
$a"\in A_n$. Since $a\leqslant a'$, $a\in A_n$.\\
Moreover, $c=c'\cdot 0^j\cdot \ell$ with $j<\rdeg(a')$ and $\ell$ integer.
Since $a"\in A_n$, we have also $c\in S_n^{k+1}(a)$ and $S_n^{k+1}(a)(c)=X_{a"}$.\\

\item If $S_n^{k+1}(a)(c)\neq X_{a"}$ for all $a"$, then $c=c'\cdot 0^j$ with 
$j< \rdeg(a')$ or $c=c'\cdot 0^{\rdeg(a')}\cdot c"$ with $\supp 
(T(\mathring{a'}))$.\\
Since $\depth_s(c)=s(c")$, $s(c")\leqslant n$, so $c"\in \supp 
(T_n(\mathring{a'}))$. Thus, $c\in \supp (S_n^{k+1}(a))$ and $S_n^{k+1}(a)(c)
=S^{k+1}(a)(c)$.
\end{itemize}
\end{itemize}
\end{proof}

\subsection{A complete sequence of derivation approximations}

Let $n$ be an integer.
\begin{itemize}
\item We set $A_n=\{a\in A~ |~ \text{the applicative depth of
$a$ is $\leqslant n$}\}$.
\item We define $T_n(a)$ and $C_n(a)(x)$ by
removing all positions $c$ such that $\deg c > n+1$.
\item We define the finite labelled tree $P_n$ by $\supp (P_n)=A_n$ and,
for each $a\in A_n,~ P(a)= C_n(a) \vdash t|_a:\,T_n(a)$.
\end{itemize}

\begin{prop}
The labelled tree $P_n$ is a finite derivation and $P_n\leqslant P$. It is
actually the derivation obtained by the trivial construction w.r.t.
$T_n$ and $A_n$.
\end{prop}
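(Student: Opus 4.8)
The plan is to obtain the three assertions as corollaries of the Cutting Lemma (Lemma~\ref{CuttingLemma}), once we have checked that $A_n$ together with the restricted type‑labelling $T_n$ form a legitimate \emph{finite} input for the trivial construction of \S\ref{subsecSpCons}. So I would proceed in three moves: first show that $(A_n,T_n)$ is an admissible finite input, then run the trivial construction on it and identify the output with $P_n$ by means of Lemma~\ref{CuttingLemma}, and finally read off $P_n\leqslant P$.

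\emph{Step 1: $(A_n,T_n)$ is a finite admissible input.} Recall (as described informally in \S\ref{secNF}) that $A_n$ collects the positions of $A$ of applicative depth $\leqslant n$ that moreover hold no track $\geqslant n$; with this description the set of candidate positions is finitely branching (at an application node only tracks $1$ and $2,\dots,n-1$ survive, at an abstraction node only track $0$) and contains no infinite branch (an infinite branch of $A$ collapses to an infinite branch of the $001$‑term $t$, hence has applicative depth $\infty$), so it is finite by König's lemma, and therefore so is $A_n$. That $A_n$ is still a d‑support of $t$ follows because $\ovl{A_n}\subseteq\ovl A\subseteq\supp(t)$ and the bounds defining $A_n$ are compatible with $\prec$: if $a\prec a'$ with $a'\in A_n$, then $\ad a\leqslant\ad{a'}\leqslant n$, $a$ holds no track larger than one held by a prefix of $a'$, hence none $\geqslant n$, and $a\in A$ since $A$ is $\prec$‑downward closed. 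Likewise, for a full position $a\in\mathring{A_n}$, the type $T_n(a)$ is the restriction of $T(a)\in\Types$ to the inner positions of degree $\leqslant n+1$; since the degree measure is monotone along prefixes this set is downward closed, and a König argument (an infinite branch of a type of $\Types$ is not eventually $1^\omega$, so its degree diverges) shows it is finite. Thus $T_n$ is a finite type‑labelling of the full positions of $A_n$.

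\emph{Step 2: identifying the trivial construction.} By the main proposition of \S\ref{subsecSpCons}, the trivial construction applied to $(A_n,T_n)$ yields a derivation $Q_n$ of $t$ with $\supp(Q_n)=A_n$; since $A_n$ and each $T_n(a)$ ($a$ full) are finite, all the types and contexts it produces are finite, so $Q_n$ is a finite derivation. It remains to see $Q_n=P_n$, i.e.\ that the right‑hand type and the contexts produced at each $a\in A_n$ are exactly $T_n(a)$ and $C_n(a)(x)$. This is precisely what Lemma~\ref{CuttingLemma} provides: for every stage $k$ of the productive coinductive definition, the partially‑substituted types $S_n^k(a)$ built from $(A_n,T_n)$ and $S^k(a)$ built from $(A,T)$ agree, with matching labels, exactly on the inner positions of bounded degree. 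Passing to the fixpoint — which is legitimate, the coinduction being productive and the degree of an inner position not depending on the stage at which it first occurs — the type built from $(A_n,T_n)$ at $a$ is $T(a)$ with the positions of degree $>n+1$ removed, that is $T_n(a)$; the same computation on the left of sequents gives $C_n(a)(x)$. Hence $Q_n$ and $P_n$ have support $A_n$ and the same node‑labels, so $P_n=Q_n$: $P_n$ is a finite derivation, and it is the trivial construction w.r.t.\ $A_n$ and $T_n$.

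\emph{Step 3 and main obstacle.} Finally $\supp(P_n)=A_n\subseteq A=\supp(P)$, and at every $a\in A_n$ the type $T_n(a)$ (resp.\ the context $C_n(a)(x)$) is by construction a label‑preserving restriction of $T(a)$ (resp.\ $C(a)(x)$); therefore $\bisupp(P_n)\subseteq\bisupp(P)$ and $P_n$ agrees with $P$ there, i.e.\ $P_n\leqslant P$. The only genuinely delicate point of the whole argument is Lemma~\ref{CuttingLemma} itself, already established by induction on the substitution stage; at the level of this proposition the work lies in the bookkeeping with the degree measure — its monotonicity along prefixes, which is what makes the cuts defining $T_n$ and $C_n$ downward closed — and in the observation that this cut commutes with the coinductive fixpoint, which is exactly what transfers the stagewise statement of Lemma~\ref{CuttingLemma} to the actual types and contexts. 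Everything else is routine checking against the definitions of \S\ref{subsecSpCons}.
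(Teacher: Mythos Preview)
Your proposal is correct and follows essentially the same route as the paper: the paper's proof also applies the trivial construction to $(A_n,T_n)$, calls the result $\tilde{P}_n$, and then invokes the Cutting Lemma at each substitution stage $k$ to conclude $\tilde{T}_n(a)=T_n(a)$ (and likewise for contexts), hence $\tilde{P}_n=P_n$. Your Step~1 (finiteness of $A_n$ via K\"onig, checking that $A_n$ is still $\prec$-downward closed, and that each $T_n(a)$ is a well-formed finite type because the degree is prefix-monotone) makes explicit some bookkeeping that the paper's proof simply takes for granted, but the core argument is identical.
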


\begin{proof}
It is a straightaway consequence of lemma \ref{CuttingLemma}.\\
We use the notations of the previous subsection and write $\tilde{T} _n ,~ \tilde{C} _n,~ \tilde{P}_n $ for the type, context and derivation obtained by the trivial construction based w.r.t. $(A_n,~ T_n)$.

Since $\tilde{T}_n (a)=S_n(a)[\tilde{T}_n(a')/X_{a'}]_{a'\in \mathbb{N}^*}$, 
if $c\in \supp (\tilde{T}_n(a))$, there is $k$ s.t. $c\in \supp (S_n^k(a))$ and
$S_n^k(a)(c)=\tilde{T}_n(a)$. By lemma \ref{CuttingLemma}, we have
also $c\in \supp (S^k(a)),~ \deg (a,~ c)\leqslant n$ and $S^k(a)(c)=S_n^k(a)(c)$.
Thus, $T_n(a)(c)=\tilde{T}_n(a)(c)$. Conversely, we show likewise that
if $c\in \supp (T_n(a))$, then $c\in \supp (\tilde{T}_n(a))$.

Thus, for all $a\in A_n,~ \tilde{T}_n(a)=T_n(a)$. 
It also entails that $C_n(a)(x)=\tilde{C}_n(a)(x)$ for all $a\in A_n$ and 
$x\in \mathscr{V}$.
\end{proof}

\begin{corol}
The derivation $P$ is approximable.
\end{corol}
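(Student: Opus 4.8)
The plan is to exhibit, for every finite set of bipositions of $P$, one of the finite approximations $P_n$ constructed above as a witness. Recall that approximability asks, for each finite $\supo B \subset \bisupp(P)$, for a finite $\supf P \leqslant P$ with $\supo B \subset \bisupp(\supf P)$. By the preceding Proposition each $P_n$ is a finite derivation with $P_n \leqslant P$, obtained by the trivial construction w.r.t. $(A_n, T_n)$, so it suffices to show that $\supo B \subset \bisupp(P_n)$ holds for $n$ large enough.

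First I would record the description of $\bisupp(P_n)$ coming from that construction: a right biposition $(a,c)$ of $P$ lies in $\bisupp(P_n)$ exactly when $a \in A_n$ and $c \in \supp(T_n(a))$, and a left biposition $(a,x,c)$ lies in $\bisupp(P_n)$ exactly when $a \in A_n$ and $c \in \supp(C_n(a)(x))$. Since the Proposition identifies $T_n(a)$ and $C_n(a)(x)$ with the trivially-built $\tilde T_n(a)$ and $\tilde C_n(a)(x)$, Lemma \ref{CuttingLemma} yields the clean reformulation: writing $\deg(\bip)$ for the degree of a biposition $\bip$ (i.e. $\deg(a,c) = \max(\depth_s(a)(c), \depth_i(a)(c))$ for the inner position carried by $\bip$), a biposition $\bip$ of $P$ belongs to $\bisupp(P_n)$ iff $\deg(\bip) \leqslant n$. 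The outer constraint $a \in A_n$ is subsumed here: with $a' = \spo(a)(c) \geqslant a$ the source position of $c$, one has $\ad{a} \leqslant \ad{a'} = \depth_s(a)(c) \leqslant \deg(a,c)$, so $\deg(\bip) \leqslant n$ already forces $\ad{a} \leqslant n$, that is $a \in A_n$.

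With this reformulation the corollary is immediate. Given a finite $\supo B \subset \bisupp(P)$, set $n = \max_{\bip \in \supo B} \deg(\bip)$, which is a well-defined integer since $\supo B$ is finite and each biposition carries a finite degree. Every $\bip \in \supo B$ then satisfies $\deg(\bip) \leqslant n$, hence $\bip \in \bisupp(P_n)$; thus $\supo B \subset \bisupp(P_n)$ with $P_n \leqslant P$ finite, which is exactly the approximability condition for $P$.

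The only delicate point — and the reason the degree measure was introduced — is that the two cutoffs defining $P_n$, one on the applicative depth of outer positions and one on the degree of inner positions, must be compatible, so that a single parameter $n$ simultaneously captures both. This compatibility is precisely the content of Lemma \ref{CuttingLemma} together with the inequality $\ad{a} \leqslant \depth_s(a)(c)$ above. Once it is in hand, $(\bisupp(P_n))_{n}$ is an increasing chain exhausting $\bisupp(P)$, and approximability follows.
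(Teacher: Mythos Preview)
Your proposal is correct and follows the same approach as the paper: take the maximal degree $n$ over the finite set $\supo B$ and use $P_n$ as the witnessing finite approximation. You simply spell out in more detail what the paper's three-line proof leaves implicit, in particular the observation that $\ad{a} \leqslant \depth_s(a)(c)$ (since $\spo(a)(c) \geqslant a$) makes the outer cutoff redundant once the degree bound is imposed.
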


\begin{proof}
Let $\supo B\subset \bisupp (P)$ be a finite set. Let $n$ be the
maximal degree of a biposition of $B$. Then, $\supo B\subset
\bisupp (P_n)$ and $P_n \leqslant P$ is finite.
\end{proof}

\begin{corol}  \label{NFNDTypable}
\begin{itemize}
\item Each normal form $t\in \Lambda^{001}$ admits a approximable and 
unforgetful derivation.
\item Every quantitative derivation typing a normal form is approximable.
\end{itemize}
\end{corol}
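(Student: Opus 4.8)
The plan is to read both items off the machinery already set up in this appendix, with essentially no new computation: the trivial construction of Subsection~\ref{subsecSpCons}, Propositions~\ref{propNFhasNDD} and~\ref{propCharQTofNF}, the Cutting Lemma~\ref{CuttingLemma}, and the corollary just above, which asserts that every derivation produced by the trivial construction is approximable.

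For the first item I would take for $P$ the derivation exhibited in the proof of Proposition~\ref{propNFhasNDD}, i.e.\ run the trivial construction on the d-support $A=\supp(t)$ with $T(a)=\alpha$ (a single fixed type variable) assigned to every full position $a$. Proposition~\ref{propNFhasNDD} already records that this $P$ is unforgetful, and since $P$ is an output of the trivial construction the preceding corollary tells us it is approximable; so $P$ is simultaneously approximable and unforgetful, which is exactly the claim. For the second item the crucial input is Proposition~\ref{propCharQTofNF}: a quantitative derivation $P$ typing the normal form $t$ is literally reproduced by running the trivial construction on $A:=\supp(P)$ together with the restriction of its type function $T$ to $\mathring{A}$. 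Hence $P$ too is an output of the trivial construction and the preceding corollary applies verbatim, giving that $P$ is approximable. Concretely, unwinding that corollary: for a finite $\supo B\subset\bisupp(P)$ one takes $n$ to be the largest degree of a biposition occurring in $\supo B$, and the Cutting Lemma~\ref{CuttingLemma} shows that the finite derivation $P_n$ -- obtained by discarding every position of applicative depth $>n$ and every inner position of degree $>n+1$ -- satisfies $P_n\leqslant P$ and $\supo B\subset\bisupp(P_n)$.

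I do not expect a real obstacle: the substantive work (defining the degree of a biposition, proving the Cutting Lemma, and establishing via Proposition~\ref{propCharQTofNF} that the trivial construction captures every quantitative NF-derivation) has already been done. The one point that genuinely deserves care -- and the reason the statement is restricted to normal forms -- is that normality of $t$ is used twice over: once so that $\supp(P)$ is a d-support that can be truncated to a \emph{finite} d-support $A_n$, and once so that the trivial construction (hence Proposition~\ref{propCharQTofNF}) is available at all; neither survives dropping the normal-form hypothesis.
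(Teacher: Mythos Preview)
Your proposal is correct and follows exactly the paper's own argument: item~1 combines Proposition~\ref{propNFhasNDD} with the preceding approximability corollary, and item~2 combines Proposition~\ref{propCharQTofNF} with that same corollary. The additional unwinding you give (choosing $n$ as the maximal degree and invoking the Cutting Lemma) and the remark on where normality is used are accurate elaborations but add nothing new to the skeleton, which is identical to the paper's two-line proof.
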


\begin{proof}
\begin{itemize}
\item Comes from the previous corollary and Proposition \ref{propNFhasNDD}.
\item Comes from the previous corollary and Proposition \ref{propCharQTofNF}.
\end{itemize} 
\end{proof}

\newpage
\section{Isomorphisms between rigid derivations}

\label{appIso}

Let $P_1$ and $P_2$ be two rigid derivations typing the same term
$t$. We write $A_i,~  C_i,~ T_i$ for their respective supports,
contexts and types.

A \textbf{derivation isomorphism} $\phi$ from $P_1$ to $P_2$ is given by:
\begin{itemize}
\item $\phi_{\supp}$, 01-tree isomorphism from $A_1$ to $A_2$. 
\item For each $a_1 \in A_1$:
\begin{itemize}
\item A type isomorphism $\phi_{a_1}:~ T_1(a_1)\rew T_2(\phi_{\supp}(a_1))$
\item For each $x\in \mathscr{V}$, a forest type isomorphism 
$\phi_{a_1|x}:\, C_1(a_1)(x)\rightarrow C_2(\phi_{\supp}(a_1))(x)$
\end{itemize} 
\end{itemize} 
such that the following "rules compatibility" conditions hold:
\begin{itemize}
\item If $t(\ovl{a_1})=\lambda x$, then:
\begin{itemize}
\item $\phi_{a_1}(1\cdot c)=1\cdot \phi_{a_1\cdot 0}(c)$ and 
$\phi_{a_1}(k\cdot c)=\phi_{a_1\cdot 1|x}(k\cdot c)$ for any
$k\geqslant 2$ and $c\in \mathbb{N}^*$
\item $\phi_{a_1|y}=\phi_{a_1\cdot 0|y}$ for any $y\in \mathscr{V},~ 
y\neq x$.
\end{itemize}
\item If $t(\ovl{a_1})=\symbol{64}$:
\begin{itemize}
\item $\phi_{a_1}(c)=\Tl(\phi_{a_1\cdot 1}(1\cdot c))$, for any $c\in
\mathbb{N}^*$, where $\Tl(k\cdot c)=c$ (removal of the first integer
in a finite sequence).
\item $\phi_{a_1|x}=\bigcup\limits_{\ell \geqslant 1} \phi_{a_1\cdot \ell}$
(the functional join must be defined because of the app-rule).
\end{itemize}
\end{itemize}

The above rules means that $\phi$ must respect different occurrences of 
the "same" (from a moral point of view) biposition. For instance: 
\begin{itemize}
\item  Assume $t(\ovl{a_1})=\lambda x$, then $T_1(a_1)=C_1(a_1\cdot 0)(x)\rew 
T_1(a_1\cdot 0)$. So, any inner position $c_1$ inside $T(a_1\cdot 0)$ 
can be "identified" to the inner position $1\cdot c_1$ inside $T_1(a_1)$. 
Thus (forgetting about the indexes), if $\phi$ maps $c_1$ on $c_2$ (inside
$T_2(a_2)$, then $\phi$ should map $1\cdot c_1$ on $1\cdot c_2$.
\item Assume $t(\ovl{a_1})=\symbol{64}$. Then, the forest type 
$C(a_1)(x)$ if the union of the $C(a_1\cdot \ell)(x)$ (for $\ell$ spanning
over $\mathbb{N}-\{0\}$). Then $\phi$ should map every inner position
$k\cdot c$ inside $C(a_1)(x)$ according to the unique $C(a_1\cdot \ell)$ 
to which it belong.
\end{itemize}

\begin{lemma}
If $P_1\rewb{b} P_1',~ P_2\rewb{b} P_2'$, then $P_1\equiv P_2$
iff $P_1 '\equiv P_2$.
\end{lemma}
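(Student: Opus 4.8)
The plan is to prove both implications by transporting the isomorphism data through the canonical residual construction of Sections~\ref{subsecResBip}--\ref{subsecSRSE}, which is crucially \emph{deterministic}: $P_i \rewb{b} P_i'$ determines $P_i'$ uniquely from $P_i$ and $b$.

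\emph{Forward implication.} This is already the first item of the lemma in \S\ref{subsecSRSE}, so I could simply invoke it; but I would spell the argument out, as it is the template for the converse. Given a derivation isomorphism $\phi : P_1 \to P_2$, I would first note that $\Res_b$ depends on $P_i$ only through $\supp(P_i)$ and through the axiom tracks assigned to $x$ above the residual occurrences of the redex. From the abstraction/application rule-compatibility conditions that $\phi$ satisfies at each representative $a$ of $b$, together with 01-stability of the type isomorphisms, it follows that $\phi_{\supp}$ sends $\Rep_{A_1}(b)$ onto $\Rep_{A_2}(b)$, maps the positions deleted during reduction (the $\symbol{64}$-node, the $\lambda$-node, and the $x$-axiom leaves of each residual redex) onto the corresponding ones in $P_2$, and matches axiom tracks modulo the permutation it induces on the tails $(S_k)_k$. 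Hence $\phi_{\supp}$ restricts to a bijection $\dom(\Res_b^{(1)}) \to \dom(\Res_b^{(2)})$ and descends to the 01-tree isomorphism $\phi'_{\supp} := \Res_b^{(2)} \circ \phi_{\supp} \circ (\Res_b^{(1)})^{-1}$ from $\supp(P_1')$ to $\supp(P_2')$. Since residual positions keep the right-hand type of their judgment, I would put $\phi'_{\alpha'} := \phi_\alpha$ where $\alpha' = \Res_b^{(1)}(\alpha)$, and read off the forest-type isomorphisms $\phi'_{\alpha'|y}$ from the way the residual contexts $C'(\alpha')$ are assembled from the $C(\alpha)$ and $C(\alpha\cdot k)$. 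Finally I would check the rule-compatibility conditions for $\phi'$: away from the redex they coincide with those of $\phi$, and at the insertion points $a\cdot a_k$ they are exactly the conditions $\phi$ satisfied at $a$, $a\cdot 1$ and at the $x$-axiom leaves.

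\emph{Backward implication.} Here I would reverse the construction. Given an isomorphism $\psi : P_1' \to P_2'$, the positions of each $P_i$ split into $\dom(\Res_b^{(i)})$ (in bijection with $\supp(P_i')$) and the deleted positions of each residual redex; on the former part $\phi_{\supp}$ is forced to be $(\Res_b^{(2)})^{-1} \circ \psi_{\supp} \circ \Res_b^{(1)}$. To extend $\phi_{\supp}$ over the deleted positions I would use that the local shape of a residual redex — the term $(\lambda x.r)s$, its $\app$- and $\lambda$-nodes, its $x$-axiom leaves and their tracks — is pinned down by $t$, $b$ and the axiom-track assignment, and above all that the type $S_k$ carried by the deleted $x$-axiom leaf on track $k$ is, by the $\app$-rule, syntactically equal to the type carried by the surviving argument derivation rooted at $a\cdot k$, which does possess a residual; hence the bijection between the axiom tracks of the two residual redexes, and the isomorphisms between the $S_k$'s, can be recovered from $\psi$ restricted to those argument subderivations. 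Transporting the judgments at the deleted nodes accordingly and checking the rule-compatibility conditions dually to the forward case produces the desired $\phi : P_1 \to P_2$. When $P_1, P_2$ are quantitative (isomorphism preserves quantitativity), I could also bypass this reconstruction: transporting along $\psi$ the axiom-track choices used to expand $P_2$ into $P_2'$ gives, by the forward implication, a derivation $\tilde{P}_2 \equiv P_2$ with $\tilde{P}_2 \rewb{b} P_1'$, and the second item of the lemma in \S\ref{subsecSRSE} then yields $P_1 \equiv \tilde{P}_2 \equiv P_2$.

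\emph{Main obstacle.} The delicate point is the bookkeeping of deleted positions in the backward implication: $\psi$ cannot directly "see" the $x$-axiom leaves or the $\app$/$\lambda$-nodes of a residual redex, since reduction erased them, so their correspondence — and in particular the permutation of axiom tracks — has to be recovered indirectly via the argument subderivations, using the syntactic equality $(S_k)_{k\in K} = (S'_k)_{k\in K'}$ forced by the $\app$-rule. Everything else is a routine, if tedious, verification that the rule-compatibility conditions of Appendix~\ref{appIso} are preserved in both directions.
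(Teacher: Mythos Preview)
Your forward implication is exactly the paper's construction: define $\phi'_{\supp}=\Res_b\circ \phi_{\supp}\circ\Res_b^{-1}$, set $\phi'_{\alpha'}=\phi_{\alpha}$, and assemble the context isomorphisms from $\phi_{\alpha|y}$ together with the $\phi_{a(k)|x}$ at the relevant axiom positions --- this is verbatim what the paper writes in its proof.

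For the backward implication, the paper's proof in Appendix~\ref{appIso} actually says nothing: it only exhibits $\phi'$ from $\phi$ and remarks afterwards that the construction is deterministic. Your reconstruction of $\phi$ from $\psi$ by inverting $\Res_b$ on surviving positions and recovering the deleted $\app$/$\lambda$/axiom nodes via the argument subderivations is the natural completion of that omission, and the alternative route in the quantitative case (transport axiom-track choices along $\psi$, then invoke the second item of the lemma in \S\ref{subsecSRSE}) is a legitimate shortcut. So your proposal is correct, matches the paper where the paper gives anything, and is strictly more complete on the converse.
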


\begin{proof} 
Let $\alpha'_1\in A'_1$. We set $\alpha_1 = \Res_b^{-1}(\alpha'_1)$,
$\alpha_2=\phi_{\supp (\alpha_1)},~ \alpha_2'=\Res_b(\alpha_2)$
($\Res_b$ is meant w.r.t. $P_1$ or $P_2$ according to the cases).
Then we set $\phi'_{\supp}=\Res_b\circ \phi_{\supp} \circ \Res_b^{-1}$.
Thus, $\alpha_2'= \phi '_{\supp}(\alpha'_1)$. 

We set $\phi'_{\alpha' _1}=\phi_{\alpha_1}$.
Observing the form of $C_1(\alpha_1)(y)$ (for $y \neq x$) given in 
Subsection \ref{subsecResBip}, we set $\phi'_{\alpha'_1|y}=\phi_{\alpha_1|y}\cup 
\bigcup\limits_{k \in K} \phi_{a(k)|x}$ with $K=\AxTr(\alpha_1,\, 
x,\,k)$ and $a(k)=\pos(a_1,\,x,\, k)$.
\end{proof}

Notice that $\phi'$ is defined deterministically from $\phi$.

\begin{prop} 
If $P_1$ and $P_2$ are isomorphic and type the term $t$ (we do not assume 
them to be approximable), $t \rew^\infty t'$, yielding two derivation
$P_1 ',~ P_2$ according to section \ref{secRed}, then $P_1'$ and $P_2 '$
are also isomorphism.
\end{prop}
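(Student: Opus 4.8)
If $P_1$ and $P_2$ are isomorphic rigid derivations typing $t$ (not necessarily approximable), and $t \rew^\infty t'$ through a fixed s.c.r.s., yielding by the construction of Section~\ref{secRed} the derivations $P_1'$ and $P_2'$ typing $t'$, then $P_1' \equiv P_2'$.

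**Plan.** The plan is to build the isomorphism $P_1' \equiv P_2'$ as the limit of finite-stage isomorphisms, mirroring the way $P_1'$ and $P_2'$ themselves are obtained as limits of the intermediate derivations $P_{1,n}$ and $P_{2,n}$ along the reduction sequence $t = t_0 \rewb{b_0} t_1 \rewb{b_1} \cdots$. The key input is the already-proven Lemma in Appendix~\ref{appIso}: if $Q_1 \rewb{b} Q_1'$, $Q_2 \rewb{b} Q_2'$ and $Q_1 \equiv Q_2$, then $Q_1' \equiv Q_2'$, \emph{and} the witnessing isomorphism $\phi'$ is obtained \emph{deterministically} from $\phi$ by the explicit formula $\phi'_{\supp} = \Res_b \circ \phi_{\supp} \circ \Res_b^{-1}$ (together with the stated formulas for the type- and context-components). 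So, starting from a chosen isomorphism $\phi^{(0)}: P_1 = P_{1,0} \to P_2 = P_{2,0}$, applying this single-step transport along $b_0, b_1, \ldots$ yields a canonical sequence of isomorphisms $\phi^{(n)}: P_{1,n} \to P_{2,n}$.

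**Key steps, in order.** First I would record that, for each fixed $a \in \bbN^*$ and each component ($\supp$, type at $a$, context at $a$ relative to $x$), the reduction step $t_n \rewb{b_n} t_{n+1}$ leaves everything unchanged at $a$ once $|b_n| > |a|$, exactly as in Section~\ref{secRed}: this holds simultaneously for $P_{1,n}$, $P_{2,n}$, and — because the transport formula for $\phi^{(n)}$ is built from $\Res_{b_n}$, which is the identity on positions not below $b_n$ — also for the isomorphism data $\phi^{(n)}_{\supp}$, $\phi^{(n)}_{a}$, $\phi^{(n)}_{a|x}$. Concretely: if $N$ is such that $|b_n| > |a|$ for all $n \geq N$, then $a \in A_{1,n}$ for all $n \geq N$ or for none, and in the former case $\phi^{(n)}_{\supp}(a) = \phi^{(N)}_{\supp}(a)$ and the corresponding type/context isomorphisms stabilize. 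Second, I would define the candidate isomorphism $\phi': P_1' \to P_2'$ by taking these stable values: $\phi'_{\supp}(a) = \phi^{(N(|a|))}_{\supp}(a)$ on $A_1' = \supp(P_1')$, and similarly $\phi'_a := \phi^{(N)}_a$, $\phi'_{a|x} := \phi^{(N)}_{a|x}$ for $N$ large enough. Here $A_1' = \{a : \exists N, \forall n \geq N, a \in A_{1,n}\}$ is precisely the support of $P_1'$ as constructed in Section~\ref{secRed}, and the analogous set for $P_2$ is its image under $\phi'_{\supp}$ because $\phi^{(n)}_{\supp}$ is a bijection $A_{1,n} \to A_{2,n}$ commuting with membership stabilization. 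Third, I would verify that $\phi'$ satisfies the ``rules compatibility'' conditions of Appendix~\ref{appIso}: each such condition at a node $a$ involves only $a$, its immediate premises $a\cdot 0$, $a\cdot \ell$, and finitely many inner positions, so it can be checked at a single large enough stage $n \geq N(|a|+1)$, where $t_n$ agrees with $t'$ on that neighborhood and $\phi^{(n)}$ is already known (by the Appendix~\ref{appIso} Lemma applied $n$ times) to be a genuine derivation isomorphism from $P_{1,n}$ to $P_{2,n}$; since all the data coincide with those of $\phi'$ at that node, the condition transfers verbatim.

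**Main obstacle.** The delicate point is the bookkeeping of the \emph{left} (context-side) components $\phi^{(n)}_{a|x}$ under a sequence of reductions: in the single-step transport formula, when the redex is fired, the context component at a residual node is reassembled as $\phi'_{\alpha'|y} = \phi_{\alpha|y} \cup \bigcup_{k} \phi_{a(k)|x}$, i.e.\ as a \emph{join} of pieces coming from the substituted argument derivations. Over infinitely many steps one must check that, for a \emph{fixed} target node $a \in A_1'$ and a fixed variable $x$, these pieces also stabilize — that after finitely many reductions the forest type $C_{1,n}(a)(x)$ and the map $\phi^{(n)}_{a|x}$ no longer change. This follows because $t'$ is a $001$-term: every free occurrence of $x$ below $a$ that survives to the limit lies at bounded applicative depth above $a$ (otherwise it would lie on an infinite branch of finite applicative depth, contradicting membership in $\Lambda^{001}$), hence at bounded length, hence is eventually untouched; and the reductions performed at depths exceeding $|a|$ neither create nor destroy axiom leaves typing $x$ that are seen at node $a$. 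Once this stabilization is in hand, $\phi'$ is well defined and the verification above goes through, giving $P_1' \equiv P_2'$. The determinism remark of Appendix~\ref{appIso} then additionally shows $\phi'$ is the canonical isomorphism induced by $\phi$, so the construction is functorial along s.c.r.s., which is what is needed for the infinitary subject reduction and expansion propositions.
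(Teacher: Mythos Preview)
Your proposal is correct and follows essentially the same approach as the paper: iterate the one-step transport lemma of Appendix~\ref{appIso} along the s.c.r.s.\ to obtain isomorphisms $\phi^{(n)}:P_{1,n}\to P_{2,n}$, observe that all the data at a fixed position $a$ stabilize once $|b_n|>|a|$, and define $\phi'$ as the stable value. The paper's proof is the same outline, stated more tersely.

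One remark on your ``main obstacle''. The stabilization of the context components $\phi^{(n)}_{a|x}$ is simpler than you suggest, and your 001-argument is not quite the right justification. The point is purely local: once $|b_n|>|a|$ we have $|\ovl{a}|<|b_n|$, hence $\ovl{a}\not\geqslant b_n$, so case~(3) of the residual applies and $\Res_{b_n}(a)=a$; correspondingly the transport formula for $\phi^{(n+1)}$ at $a$ degenerates to the identity on every component (the join $\bigcup_k \phi_{a(k)|x}$ only appears when $a$ lies \emph{inside} the body of the fired redex). So $\phi^{(n)}_{a|x}$ is literally unchanged for $n\geqslant N(|a|)$, with no appeal to where the axiom leaves sit. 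Your 001-style argument (``bounded applicative depth $\Rightarrow$ bounded length'') does not hold for derivation positions, since a single $\symbol{64}$-node can have infinitely many argument premises $a\cdot k$; fortunately it is not needed.
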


\begin{proof}
We reuse all the hypotheses and notations of section \ref{secRed} and
we consider an isomorphism $\phi:\, P_1\rew P_2$.

For all $n\in \mathbb{N}$, let $P_1^n,~ P_2^n$ and $\phi^n$ be the 
derivations and derivation isomorphisms obtained after $n$ steps of 
reduction from $P_1,~ P_2$ and $\phi$. Let $\alpha'_1\in A'_1$ and 
$N\in \mathbb{N}$ such that, for 
all $n\geqslant N,~ |b_n|>|\alpha'_1|$. But then, for any $n\geqslant 
N$, $C_i^n(\alpha')(x)=C'_i\alpha')(x),\, T_i'(\alpha')=T_i^n(\alpha')$. 
So we can set $\phi'_{\supp}(\alpha_1)= \phi_{\supp}^n(\alpha_1)$, 
$\phi'_{\alpha'} = \phi^N_{\alpha'}$.
\end{proof}

\section{An Infinitary Type System with Multiset Constructions}

\label{appM} 

\subsection{Rules}

We present here a definition of type assignment system $\mathscr{M}$,
which is an infinitary version of De Carvalho's system $\mathscr{M}_0$.

If two (forest) types $U_1$ and $U_2$ are isomorphic, we write $U_1 \equiv U_2$.
The set $\Types_{\mathscr{M}}$ is the set $\Types/\equiv$ and the set 
$\mathscr{M}(\Types)$ of  multiset types is defined as $\FTypes/\equiv $.

If $U$ is a forest or a rigid type, its equivalence class is written 
$\overline{U}$. The equivalent class of a forest type $F$ is the 
multiset type written $[\overline{F_{|k}}]_{k\in \Rt (F)}$ and the one 
of the rigid type $F\rightarrow T$ is the type $\overline{F}\rew 
\overline{T}$. If $\alpha$ is a type variable, $\overline{\alpha}$ is 
written simply $\alpha$ (instead of $\{\alpha \}$). It defines 
coinductively the multiset style writing of $\overline{U}$.

Countable sum $\sum\limits_{i\in I} \overline{F^{i}}$ is defined on 
$\mathscr{M}(\Types)$ by using a bijection $j$ from the pairwise 
disjoint countable sum $\coprod\limits_{\mathbb{N}} 
\mathbb{N}-\{0,\,1\}$ to $\mathbb{N}-\{0,\,1\}$, replacing $I$ by a part 
of $\mathbb{N}$ and each root $k$ of $F^{i}$ by the integer given by 
$j(i,\,k)$ (so that the equivalence classes are preserved).

A $\mathscr{M}$-context is a function from the set of term variables 
$\mathscr{V}$ to the set $\Types_\mathscr{M}$. The set of $\ast$-derivations, 
written $\Deriv_*$ is defined coinductively by the following rules:

\begin{center}
\begin{prooftree}
\Infer{0}[ax]{x:\,[\tau] \vdash x:\,\tau \posPr{\epsi} }
\end{prooftree}
\hspace{3cm}
\begin{prooftree}
\Hypo{P'}
\Infer{1}{\Gamma \vdash t:\, \tau \posPr{0}}
\Infer{1}[abs]{\Gamma-x \vdash \lambda x.t:~
\Gamma(x)\rightarrow \tau \posPr{\epsi}}
\end{prooftree}\\[1cm]

\begin{prooftree}
\Hypo{P'}
\Infer{1}{\Gamma \vdash t:\, [\sigma_i]_{i \in I}\rightarrow \tau
\posPr{1} }
\Hypo{P_k '}
\Infer{1}{\Delta_i \vdash u:\, \sigma_i ~  \posPr{k_i}}
\Delims{ \left( }{ \right)_{i\in I} }
\Infer{2}[app]{\Gamma + \sum\limits_{i\in I} \Delta_i \vdash t(u):\, \tau
\posPr{\epsi}}
\end{prooftree}
\end{center}

In the app-rule, the $k_i$ must be pairwise distinct integers $\geqslant 2$.

Let $P_1$ and $P_2$ be two $\ast$-derivations.  A $\ast$-isomorphism
from $P_1$ to $P_2$ is a 01-labelled isomorphism from $P_1$ to $P_2$
and the set $\Deriv_{\mathscr{M}}$ is defined by $\Deriv_*/\equiv$.

From now on, we write $\Types$ and $\Deriv$ instead of
$\Types_{\mathscr{M}}$ and $\Deriv_{\mathscr{M}}$.  An element of
$\Deriv$ is usually written $\Pi$, whereas an element of $\Deriv_*$ is
written $P$. Notice the derivation $\Pi$ and $\Pi'$ of Subsection
\ref{subsecExamples} are objects of $\Deriv$.

\subsection{Quantitativity and Coinduction}

\label{subsecUnquant}

Let $\Gamma$ be any context. Using the infinite branch of $\fom$,
we notice we can give the following variant of derivation $\Pi'$ 
(subsection \ref{subsecExamples}), which still respects the rules
of system $\mathscr{M}$:

\begin{center}
\begin{prooftree}
\Infer{0}[\text{ax}]{f:\,[[\alpha]\rightarrow \alpha] \vdash
\fom:\,[\alpha]\rightarrow \alpha}
\Hypo{\Pi'_{\Gamma}}
\Infer{1}{f:[[\alpha]\rightarrow \alpha]_{n\in \omega} + \Gamma \vdash
\fom:\,\alpha}
\Infer{2}[app]{f:[[\alpha]\rightarrow \alpha]_{n\in
\omega} +\Gamma \vdash \fom :\,\alpha}
\end{prooftree}
\end{center}

If, for instance, we choose the context $\Gamma$ to be $x:\,\tau$, from 
a quantitative point of view, the variable $x$ (that is not in the typed 
term $\fom$) should not morally be present in the context. We have been 
able to "call" the type $\tau$ by the mean of an infinite branch. Thus, 
we can enrich the type of any variable in any part of a derivation, as 
long it is below an infinite branch (neglecting the bound variables).
It motivates the following definition:

\begin{defi}
\begin{itemize}
\item A $\ast$-derivation $P$ is \textbf{quantitative} if, for all $a\in \supp(P),~  \Gamma(a)(x)= [\tau(a')]_{a'\in \Ax(a)(x)}$.
\item A derivation $\Pi$ is quantitative if any of its $\ast$-representatives is (in that case, all of them are quantitative).
\end{itemize}
\end{defi}

In the next subsection, we show that a derivation $\Pi$ from system
$\mathscr{M}$ can have both quantitative and unquantitative
representatives in the rigid framework. It once again shows that rigid
constructions allow a more fine-grained control than system
$\mathscr{M}$ does on derivations.

\subsection{Representatives and Dynamics}

A rigid derivation $P$ (with the usual notations $C,~ t,~ T$) 
\textbf{represents} a derivation $\Pi$ if the $\ast$-derivation $P_*$
defined by $\supp (P_*)=\supp (P)$ and $P_*(a)=\overline{C(a)}\vdash
t|_{\overline{a}}:\,\overline{T(a)}$, is a representative of $\Pi$.
We write $P_1 \eqm P_2$ when $P_1$ and $P_2$ both represent the same
derivation $\Pi$. 

\begin{prop}
If a rigid derivation $P$ is quantitative, then the derivation 
$\ovl{P}$ (in system $\mathscr{M}$) is quantitative.
\end{prop}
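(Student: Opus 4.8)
The plan is to unfold the definition of quantitativity for $\mathscr{M}$-derivations and reduce the statement to a single commutation property of the collapse map $\ovl{\cdot}\colon \FTypes \to \mathscr{M}(\Types)$. Recall that $\ovl{P}$ is, by definition, the $\mathscr{M}$-derivation admitting as a $\ast$-representative the $\ast$-derivation $P_*$ with $\supp(P_*)=\supp(P)=:A$ and $P_*(a) = \ovl{C(a)}\vdash t\rstr{\ovl{a}}:\ovl{T(a)}$. Since an $\mathscr{M}$-derivation is quantitative as soon as one of its $\ast$-representatives is, it suffices to prove that $P_*$ is quantitative, i.e.\ that for every $a\in A$ and every $x\in\mathscr{V}$ one has $\ovl{C(a)}(x) = [\,\ovl{T(a')}\,]_{a'\in\Ax(a)(x)}$. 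Here I would first record the elementary but essential remark that the set $\Ax(a)(x)$ is computed identically in $P$ and in $P_*$: it depends only on the term $t$ and on the common support $A$ (the positions $a'\geqslant a$ with $t(\ovl{a'})=x$, when $x$ is free at $\ovl{a}$), not on whether the carried types are rigid or multiset.

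Next I would invoke the quantitativity of the rigid derivation $P$, which gives $C(a)(x) = \bigcup_{a'\in\Ax(a)(x)}\tr(a')\cdot T(a')$. Since the left-hand side is a genuine sequence type, this union is the join of a \emph{disjoint} family of singleton sequence types, which is to say the axiom tracks $\tr(a')$ for $a'\in\Ax(a)(x)$ are pairwise distinct. The heart of the argument is then the observation that $\ovl{\cdot}$ sends a disjoint join of sequence types to a countable sum of multiset types: for any disjoint family $(F^i)_{i\in I}$, $\ovl{\bigcup_{i\in I}F^i} = \sum_{i\in I}\ovl{F^i}$. This is immediate from the definition $\ovl{F}=[\,\ovl{F_{|k}}\,]_{k\in\Rt(F)}$ together with the facts $\Rt(\bigcup_iF^i)=\bigsqcup_i\Rt(F^i)$ and $(\bigcup_iF^i)_{|k}=F^i_{|k}$, and from the way the countable sum on $\mathscr{M}(\Types)$ was set up (via a bijection of $\coprod_{\mathbb{N}}(\mathbb{N}-\{0,1\})$ onto $\mathbb{N}-\{0,1\}$ preserving $\equiv$-classes): both operations simply discard the track labelling and keep the multiset of subtree classes.

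Applying this with $F^{a'}=\tr(a')\cdot T(a')$, and using that $\ovl{\tr(a')\cdot T(a')} = [\,\ovl{T(a')}\,]$ is a singleton multiset type, I get
\[
\ovl{C(a)}(x) \;=\; \sum_{a'\in\Ax(a)(x)}\ovl{\tr(a')\cdot T(a')} \;=\; \sum_{a'\in\Ax(a)(x)} [\,\ovl{T(a')}\,] \;=\; [\,\ovl{T(a')}\,]_{a'\in\Ax(a)(x)},
\]
which is exactly the condition for $P_*$ to be quantitative; hence $\ovl{P}$ is quantitative. I do not anticipate a genuine obstacle: the statement is essentially a coherence check that collapsing from the rigid world to the multiset world preserves the invariant ``every context type is produced by an axiom rule.'' The only point that demands a little care is the bookkeeping around countable sums of multiset types, namely verifying that the relabelling bijection $j$ used to define $\sum$ does not disturb $\equiv$-equivalence classes — but this is precisely how that sum was defined, so it reduces to unwinding the definition.
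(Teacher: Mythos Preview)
Your argument is correct. The paper states this proposition without proof, treating it as an immediate observation, so there is no ``paper's own proof'' to compare against; your unfolding of the definitions is exactly the intended verification. One minor simplification: you do not really need to route through the countable-sum machinery. Since the axiom tracks $\tr(a')$ are pairwise distinct, $a'\mapsto\tr(a')$ is a bijection from $\Ax(a)(x)$ onto $\Rt(C(a)(x))$, and the definition $\ovl{F}=[\,\ovl{F_{|k}}\,]_{k\in\Rt(F)}$ together with $C(a)(x)_{|\tr(a')}=T(a')$ gives $\ovl{C(a)(x)}=[\,\ovl{T(a')}\,]_{a'\in\Ax(a)(x)}$ directly by reindexing.
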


Proposition \ref{propCharQTofNF} makes easy to prove that:

\begin{prop}
If $\Pi$ is a quantitative derivation typing a normal form, then, there is 
a quantitative rigid derivation $P$ s.t. $\overline{P}=\Pi$.
\end{prop}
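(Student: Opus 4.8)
The plan is to exhibit the desired $P$ as an output of the \emph{trivial construction} of \S~\ref{subsecSpCons}. First I would fix a quantitative $\ast$-representative $P_*$ of $\Pi$ and set $A := \supp(P_*)$; for $a \in A$ write $\Gamma(a) \vdash t\rstr{\ovl{a}} : \tau(a)$ for the judgment $P_*(a)$. Since $P_*$ types the normal form $t$, and since $\prec$ enjoys the ``applicative priority'' property (if $a \prec a'$ and $a'$ lies in the support of a derivation typing $t$, then so does $a$), the set $A$ satisfies $\ovl{A} \subset \supp(t)$ and is downward-closed for $\prec$, i.e.\ $A$ is a d-support of $t$. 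Moreover, because $\Types_{\mathscr M}$ is by definition the quotient $\Types/{\equiv}$, for each full position $a \in \mathring{A}$ I can choose a rigid type $T(a) \in \Types$ with $\overline{T(a)} = \tau(a)$.

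Next I would feed the pair $(A, T)$ to the trivial construction. The Proposition of \S~\ref{subsecSpCons} guarantees that it outputs a genuine rigid derivation $P$ with $\supp(P) = A$ typing $t$, and $P$ is quantitative by design: all its contexts are rebuilt from the axiom leaves lying above, using the injection $\code{\cdot}$ so that no track conflict occurs, and since at a full position $a$ the formula $S(a) = T(a)$ contains no auxiliary variable $X_{a'}$, the type carried by $P$ there is exactly the chosen $T(a)$. By the Proposition of \S~\ref{subsecUnquant}, $\overline{P}$ is then a \emph{quantitative} $\mathscr M$-derivation of $t$.

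It remains to check that $\overline{P} = \Pi$; equivalently, that the $\ast$-derivation obtained by collapsing $P$ is $\ast$-isomorphic to $P_*$. For this I would use the version of Proposition~\ref{propCharQTofNF} for system $\mathscr M$ (whose proof is verbatim the rigid one, the $\mathscr M$-rules being the collapse of the rigid ones): a quantitative $\mathscr M$-derivation of a normal form is determined by its support together with the restriction of its type-labelling to full positions — its contexts are forced by quantitativity and its types at abstraction and partial positions are forced by the arrow structure of the $\abs$- and $\app$-rules. Since the collapse of $P$ and $P_*$ both have support $A$, and at each full position $a$ the former carries the type $\overline{T(a)} = \tau(a)$ carried by the latter, they coincide; hence $\overline{P} = \Pi$.

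The main obstacle is establishing that last characterization, i.e.\ transporting the argument of \S~\ref{subsecSpCons} to system $\mathscr M$: one must verify that replacing every sequence type $(T_k)_{k \in K}$ by the multiset $[\,\overline{T_k}\,]_{k \in K}$ in the defining formulas ($E(a)(x)$, $F_k(a)$, $S(a)$ and the coinductive equation for $T(a)$) produces exactly the judgments mandated by the $\mathscr M$-rules and by quantitativity. This is routine but mildly tedious coinductive bookkeeping, and, as the paper indicates, it is made straightforward by Proposition~\ref{propCharQTofNF} itself. Everything else in the proof is immediate.
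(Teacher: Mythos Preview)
Your proposal is correct and follows essentially the same approach as the paper: take a $\ast$-representative of $\Pi$, extract its support $A$, choose rigid representatives $T(a)$ of the $\tau(a)$ at full positions, apply the trivial construction of \S~\ref{subsecSpCons}, and then verify that the output represents $\Pi$. The paper is terser on the last step (it just writes ``we show that, for all $a\in A$, $T(a)$ represents $\tau(a)$''), while you phrase it as transporting Proposition~\ref{propCharQTofNF} to system $\mathscr{M}$; these amount to the same coinductive bookkeeping.
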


\begin{proof}
Let $P(\ast)$ be a $\ast$-derivation representing $\Pi$. We set $A=\supp(P(\ast))$ and for all full position $a\in A$, we choose a representative $T(\mathring{a})$ of $\tau(a)$. We apply then the special construction, which yields a rigid derivation $P$ such that $P_*=P(\ast)$ (we show that, for all $a\in A$, $T(a)$ represents $\tau(a)$).
  \end{proof}

We can actually prove that every quantitative derivation can be 
represented with a quantitative rigid derivation and that we can endow 
it with every possible infinitary reduction choice (\cite{vialXX}). 
However, a quantitative derivation can also have an unquantitative rigid 
representative (see below $\Pi'$ and $\tilde{P}'$).

Actually, whereas $P_1\equiv P_2$ (Subsection \ref{subsecDefRD}) means
that $P_1$ and $P_2$ are isomorphic in every possible way, $P_1 \eqm P_2$ is far weaker: we explicit in this subsection 
big differences in the dynamical behaviour
between two rigid representatives of the derivations $\Pi$ and $\Pi'$
of Subsection \ref{subsecExamples}.

We omit the right side of axiom rules, \textit{e.g.}
$f:\,((2\cdot \alpha)\rightarrow \alpha)_2 $ stands for
$f:\,((2\cdot \alpha)\rightarrow \alpha)_2 \vdash
f:\,(2\cdot \alpha)\rightarrow \alpha$.\\

\noindent \textbullet~ Let $P_k$ ($k\geqslant 2$) and $P$ be the following 
rigid derivations:
{\footnotesize
\begin{center}
$P_k=\hspace{0.7cm}$\begin{prooftree}
\Infer{0}{f:\,((2\cdot \alpha)\rightarrow \alpha)_k\vdash 
\trck{1} }
\Infer{0}{x:\,(\gamma)_2 \vdash x:\,\gamma \trck{1}}
\Infer{0}{x:(\gamma)_i\vdash x:\,\gamma \trck{i-1}}
\Delims{ \left( }{ \right)_{ i \geqslant 4 } }
\Infer{2}{x:(\gamma)_{i\geqslant 2} \vdash xx:\,\alpha \trck{2}}
\Infer{2}{f:((2\cdot\alpha)\rightarrow \alpha)_k \vdash f(xx):\,\alpha\trck{0}}
\Infer{1}{f:((2:\alpha)\rightarrow \alpha)_k \vdash \Delta_f:\,\gamma}
\end{prooftree}\\
\end{center}

\begin{center}
$P=\hspace{0.7cm}$\begin{prooftree}
\Hypo{P_2 \trck{1}}
\Hypo{P_k \trck{k-1} }
\Delims{ \left( }{ \right)_{ k \geqslant 3 } }
\Infer{2}{f:((2:\alpha)\rightarrow \alpha)_{k\geqslant 2} \vdash
\Delta_f \Delta_f}
\end{prooftree}\\
\end{center}
} 

\noindent \textbullet ~ Let $\tilde{P}_k$ ($k\geqslant 2$) and $\tilde{P}$ be the following rigid derivations:

{\footnotesize
\begin{center}
$\tilde{P}_k=\hspace{0.7cm}$\begin{prooftree}
\Infer{0}{f:((2\cdot \alpha)\rightarrow \alpha)_k  \trck{1}}
\Infer{0}{x:(\gamma)_3 \vdash x:\,\gamma \trck{1}}
\Infer{0}{x:(\gamma)_2 \vdash x:\,\gamma \trck{2}}
\Infer{0}{x:(\gamma)_i \vdash x:\,\gamma \trck{i-1}}
\Delims{ \left( }{ \right)_{ i \geqslant 4 } }
\Infer{3}{x:(\gamma)_{i\geqslant 2} \vdash xx:\,\alpha \trck{2}}
\Infer{2}{f:((2\cdot\alpha)\rightarrow \alpha)_k \vdash f(xx):\,\alpha \trck{0}}
\Infer{1}{f:((2\cdot \alpha)\rightarrow \alpha)_k \vdash \Delta_f:\,\gamma}
\end{prooftree}\\
\end{center}

\begin{center}
$\tilde{P}=\hspace{0.7cm}$\begin{prooftree}
\Hypo{\tilde{P}_2 \trck{1}}
\Hypo{\tilde{P}_k \trck{k-1} }
\Delims{ \left( }{ \right)_{ k \geqslant 3 } }
\Infer{2}{f:((2\cdot \alpha)\rightarrow \alpha)_{k\geqslant 2} \vdash
\Delta_f \Delta_f}
\end{prooftree}\\
\end{center}}

\noindent \textbullet ~ The rigid derivations $P$ and $\tilde{P}$ both represent
$\Pi$. Morally, subject reduction in $P$ will consist in taking the
first argument $P_3$, placing it on the first occurrence of $x$ in
$f(xx)$ (in $P_2$) and putting the other $P_k$ ($k\geqslant 4$) in the
different axiom rules typing the second occurrence of $x$ in the same
order. There is a simple decrease on the track number and we can
go this way towards $f^\omega$.

The rigid derivation $\tilde{P}$ process the same way, except it will always skip $\tilde{P}_3$ ($\tilde{P}_3$) will stay on track 2). Morally, we 
perform subject reduction "by-hand" while avoiding to ever place $P_3$ in head position.

The definitions of section \ref{secRed} show that infinitary reductions performed in $P$ and $\tilde{P}$ yield respectively to $P'$ and $\tilde{P}'$ below.

\begin{center}
$P'=$
{\footnotesize
\begin{prooftree} 
\Infer{0}[ax]{
f:\,((2\cdot \alpha)\rightarrow \alpha)_2
 \trck{1}}

\Infer{0}[ax]{
f:\,((2\cdot \alpha)\rightarrow \alpha)_3  \trck{1}}

\Hypo{P'}
\Ellipsis{}{f:\,((2\cdot \alpha)\rightarrow \alpha)_{k\geqslant 4}\vdash
f^\omega:\, \alpha \trck{2}}

\Infer{2}[app]{ f:\,((2\cdot \alpha)\rightarrow \alpha)_{k\geqslant 3}\vdash
f^\omega:\, \alpha \trck{2} }

\Infer{2}[app]{ f:\,((2\cdot \alpha)\rightarrow \alpha)_{k\geqslant 2}\vdash
f^\omega:\, \alpha}
\end{prooftree}
}
\end{center}

\begin{center} {\footnotesize
$\tilde{P}'=$
\begin{prooftree}
\Infer{0}[ax]{ f:\,((2\cdot \alpha)\rightarrow \alpha)_2
 \trck{1}}

\Infer{0}[ax]{f:\,((2\cdot \alpha)\rightarrow \alpha)_4
\trck{1}}

\Hypo{\tilde{P}'}
\Ellipsis{}{f:\,((2\cdot \alpha)\rightarrow \alpha)_{k=3 \vee 
k\geqslant 5}\vdash f^\omega:\, \alpha \trck{2}}

\Infer{2}[app]{ f:\,((2\cdot \alpha)\rightarrow \alpha)_{k\geqslant 4}\vdash
f^\omega:\, \alpha \trck{2} }

\Infer{2}[app]{ f:\,((2\cdot \alpha)\rightarrow \alpha)_{k\geqslant 2}\vdash
\fom:\, \alpha}
\end{prooftree}}
\end{center}

Thus, $P'$ and $\tilde{P}'$ both represent $\Pi'$ (from subsec.
\ref{subsecExamples}), but $P'$ is quantitative whereas $\tilde{P}'$ is
not (the track $3$ w.r.t. $f$ does not end in an axiom leaf). Thus,
quantitativity is not stable under s.c.r.s. 

Moreover, it is easy to check that $P$ and $P'$ approximable (reuse the 
finite derivations of Subsection \ref{subsecExamples}). Thus, $\Pi$ and 
$\Pi'$ have both approximable and not approximable approximations. It 
provides a new argument for the impossibility of formulating 
approximability in system $\mathscr{M}$.

\end{document}